\newcommand{\diag}{\mathrm{diag}\,}
\newcommand{\R}{\mathbb{R}}
\theoremstyle{definition} 
\theoremstyle{definition} 
\theoremstyle{definition} \newtheorem{prop}{Proposition}[section]
\theoremstyle{definition} \newtheorem{eg}{Example}[section]
\theoremstyle{remark} 
\theoremstyle{definition} \newtheorem{lem}{Lemma}[section]
\theoremstyle{definition} \newtheorem{rem}{Remark}[section]
\theoremstyle{definition} \newtheorem{theo}{Theorem}[section]
\theoremstyle{definition} 
\theoremstyle{definition} 
\theoremstyle{definition} 
\theoremstyle{remark} 
\begin{document}

\title{Training nonlinear optical neural networks with Scattering Backpropagation}

    \author{Nicola Dal Cin}
    \email{nicola.dalcin@mpl.mpg.de}
    \address{Max Planck Institute for the Science of Light, Staudtstraße 2, 91058 Erlangen, Germany}
     \address{Department of Physics, University of Erlangen-Nuremberg, 91058 Erlangen, Germany}
     
    \author{Florian Marquardt}
    \address{Max Planck Institute for the Science of Light, Staudtstraße 2, 91058 Erlangen, Germany}
    \address{Department of Physics, University of Erlangen-Nuremberg, 91058 Erlangen, Germany}

    \author{Clara C. Wanjura}
    \address{Max Planck Institute for the Science of Light, Staudtstraße 2, 91058 Erlangen, Germany}
	
\date{\today}

    \keywords{neural networks, neuromorphic computing, physical neural networks, training, nonlinear dynamics}
    
\begin{abstract}
As deep learning applications continue to deploy increasingly large artificial neural networks, the associated high energy demands are creating a need for alternative neuromorphic approaches. Optics and photonics are particularly compelling platforms as they offer high speeds and energy efficiency. Neuromorphic systems based on nonlinear optics promise high expressivity with a minimal number of parameters. However, so far, there is no efficient and generic physics-based training method allowing us to extract gradients for the most general class of nonlinear optical systems. In this work, we present Scattering Backpropagation, an efficient method for experimentally measuring approximated gradients for nonlinear optical neural networks. Remarkably, our approach does not require a mathematical model of the physical nonlinearity, and only involves two scattering experiments to extract all gradient approximations. The estimation precision depends on the deviation from reciprocity. 
We successfully apply our method to well-known benchmarks such as XOR and MNIST.
Scattering Backpropagation is widely applicable to existing state-of-the-art, scalable platforms, such as optics, microwave, and also extends to other physical platforms such as electrical circuits.
\end{abstract}

\maketitle

\section{Introduction}

\begin{figure}[h!]
    \centering
    \includegraphics[width=\linewidth]{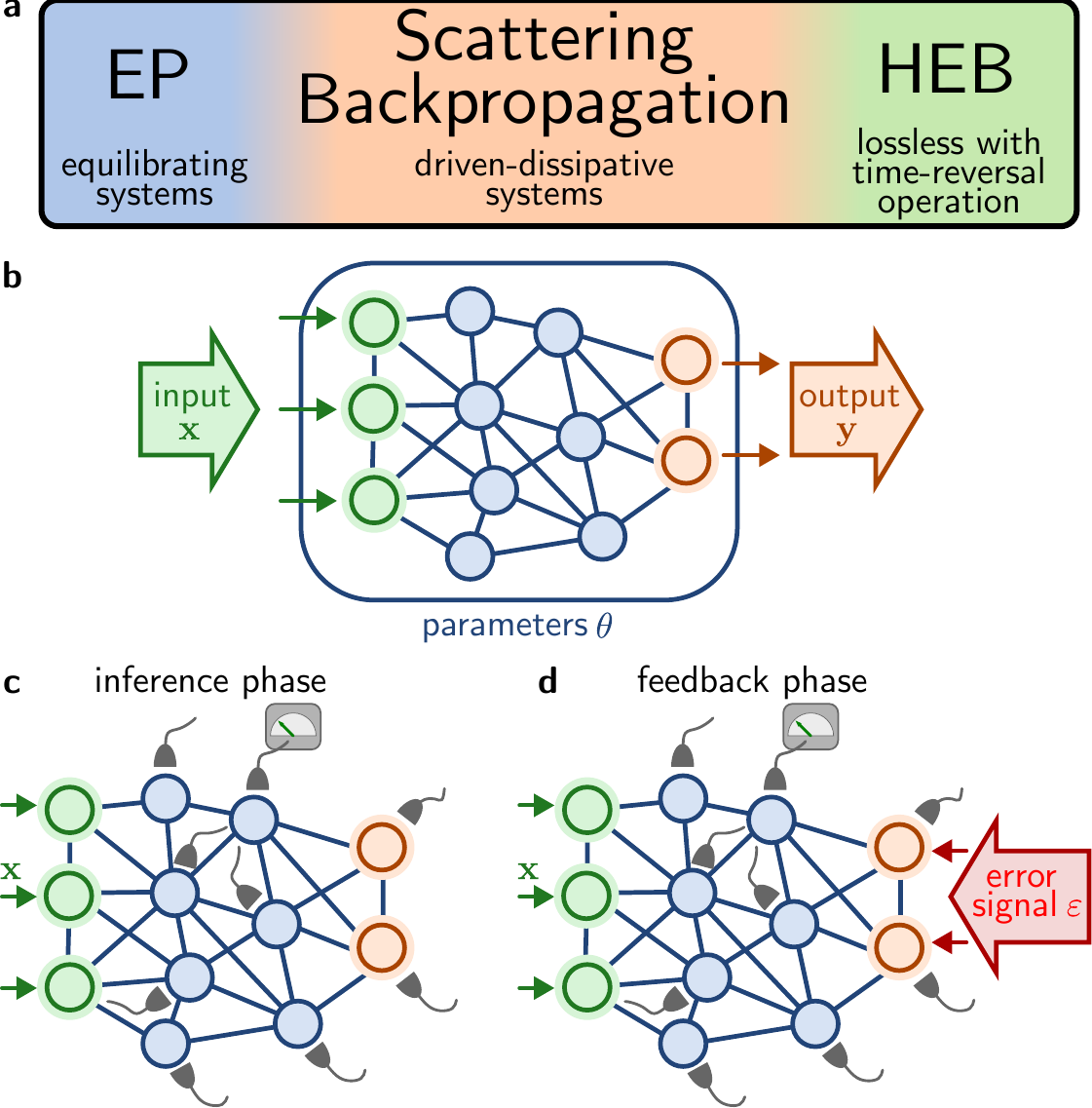}
    \caption{\textbf{Training of a nonlinear, neuromorphic network of resonators.}
    \textbf{a)}~Overview of generic, neuromorphic training methods.
    \textbf{b)}~Schematic representation of an optical neuromorphic system described by the dynamical equations  \eqref{eqn:nonl-system}, in which each mode $a_j$ corresponds to a node in the network (e.g. modes in coupled optical resonators). The input $\mathbf{x}$ is encoded in the input light fields incident on a suitable subset of resonators, while the output $\mathbf{y}$ is readout as the light exiting from another subset of resonators.
    Nonlinear data processing is enabled by a nonlinear physical process; e.g., the resonators could be Kerr-resonators, or the coupling between resonator modes could be via cross-Kerr coupling. Tunable, physical parameters such as resonator frequencies and the couplings between them, collected in the set $\theta$, are adjusted during training.
    We propose an efficient physics-based method for training the resonator network which involves two steps.
    \textbf{c)}~In the inference phase, we inject the input $\mathbf{x}$ and measure the output light field at each resonator connected to a trainable parameter. This can be performed simultaneously for all resonators. \textbf{d)}~In the feedback phase, we inject the error signal, computed from the cost function as a measure how far the output deviates from the target output, and again measure the output at all resonators connected to trainable parameters.
    The necessary gradients for the parameter update are computed from the output signals in the two phases, Eqs.~\eqref{eq:gradientFreuquencies} and \eqref{eq:gradientCouplings}.
    }
    \label{fig:setup}
\end{figure}%
The growing energy demand of machine learning and artificial intelligence has created a need for alternative approaches in the form of neuromorphic hardware~\cite{markovic2020physics}, relying on analogue physical neural networks promising high computation speeds at low energy consumption.
There is a variety of suitable neuromorphic computing platforms~\cite{wetzstein2020inference,shastri2021photonics,grollier2020neuromorphic,schneider2022supermind} 
promising efficient computation
of which optical systems~\cite{wagner1987multilayer,shastri2021photonics,wetzstein2020inference} are particularly appealing as they promise highly parallel linear computations, which is one important aspect of a neural network
operating at high speeds and bandwidth.
In general, neural networks perform nonlinear computations on the data. For optical systems, this can be achieved either using nonlinear optical elements~\cite{Zuo2019,Basani2024All}, through hybrid approaches involving an optoelectronic conversion step~\cite{hughes2018training,hamerly2019large,pai2023experimentally,chen2023deep},
or with a purely linear optical scattering system by encoding the input data in the system parameters~\cite{Wanjura2024Fully,Yildirim2024Nonlinear,Xia2024Nonlinear}.

Even though we have a plurality of possible neuromorphic platforms to choose from, \emph{training} them is still very much an open challenge for many setups~\cite{momeni2024training}.
While digital neural networks are trained with the backpropagation algorithm~\cite{rumelhart1986learning}, we have to obtain the gradients needed for training physical neural networks by some other means.
In-silico training is often unsuccessful since discrepancies between the simulated model and the real physical system lead to an accumulation of errors during the simulation. Physics-aware training~\cite{wright2022deep} only simulates the backwards pass but utilises the physical forward pass and has been shown to perform more reliably, but still requires a faithful digital model.
The conceptually simplest approach which can be applied to any physical neural network is the parameter shift method~\cite{Filipovich2022Silicon,bandyopadhyay2024single}. Here, parameters are shifted one by one to approximate the gradients. However, this scales unfavourably with network size~\cite{bartunov2018assessing}.
In a pioneering work~\cite{wagner1987multilayer}, Psaltis et al. formulated the first physics-based training method which extracts the gradients needed for training in an optical network based on volume holograms~\cite{psaltis1990holography}. The nonlinear optical element has to be carefully designed so that the transmittance in the backward direction realises the gradient of the transmittance in the forward direction.
Even though, this scheme was later implemented to some degree~\cite{li1993optical,wagner1993optical,skinner1995neural}, the strict requirements and practical limitations of this scheme imply that it cannot be applied to generic optical and nonlinear systems.
Other training approaches only perform backpropagation on the linear components~\cite{hughes2018training,pai2023experimentally} or are tailored to specific nonlinearities~\cite{guo2021backpropagation,spall2023training}.
While a few attempts have been made to train gradient-free~\cite{Momeni2023Backpropagation}, 
still the target of most approaches is to extract the gradients.

There are only two generic physics-based training methods applying to different limiting cases, Fig.~\ref{fig:setup}~\textbf{a}. Equilibrium Propagation~\cite{scellier2017equilibrium,martin2021eqspike,stern2021supervised} applies to equilibrating systems whose dynamics are determined by an energy function, while Hamiltonian Echo Backpropagation~\cite{lopez2021self} can be deployed for the training of lossless systems with a time-reversal operation, such as phase conjugation in optical systems.

A large class of nonlinear systems is actually not covered by these training methods since those systems are both out of equilibrium and have losses. In fact, for the default neuromorphic system in optics, namely a nonlinear, driven-dissipative optical system, there is no efficient physics-based training method, yet.
The lack of efficient physics-based training strategies for nonlinear optical systems with dissipation is one of the reasons why these systems have so far not been realised experimentally at scale.

In this work, we fill this gap and develop a physics-based training method for this large class of nonlinear optical or photonic systems. Our method approximates the needed gradients by comparing only two scattering response experiments, ensuring efficient gradient extraction independent of the number of training parameters. As one significant advantage of our method, it can be applied to any nonlinear optical system and does not require a faithful model for the physical nonlinearity. The only requirements are (i)~that the trainable parameters enter in the linear part of the physical system, (ii) a stable steady state exists, and (iii)~the forward and backward scattering response through the system are approximately the same (reciprocity). 
We demonstrate successful training on typical nonlinear benchmark tasks such as XOR and MNIST for simulated systems of coupled resonators with Kerr- or cross-Kerr nonlinearity.
Beyond optical systems, our method applies more generally to a large class of nonlinear physical systems, e.g. to many port-Hamiltonian systems.

Our work opens the door to efficient neuromorphic computing with optical and photonic systems for which previously no efficient training method existed.

\section{Setup}
We consider a general, driven, nonlinear optical system of $N$ coupled modes $a_j$, e.g. hosted in optical resonators, as shown in Fig.~\ref{fig:setup}~\textbf{b}. 
The system's time evolution can be described in terms of dynamical equations, which, collecting the modes $a_j$ into a vector $\bm{a}=(a_1,\dots,a_N)^\mathsf{T}$, take the form
\begin{equation}
\label{eqn:nonl-system}
    \dot{\bm{a}}(t) = - i H(\theta) \bm{a}(t) - i g \bm\varphi(\bm{a}(t)) - \sqrt{\kappa}\, \bm{a}_\mathrm{in}(\mathbf{x}).
\end{equation}
The first term on the right-hand side of Eq.~\eqref{eqn:nonl-system} describes the system evolution due to linear physical interactions, where we also encode the training parameters $\theta$.
To be specific, the dynamic matrix $H$ can encode the mode frequency detunings $\Delta_j$, the decay rates due to intrinsic losses $\kappa_j'$ and external ones $\kappa_j$ due to the coupling with an external bath (e.g. any potential probe waveguide), $H_{j,j}=-i(\kappa_j+\kappa'_j)/2+\Delta_j$, as well as the real symmetric couplings between modes, $H_{j,\ell}=J_{j,\ell}$ for $j \neq \ell$.
The second term in Eq.~\eqref{eqn:nonl-system} denotes a generic physical nonlinearity $\bm\varphi$ whose strength is controlled by the factor $g$. Notably, we do not need to know the specific type of nonlinearity to apply our training method. For concreteness, we will later consider a network of coupled resonators with self-Kerr and cross-Kerr nonlinearities as examples. With $\bm{a}_\mathrm{in}=(a_{\mathrm{in},1},\dots,a_{\mathrm{in},N})^\mathsf{T}$ in Eq.~\eqref{eqn:nonl-system} we indicate the vector of probe fields $a_{\mathrm{in},j}$ injected at the $j$th mode. $\kappa$ denotes a diagonal matrix of all the decay rates $\kappa_j$.

Perhaps the most widespread and obvious technique to encode the input in an optics-based physical neural network is via the amplitudes of the light. Accordingly,
we encode $\mathrm{x}_j$ into a suitable set of input modes $a_{\mathrm{in},j} / \sqrt{\bar{{\kappa}}}$ with $j \in \mathcal{I}_\mathrm{in}$ and a suitable reference rate $\bar\kappa$, e.g., the average loss rate.
The resulting system response is encoded in the output fields $a_{\mathrm{out},j}$ connected to the input fields $a_{\mathrm{in},j}$ and the fields $a_j$ within each mode according to the input-output relations $\bm{a}_\mathrm{out}(t) = \bm{a}_\mathrm{in} + \sqrt{\kappa} \, \bm{a}(t)$~\cite{Gardiner1985Input,Clerk2010Introduction}.
The neuromorphic system's output $\mathbf{y}$ is then given by the output field $a_{\mathrm{out},j}/\sqrt{\bar\kappa}$ at a suitable set of modes $j \in \mathcal{I}_\mathrm{out}$.

During supervised training, we minimize a cost function $C(\mathbf{y}, \mathbf{y}_\mathrm{target})$ which quantifies the deviation between the network output $\mathbf{y}$ and the target output $\mathbf{y}_\mathrm{target}$ for a given input $\mathbf{x}$.
At each training step, we update the trainable physical parameters $\theta$ (e.g. detunings $\Delta_j$ and couplings $J_{j,\ell}$), according to
\begin{equation}
\label{eqn:weight-update}
    \theta \leftarrow \theta - \eta \frac{\partial C}{\partial \theta}(\mathbf{y}, \mathbf{y}_\mathrm{target}), \quad \eta >0.
\end{equation}
In the following, we introduce Scattering Backpropagation, an efficient method for physically extracting an approximation of the gradient $\frac{\partial C}{\partial \theta}(\mathbf{y}, \mathbf{y}_\mathrm{target})$ with a minimal number of scattering experiments.

\section{Outline of scattering backpropagation}
\label{sec:extracting_gradients}

Before providing a more general formulation of our training method, we summarize the key steps for applying Scattering Backpropagation to nonlinear optical systems. Mathematical details are provided in Methods and Supplementary Information (SI).
Our proposed gradient extraction procedure consists of two phases. (i)~In the \textit{inference phase}, Fig.~\ref{fig:setup}~\textbf{c}, a probe signal $\bm{a}_\mathrm{in}$, encoding the network input $\mathbf{x}$, is injected into the system and, after it reaches a steady state $\bar{\bm{a}},$ we measure the response field $\bm{a}_\mathrm{out}$ at every mode. 
In particular, we use the measured output $\mathbf{y}$ to compute
the loss $C(\mathbf{y},\mathbf{y}_\mathrm{target})$.
(ii)~In the \textit{feedback phase}, Fig.~\ref{fig:setup}~\textbf{d}, 
we compute the \textit{error signal} $\frac{\partial C}{\partial \mathbf{y}}$, feed it back to the system as detailed below
and observe the system's reaction which informs us about the gradients $\partial_\theta C$.
Concretely, we add the error signal as a small perturbation 
$\delta \bm{a}_\mathrm{in}$ to a probe field $\bm{a}_\mathrm{in}$ incident on the output nodes (while keeping the input field on the input nodes fixed) which brings the output of the system closer to the target $\mathbf{y}_\mathrm{target}$.
Specifically,
\begin{equation}
\label{eq:error_signal}
    \delta \bm{a}_\mathrm{in} \coloneqq -i \beta \frac{\partial C}{\partial \bm{a}_\mathrm{out}}(\mathbf{y},\mathbf{y}_\mathrm{target}),
\end{equation}
in which $\partial$ denotes the Wirtinger derivative \cite{remmert1991theory}. Note that the non-zero entries of $\frac{\partial C}{\partial \bm{a}_\mathrm{out}}(\mathbf{y},\mathbf{y}_\mathrm{target})$ correspond to the error signal $\frac{\partial C}{\partial \mathbf{y}}$, and $\beta$ is in units of a loss rate.
Adapting to this new input, the network evolves into a new but close steady state $\bar{\bm{a}} + \delta \bar{\bm{a}}$, producing a new output field $\bm{a}_\mathrm{out} + \delta \bm{a}_\mathrm{out}$ which is again measured at every node. Here, $\bm{a}_\mathrm{out}$ was the response we recorded in the inference phase. We call the additional signal $\delta \bm{a}_\mathrm{out}$ the \emph{learning response}.

Given the response in the free phase at every node $\bm a_\mathrm{out}$ and the learning response $\delta \bm{a}_\mathrm{out}$, we can now compute the gradients w.r.t. the training parameters.
Concretely, for the gradients w.r.t. detunings $\Delta_j$ and couplings $J_{j,\ell}$, we obtain
\begin{equation}\label{eq:gradientFreuquencies}
   \frac{\partial C}{\partial \Delta_j} \approx  -\frac{2}{\kappa_j} \mathfrak{Re}\left[ (\bm a_{\mathrm{out},j}-\bm a_{\mathrm{in},j})\frac{\delta \bm a_{\mathrm{out},j}- \delta \bm a_{\mathrm{in},j}}{\beta} \right],
\end{equation}
and
\begin{align}\label{eq:gradientCouplings}
\notag
     \frac{\partial C}{\partial J_{j,\ell}} \approx - \frac{2}{\sqrt{\kappa_j \kappa_\ell}}& \mathfrak{Re}\biggl[ (\bm a_{\mathrm{out},\ell}-\bm a_{\mathrm{in},\ell})\frac{\delta \bm a_{\mathrm{out},j}- \delta \bm a_{\mathrm{in},j}}{\beta}\\
     &+ (\bm a_{\mathrm{out},j}-\bm a_{\mathrm{in},j})\frac{\delta \bm a_{\mathrm{out},\ell}- \delta \bm a_{\mathrm{in},\ell}}{\beta} \biggr].
\end{align}
The level of the approximation above depends on the deviation of the system response from reciprocity (next section and Methods), i.e. the breaking of a symmetry of the scattering matrix.
Remarkably, since we can measure all the fields $\bm{a}_{\mathrm{out},j}$, $\delta \bm{a}_{\mathrm{out},j}$ in parallel, we can measure all gradients with only two measurements.
Furthermore, we see that the gradients in  Eqs.~\eqref{eq:gradientFreuquencies} and \eqref{eq:gradientCouplings} only depend on locally measured quantities which do not require full knowledge of the system. In particular, it is not necessary to know the type of nonlinearity. The only requirement is that all tunable parameters enter in the linear contribution in Eq.~\eqref{eqn:nonl-system}. Hence, Scattering Backpropagation can be applied to so called \emph{grey box} systems, i.e., systems for which one part is known while other parts can be unknown.

\section{General formulation of the method}
\label{sec:approx_reciprocity}

\subsection{Dynamical systems with input-output relations}
The training method outlined above for optical systems is in fact more general and can be formulated for a large class of parametrized, driven, autonomous dynamical systems
\begin{equation}
\label{eq:dyn_sys}
\dot{\bm{\xi}}=\bm{F}_\theta(\bm{\xi})-\sqrt{\kappa} \, \bm{\xi}_\mathrm{in},
\end{equation}
which also includes our dynamical equations  \eqref{eqn:nonl-system} for $\bm{\xi}=(\bm{a},\bm{a}^*)$, and $\bm{\xi}_\mathrm{in}=(\bm{a}_\mathrm{in},\bm{a}^*_\mathrm{in})$, and more generally some much studied ``port Hamiltonian" systems \cite{van2014port}.
Any system described by equations of the form~\eqref{eq:dyn_sys} together with linear input-output relations, e.g. $\bm\xi_\mathrm{out}=\bm\xi_\mathrm{in}+\sqrt{\kappa} \, \bm{\xi}$, can be trained with our method (SI).

For this class of systems $\partial_\theta C(\mathbf{y},\mathbf{y}_\mathrm{target})$ depends on the \emph{linearized scattering matrix} $S_\theta(\bar{\bm\xi})$ (Methods), which is defined via the Green's function at the steady state, and determines the linear response of the system to a small input perturbation $\delta \bm{\xi}_\mathrm{in}$ on top of the original signal $\bm{\xi}_\mathrm{in}$:
\begin{equation}
\label{eq:in-out-relations}
 \delta \bm\xi_\mathrm{out} = S_\theta(\bar{\bm\xi}) \, \delta \bm\xi_\mathrm{in} + \mathcal{O}(\delta \bm\xi_\mathrm{in}^2).
\end{equation}
While, in principle, Eq.~\eqref{eq:in-out-relations} lets us physically extract all gradients, this would require a number of measurements scaling with system size. To formulate an efficient extraction method, we approximate the gradients by utilizing some (approximate) symmetry of the system, which
in optical systems is given by reciprocity. In general,
\begin{align}
  \label{eqn:quasi-symmetry-1}  S_\theta(\bar{\bm\xi})^\dagger = U S_\theta(\bar{\bm\xi})U^{-1}+ \mathcal{O}(g),
\end{align}
where $U$ is an invertible matrix denoting the symmetry.
For efficient training, $U$ should be a local transformation.
In optical systems,  $g$ is the nonlinearity strength.
From Eq.~\eqref{eqn:quasi-symmetry-1} and the error signal
\begin{equation}
    \delta \bm{\xi}_\mathrm{in}\coloneqq \beta U^{-1}\frac{\partial C}{\partial \bm\xi^*_\mathrm{out}}(\mathbf{y},\mathbf{y}_\mathrm{target}),
\end{equation}
we obtain the gradient approximation (Methods)
\begin{align}
\label{eq:general_gradient_formula}
    \frac{\partial C}{\partial \theta}
        &= - \left(\frac{\partial \bm F_\theta}{\partial  \theta }(\bar{\bm\xi})\right)^\dagger \sqrt{\kappa^{-1}} U \; \frac{\delta \bm \xi_\mathrm{out} - \delta \bm \xi_\mathrm{in}}{\beta} + \mathcal{O}(g,\beta).
\end{align}
We note that $\bm{\xi}_\mathrm{out}$ and $\delta \bm{\xi}_\mathrm{out}$ only involve local measurements during inference and feedback phase.
In the optical case, in which the trainable parameters only enter linearly in Eq.~\eqref{eqn:nonl-system}, Eq.~\eqref{eq:general_gradient_formula} produces Eqs.~\eqref{eq:gradientFreuquencies} and \eqref{eq:gradientCouplings}. If the trainable parameters enter in the nonlinear part in Eq.~\eqref{eqn:nonl-system}, the form of the nonlinearity needs to be known to evaluate $\partial \bm F_\theta (\bar{\bm\xi})/\partial \theta$ in Eq.~\eqref{eq:general_gradient_formula}.

\subsection{Application to optical systems and quasi-reciprocity}

In this section, we introduce (approximate) symmetries $U$ for optical systems.
A linear optical system with real and symmetric couplings $J_{j, \ell}$ in Eq.~\eqref{eqn:nonl-system} with $g=0$ is reciprocal, $S_{j,\ell} = S_{\ell,j}$.
This is equivalent to $S_\theta^\dagger = U S_\theta U^{-1}$
with either $U=\sigma_x$ or $U=\sigma_y$, in which
$\sigma_x=\begin{pmatrix}
    0 & \mathbf{I}_N\\
    \mathbf{I}_N & 0
\end{pmatrix}$, and $\sigma_y=\begin{pmatrix}
    0 & -i\mathbf{I}_N\\
    i\mathbf{I}_N & 0
\end{pmatrix}$. Notice that $\sigma_x$ is a \emph{local} transformation since it exchanges every mode $a_j$ with its conjugate $a_j^*$, while $\sigma_y$ does the same after applying a phase shift.

It is well known~\cite{caloz2018electromagnetic}, that nonlinearities in optics can break reciprocity:  this is also the case for our optical system~\eqref{eqn:nonl-system} for $g\neq0$. In particular, 
the nonlinearity introduces a coupling between $\delta \bar{\bm{a}}$ and $\delta \bar{\bm{a}}^*$, due to the non-holomorphic nature of the optical nonlinearity, which breaks the symmetry above.
However, depending on the nonlinearity strength $g$, the dissipation, and input intensity, it is still possible to observe approximate reciprocity, a regime we call \emph{quasi-reciprocity} (SI).
Indeed, in this case we obtain Eq.~\eqref{eqn:quasi-symmetry-1} for the linearized scattering matrix $S_\theta(\bar{\bm{a}},\bar{\bm{a}}^*)$.
Since the reciprocity is only weakly broken, the error signal $\delta \bm{a}_\mathrm{in}$ injected at the output sites in the feedback-phase, carries back almost the same information about how light scattered in the `forward pass' during the inference phase. Therefore, this approximation is not an obstacle for the efficient gradient extraction. We empirically quantify this approximation in further detail below.

\begin{figure*}
    \centering
\includegraphics[width=\linewidth]{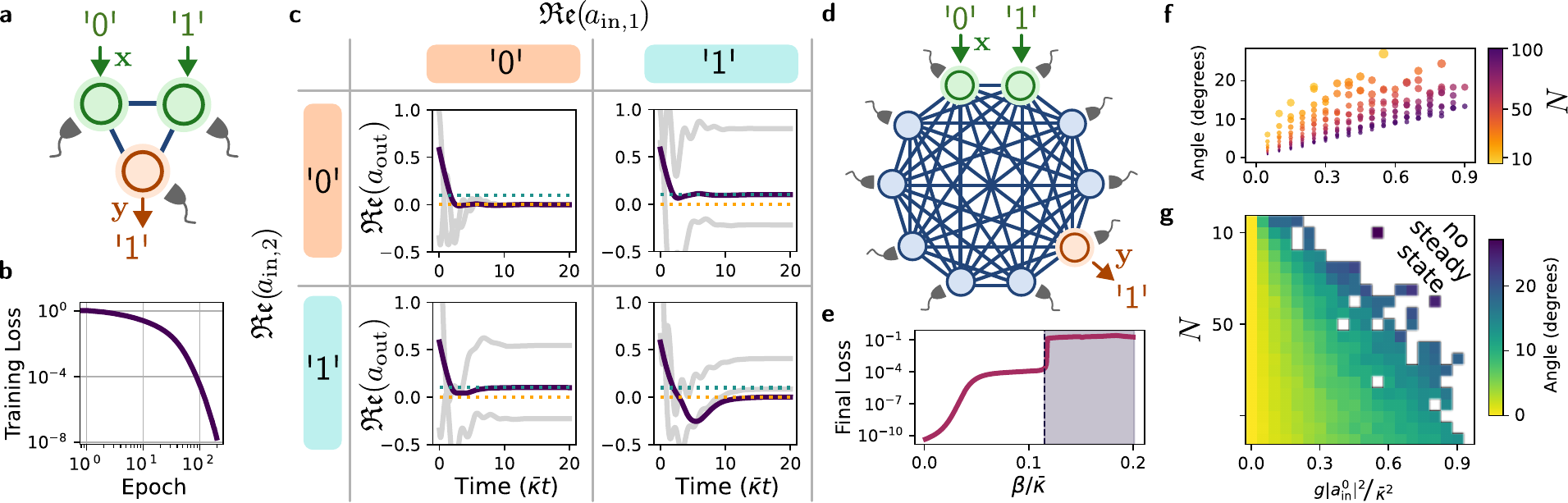}
    \caption{\textbf{Scattering Backpropagation Training} -- 
    \textbf{a)}~ A fully connected optical network with $N=3$ nodes and self-Kerr nonlinearities of strength  $g/\bar{\kappa}=0.2$, $|a_\mathrm{in}^0|/\sqrt{\bar{\kappa}}=1$ can already learn XOR. \textbf{b)}~ The mean square error loss evolution during the training of the system. \textbf{c)}~ Time evolution of the three output amplitudes $\bm{a}_\mathrm{out}$ during inference, after switching the input signals to the indicated configurations (blue: network output), after successful training. The dotted lines correspond to the target values for  logical $0$ and $1$.
    \textbf{d)}~ Neuromorphic system with $N=10$ nodes linearly coupled all-to-all; the learnable parameters are the detunings $\Delta_j$ and the couplings $J_{j,\ell}$. In our simulations, we assume $\kappa_j = \bar{\kappa}$ for all $j$. \textbf{e)} Dependence of training success on the scale $\beta$ of the perturbation during gradient extraction. The final loss after training for $1000$ epochs remains small for rather large $\beta$, outside of the gray area [$N=10$, $g/\bar{\kappa}=0.2$, $|a_\mathrm{in}^0|/\sqrt{\bar{\kappa}}=1$].  \textbf{f)} and \textbf{g)} Angle between $S_\theta(\bar{\bm{a}}, \bar{\bm{a}}^*)^\dagger$ and $\sigma_y S_\theta(\bar{\bm{a}}, \bar{\bm{a}}^*)\sigma_y$ (related to the angle between the true gradient $\partial_\theta C$ and the approximation used in Scattering Backpropagation) for a system of $N$ modes, linearly coupled all-to-all, with self-Kerr nonlinearities. Here we scale the nonlinearity in the suitable form  $g |a^{0}_\mathrm{in}|^2/\bar{\kappa}^2$, with $|a^{0}_\mathrm{in}|$ a reference input amplitude. Each data point represents the angle averaged over $50$ simulations, removing runs that do not converge to a steady state.
    The gradient approximation remains good throughout almost the entire stable regime of the nonlinear system.
    Both the average of the angles and their standard deviation (represented by the radii of the dots in the plot) are proportional to $g$ and $1/N.$ 
    \label{fig:XOR} }
\end{figure*}

\subsection{Connection to Equilibrium Propagation for vector fields}

Our approach promises efficient physics-based gradient extraction in nonequilibrium driven-dissipative coupled-mode systems (e.g. in optics). The closest existing approach would be a generalization to vector field dynamics  \cite{scellier2018generalization} of Equilibrium Propagation
\cite{scellier2017equilibrium}, in which the accuracy of the approximation depends on the symmetry of the weights. However, it turns out that approach is not directly applicable to the class of systems Eq.~\eqref{eqn:nonl-system} considered in our work, which include many examples of the much-studied and widely applied port Hamiltonian systems \cite{van2014port} (SI). Another difference from our proposed approach is that during the nudged phase Equilibrium Propagation requires engineering a change in the dynamical equations that depends on the cost function.

It is however possible to view our training method, in the particular case $\bm{a}_\mathrm{out} = \bm{a}(t)$, as a generalized version of Equilibrium Propagation for vector fields where we allow for more general (quasi)-symmetries in the Jacobian and we evaluate $\frac{\partial C}{\partial \bm{a}}(\bm{a}, \mathbf{y}_\mathrm{target})$ at the steady state $\bar{\bm{a}}$ (SI).

\section{Case studies}
\subsection{Network of Kerr resonators}

To test our general training approach, we simulate the training of a neuromorphic scattering system described by Eq.~\eqref{eqn:nonl-system}. As an example, see Fig.~\ref{fig:XOR}~\textbf{a}, we consider a network of $N$ resonators in which the trainable weights are the symmetric couplings $J_{j,\ell}$ and the detunings $\Delta_j \coloneqq J_{j,j}$.  We assume the system to feature self-Kerr nonlinearities, i.e. $\varphi_j(\bm{a})\coloneqq |a_j|^2 a_j$. In the SI, we also present examples of training in systems with cross Kerr-nonlinearities.

For a given input $\mathbf{x}$, encoded in the input field $\bm{a}_\mathrm{in},$ we solve the dynamical equations up to some $t_{\rm max}$, starting from random initial conditions. The output $\mathbf{y}$ of the neuromorphic system is then defined in terms of the field $ \bm{a}_\mathrm{out} = \bm{a}_\mathrm{in} + \sqrt{\kappa}\bm{a}(t_{\rm max}) \approx \bm{a}_\mathrm{in} + \sqrt{\kappa}\bar{\bm{a}}.$
After this inference phase, we compute the error signal $\delta \bm{a}_\mathrm{in}$ and solve the perturbed dynamics starting from the old steady state $\bm{a}(t_{\rm max})$. Finally, we use $\bm{a}_\mathrm{out}$, $\delta \bm{a}_\mathrm{in}$, and  $\delta \bm{a}_\mathrm{out}$ to compute the approximate gradients Eq.~\eqref{eq:gradientFreuquencies} and Eq.~\eqref{eq:gradientCouplings} and update the system's parameters.

\subsection{Training a small network}

We first demonstrate how our method can be applied to train a small system of $N=3$ coupled Kerr-resonators to learn XOR (Fig.~\ref{fig:XOR}~\textbf{a}). The network input is encoded in the real parts of the input fields $a_{\mathrm{in},1}$ and $a_{\mathrm{in},2}$, while the real part of $a_{\rm \mathrm{out},3}$ represents the output. The loss decreases monotonically during training (Fig.~\ref{fig:XOR}~\textbf{b})  even though the physically extractable gradient that we employ is only approximate and we do not follow directly the steepest descent in the loss landscape. In Fig.~\ref{fig:XOR}~\textbf{c}, we plot the time evolution of the $\bm{a}_\mathrm{out}(t)$ modes for the successfully trained model and observe that the steady state reached in this nonlinear system is independent of initial conditions (SI). 

\subsection{Analysis of approximations}

One significant aspect of our general training method for scattering systems is the approximate nature of the physically extracted gradient.
Therefore, it is important to analyze whether the angle between the true gradient  $\partial_\theta C$  and the approximation (Eqs.~\eqref{eq:gradientFreuquencies} and \eqref{eq:gradientCouplings}) remains small, to guarantee successful training. It is possible to prove (SI) that this angle is related to the angle $\alpha$ between $A=S_\theta(\bar{\bm{a}},\bar{\bm{a}}^*)^\dagger$ and $B=\sigma_y S_\theta(\bar{\bm{a}},\bar{\bm{a}}^*) \sigma_y$. This latter angle also determines the precision of the approximation~\eqref{eq:general_gradient_formula}, and it depends only on the system, not on the cost function of the specific task. It can be defined by fixing any inner product, e.g. the Frobenius $\langle A,B\rangle_F \coloneqq \Tr(A^\dagger B)$, with $\langle A,B\rangle_F = \|  A \|_F \| B \|_F \cos\alpha $. In Fig.~\ref{fig:XOR}~\textbf{f, g}  we analyze the behavior of the angle $\alpha$ as a function of  system size $N$ and nonlinearity for randomly sampled systems. Qualitatively, we observe a linear dependence between $\alpha$ and the strength $g$ of the nonlinearity, matching the analytical results (SI).
Interestingly, we also observe that for ``sparse" nonlinearity, like onsite self-Kerr nonlinearity or cross-Kerr nonlinearities connecting the modes in a line/circle (SI), the approximation improves for larger systems (with a $1/N$ behavior seen in Fig.~\ref{fig:XOR}~\textbf{f} ). 

In summary, the angle remains small even for fairly large nonlinearities, almost up to the threshold where the system does not reach a steady state anymore. Therefore, we can obtain both good gradient approximations and nonlinear expressivity simultaneously, especially when we scale up the system.

On another note, in any real optical setup, the measurements extracting the training gradients from scattering experiments will display shot noise. Therefore, one needs to work at finite values of the parameter $\beta$ that determines the strength of the perturbation that is injected, to allow for good signal/noise ratio. We have analyzed how far $\beta$ can be pushed while still allowing for successful training (Fig.~\ref{fig:XOR}~\textbf{e}). We observe that training works even for values much larger than $10^{-2}$, the value we employed for the other simulations.

\subsection{Image recognition}
\begin{figure}
    \centering
    \includegraphics[width=\linewidth]{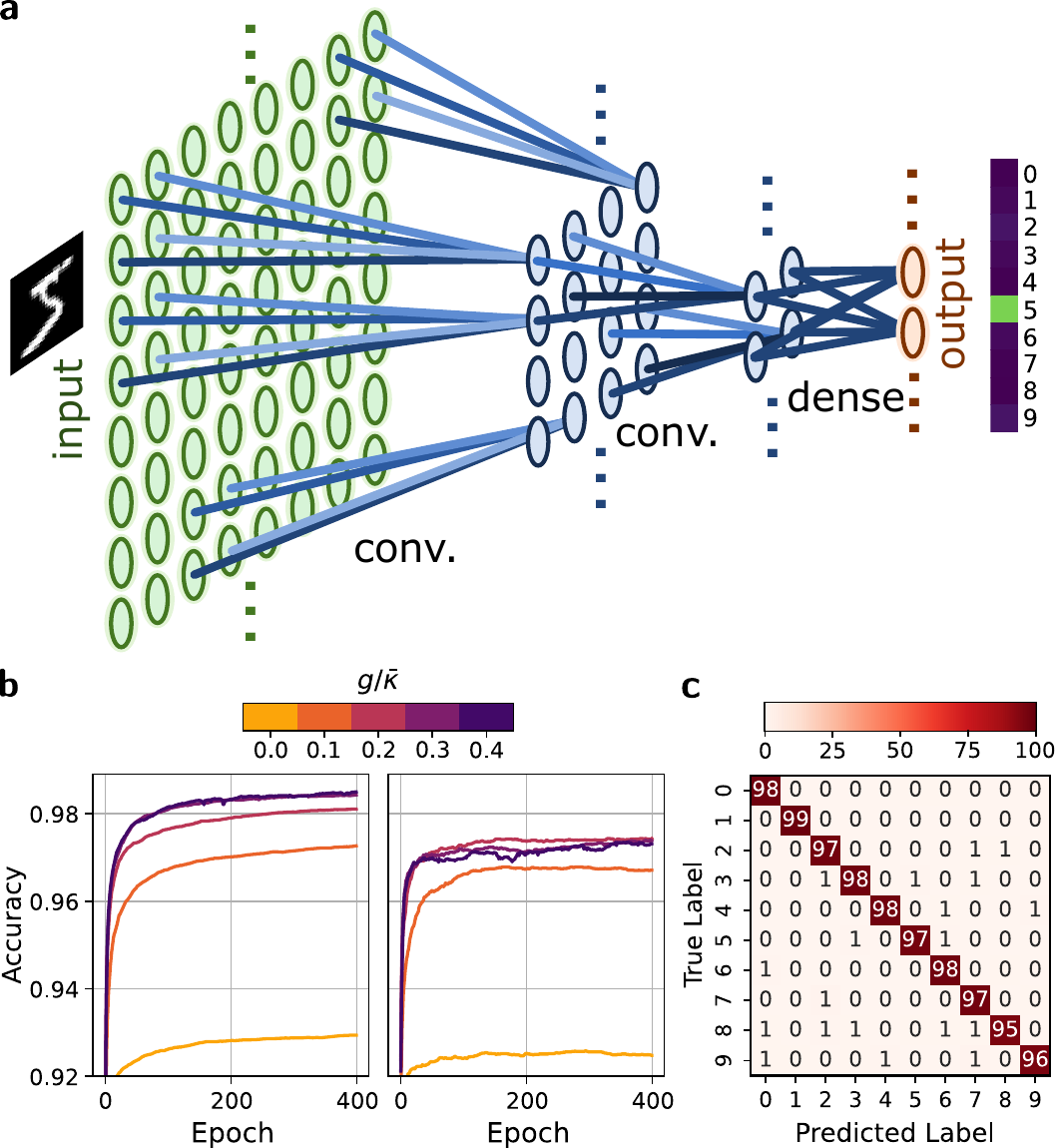}
\caption{\textbf{Training for image recognition.} \textbf{a)}~The network of optical Kerr resonators, with a structure reminiscent of a convolutional neural network, including local connectivity and pooling (downsampling), but without enforcing translational invariance of the weights. Note that the physical connections shown here are bidirectional. \textbf{b)}~Model accuracy during training on the MNIST handwritten digits dataset, for different values of the nonlinearity strength $g/\bar{\kappa}$; left: training accuracy, right: test accuracy.  \textbf{c)}~ Confusion matrix on the test-set for the model with $g/\bar{\kappa}=0.2$ , reaching $97.4\%$ test accuracy.}
    \label{fig:imageRecognition}
\end{figure}
To further investigate the performance of our training scheme, we consider how a larger-scale  network of nonlinear Kerr-resonators can be trained with our method to perform image classification.
We employ an architecture that implements the local connectivity structure of  convolutional neural networks (CNNs), albeit for simplicity omitting both translational weight sharing and multiple channels (see Methods). In total, our network consists of about $10^3$ nodes and $7\cdot 10^3$ independent trainable parameters. The pixels of the image are encoded in the real-valued optical amplitudes of the input light fields, while the real parts of the output amplitudes in the final layer are taken to be the logarithms of the output probabilities (logits). We assess the influence of the Kerr nonlinearity on the training success (see Fig.~\ref{fig:imageRecognition}), finding that the test accuracy increases significantly with rising nonlinearity, improving from $92.6\%$ (linear system) to about $97.4\%$.

\section{Experimental requirements}

To apply our training approach as stated, it is not necessary to have full control or knowledge of the system.
We only require that the trainable parameters enter the linear parts of the equations of motion, e.g. as in Eq.~\eqref{eqn:nonl-system}, and that these trainable parts are known and accessible.
In optical systems, linear components can easily be tuned via heaters (time scales of 100 microseconds or more), phase-change materials (micro- to milliseconds) or electro-optically (100 picoseconds). At each of the tuneable elements, one needs optical readout to determine the scattering matrices for the training gradient extraction (e.g. via grating tap monitors~\cite{pai2023experimentally} attached to integrated resonators), and one must be able to inject light into the input and output nodes of the whole setup.

A wide variety of optical platforms are in principle suited, e.g., nonlinear integrated photonics devices such as those based on the promising recently developed thin-film LiNbO platform~\cite{zhu2021integrated}, Kerr-nonlinearities in coupled microresonators~\cite{ghalanos2020kerr}, exciton-polaritons in arrays of micropillars~\cite{fontaine2022kardar}, or systems with strong optical nonlinearities induced by atoms (e.g.~\cite{clark2020observation}). Input power levels of less than $100{\rm \mu W}$ can generate strong nonlinearities in on-chip microcavities~\cite{yoshiki2014all}.

Our method applies beyond systems strictly described by the equations of motion of Eq.~\eqref{eqn:nonl-system}, e.g. it extends to reciprocal continuous-wave systems, as can be seen by discretizing them.
Beyond that, our method also applies to the most general scattering setup where we consider right- and left-moving scattering waves propagating through a setup, e.g. of ring resonators and waveguides (see Methods and SI).

\section{Conclusion}
In summary, with Scattering Backpropagation we introduced a new training method which applies to a wide range of systems for which previously no efficient physics-based training approach existed. Specifically, our method applies to driven-dissipative nonlinear systems and is particularly relevant for training nonlinear optical systems with dissipation---one of the most promising neurmorphic platform. This alleviates the need for the often unsuccessful in-silico training, the inefficient parameter-shift method or hybrid approaches.
Gradients are computed by comparing only two scattering experiments. Remarkably, we neither require a faithful model, nor full control over the system, nor full knowledge of all of the system details; it is only necessary to access the scattering response at the positions of parameter updates.

Our method opens the door to the flexible experimental exploration of neuromorphic architectures with a wide range of systems which could previously not be considered due to the lack of efficient training methods.


\section{Methods}
\subsection{Linearized scattering matrix}
 Here, we derive the form of the linearized scattering matrix. For further details and general input-output relations, we refer to the SI. We consider equations of motion for $\bm\xi \coloneqq (\bm a, \bm a^*)^\mathsf{T}$ of the form $\dot{\bm\xi} = \bm F_\theta(\bm\xi)-\sqrt{\kappa}\, \bm\xi_\mathrm{in}$. For these equations, from here on, $\kappa=\diag(\kappa_1,\dots, \kappa_N, \kappa_1, \dots, \kappa_N)$ is the $2N\times 2N$ matrix defined by repeating the losses with respect to $\bm a$ and $\bm a^*$ on the diagonal. 
The steady state $\bar{\bm \xi}$ is the solution of
\begin{equation}\label{eq:steadyStateFree}
    \bm F_\theta(\bar{\bm\xi})=\sqrt{\kappa}\, \bm\xi_\mathrm{in}.
\end{equation}
If we perturb the input, the equation becomes $\dot{\bm \xi} = \bm F_\theta(\bm \xi)-\sqrt{\kappa}\, (\bm \xi_\mathrm{in}+\delta \bm \xi_\mathrm{in})$ and evolves towards a new steady state $\bar{\bm \xi}+\delta \bar{\bm \xi}$ which solves
\begin{equation}
    \bm F_\theta(\bar{\bm\xi}+\delta \bar{\bm \xi})=\sqrt{\kappa}\, (\bm\xi_\mathrm{in}+\delta \bm\xi_\mathrm{in}).
\end{equation}
By subtracting the two equations and expanding the vector field around $\bar{\xi}$ we obtain
\begin{equation}\label{eq:FreeMinusNudged}
   \sqrt{\kappa}\,  \delta \bm \xi_\mathrm{in} = \bm F_\theta(\bar{\bm\xi}+\delta \bar{\bm\xi}) - \bm F_\theta(\bar{\bm\xi}) = \nabla_{\bm\xi} \bm F_\theta(\bar{\bm\xi})\, \delta\bar{\bm\xi} + \mathcal{O}(\delta \bar{\bm\xi}^2),
\end{equation}
where $\nabla_{\bm\xi} \bm F_\theta$ indicates the Jacobian matrix.
Solving Eq.~\eqref{eq:FreeMinusNudged} for $\delta\bar{\bm\xi}$ and inserting the expression into the input-output relations $\bm\xi_\mathrm{out}=\bm\xi_{\mathrm{in}}+\sqrt{\kappa} \, \bm\xi$, we find for $\delta\bm\xi_\mathrm{out}$
\begin{equation}\label{eq:linearizedScatteringMatrix}
    \delta \bm \xi_\mathrm{out} = S_\theta(\bar{\bm \xi}) \,\delta \bm \xi_\mathrm{in} + \mathcal{O}(\delta \bm \xi_\mathrm{in}^2/\sqrt{\bar{\kappa}}),
\end{equation}
in which $\bar{\bm \xi}$ is the steady state and $ S_\theta(\bar{\bm \xi})\coloneqq\mathbf{I}_{2N} + \sqrt{\kappa}\, \nabla_{\bm \xi} \bm F_\theta(\bar{\bm \xi})^{-1}\sqrt{\kappa}$ is the linearized scattering matrix ---describing scattering according to the equations linearized around the nonlinear steady state.

\subsection{Gradient Approximation}
In the main text, we introduced the gradient approximation Eq.~\eqref{eq:general_gradient_formula} which, in our optical framework, leads to the gradient updates w.r.t. the frequencies, Eq.~\eqref{eq:gradientFreuquencies}, and the couplings, Eq.~\eqref{eq:gradientCouplings}.
In the following, we derive Eq.~\eqref{eq:general_gradient_formula}, i.e.
we show that for a system evolving according to
$\dot{\bm\xi} = \bm F_\theta(\bm\xi)-\sqrt{\kappa}\, \bm\xi_\mathrm{in}$ with input--output relations  $\bm\xi_\mathrm{out} =  \bm\xi_\mathrm{in} + \sqrt{\kappa} \,  \bm\xi$ (see SI for more general results), the expression for the gradient at the steady state $\bar{\bm\xi}$ in presence of a quasi-symmetry
\begin{equation}
\label{eqn:quasi_symm}
S_\theta(\bar{\bm \xi})^\dagger = US_\theta(\bar{\bm \xi})U^{-1}+\mathcal{O}(g/\bar{\kappa})
\end{equation}
can be expressed as
\begin{align}
    \frac{\partial C}{\partial \theta}
        &= - \left(\frac{\partial \bm F_\theta}{\partial  \theta }\right)^\dagger \sqrt{\kappa^{-1}} U \; \frac{\delta \bm \xi_\mathrm{out} - \delta \bm \xi_\mathrm{in}}{\beta} + \mathcal{O}\left(\frac{g}{\bar{\kappa}^2}, \frac{\beta}{\bar{\kappa}^2} \right),
\end{align}
where the derivatives are evaluated at $\bar{\bm\xi}$, $U$ is an invertible matrix, and
\begin{equation}
    \delta \bm \xi_\mathrm{in} \coloneqq \beta U^{-1}\frac{\partial C}{\partial \bm \xi^*_\mathrm{out}}(\mathbf{y},\mathbf{y}_\mathrm{target}).
\end{equation}
To show this, we differentiate the cost function $C$ w.r.t. a parameter $\theta$ applying the chain rule
\begin{align}\label{eq:derivativeCost}
    \frac{\partial C}{\partial \theta} &= \left(  \frac{\partial C}{\partial \bm \xi_\mathrm{out}} \right)^\mathsf{T} \frac{\partial \bm  \xi_\mathrm{out}}{\partial  \theta }.
\end{align}
Since the right-hand side is a scalar and both the cost function $C$ and the parameter $\theta$ are real, we can write
\begin{align}\label{eq:derivativeCost2}
    \frac{\partial C}{\partial \theta}
    & =
    \left(\frac{\partial \bm \xi_\mathrm{out}}{\partial  \theta }  \right)^\dagger \left( \frac{\partial C}{\partial \bm \xi_\mathrm{out}}\right)^*
\end{align}
Next, we derive an expression for $\frac{\partial \bm \xi_{\mathrm{out}}}{\partial  \theta }$.
Differentiating Eq.~\eqref{eq:steadyStateFree}, using the implicit function theorem, and applying the input-output relations $\bm \xi_{\mathrm{out}} =  \bm\xi_{\mathrm{in}} + \sqrt{\kappa} \,  \bm\xi$, we obtain
\begin{equation}\label{eq:xiOut}
    \frac{\partial \bm \xi_\mathrm{out}}{\partial  \theta} (\bar{\bm \xi}, \theta) = (\mathbf{I}_{2N} - S_\theta(\bar{\bm \xi}) )\sqrt{\kappa^{-1}} \frac{\partial \bm F_\theta}{\partial  \theta }(\bar{\bm \xi}),
\end{equation}
in which $ S_\theta(\bar{\bm \xi})\coloneqq\mathbf{I}_{2N} + \sqrt{\kappa} \,\nabla_{\bm\xi} \bm F_\theta(\bar{\bm \xi})^{-1}\sqrt{\kappa}$.
Combining Eqs.~\eqref{eq:derivativeCost2} and~\eqref{eq:xiOut}, we have
\begin{align}
    \frac{\partial C}{\partial \theta}
    & =
    \left(\frac{\partial \bm F_\theta}{\partial  \theta }\right)^\dagger \sqrt{\kappa^{-1}}(\mathbf{I}_{2N} - S_\theta(\Bar{\xi}))^\dagger\frac{\partial C}{\partial \bm \xi^*_\mathrm{out}}.
\end{align}
Next, using the quasi-symmetry Eq.~\eqref{eqn:quasi_symm} we find
\begin{align}
\label{eq:gradients_final_1}
    \frac{\partial C}{\partial \theta}
    & =
    \left(\frac{\partial \bm F_\theta}{\partial  \theta }\right)^\dagger \sqrt{\kappa^{-1}} U(\mathbf{I}_{2N} - S_\theta(\Bar{\bm \xi})) \underbrace{U^{-1}\frac{\partial C}{\partial \bm \xi^*_\mathrm{out}}}_{=: \delta \bm \xi_\mathrm{in}/\beta} + \mathcal{O}\left(\frac{g}{\bar{\kappa}^2}\right).
\end{align}
Note that, on the one hand, we want to arrange $\partial_\theta C$ in the form ``Scattering matrix" $\times$ ``input signal", which is a quantity we are able to extract via physical experiment. On the other hand, we want the use only a single input signal $\delta \bm \xi_\mathrm{in}$ to obtain the gradient w.r.t. all parameters $\theta$ simultaneously, as we aim to perform the experiment in the feedback phase just once (and not a number of times which depends on the number trainable parameters as in the ``parameter shift method"). As shown above, to solve the latter problem one can first take the adjoint in Eq.~\eqref{eq:derivativeCost2}, so it is possible to define a single error signal $\delta \bm \xi_\mathrm{in}$. Nevertheless, this comes at the cost of introducing $S_\theta(\bar{\bm{\xi}})^\dagger$ and so one has to make use of a quasi-symmetry Eq.~\eqref{eqn:quasi_symm} to re-obtain an expression of the form ``Scattering matrix" $\times$ ``input signal".

Finally, using Eq.~\eqref{eq:linearizedScatteringMatrix} in  Eq.~\eqref{eq:gradients_final_1}, we conclude 
\begin{align}
    \frac{\partial C}{\partial \theta}
    & = 
    - \left(\frac{\partial \bm F_\theta}{\partial  \theta }\right)^\dagger \sqrt{\kappa^{-1}}U \; \frac{\delta \bm \xi_\mathrm{out} - \delta \bm \xi_\mathrm{in}}{\beta} +  \mathcal{O}\left(\frac{g}{\bar{\kappa}^2}, \frac{\beta}{\bar{\kappa}^2} \right).
\end{align}

Let us now consider our system Eq.~\eqref{eqn:nonl-system}, and assume the quasi-symmetry Eq.~\eqref{eqn:quasi_symm} with $U=\sigma_y$. Let $\theta$ refer to detunings $\Delta_j$ and couplings $J_{j,\ell}$. In that case, the previous equation leads to approximations Eq.~\eqref{eq:gradientFreuquencies} and Eq.~\eqref{eq:gradientCouplings}. Similar results can be shown for other quasi-symmetry, i.e. different $U$ (SI). 

It is also possible to prove that the angle between the true gradient $\partial_\theta C$ and its approximation computed as above, depends on the angle $\alpha$ between $S_\theta(\bar{\bm \xi})^\dagger$ and $US_\theta(\bar{\bm \xi})U^{-1}$, which, in our system for either $U=\sigma_y$ or $U=\sigma_x$, is $\alpha=\mathcal{O}(g/\bar{\kappa})$ as shown in Fig.~\ref{fig:XOR} \textbf{b} and proven in the SI.

\subsection{Application to optical systems and quasi-reciprocity}
In order to study the steady state regime of the system Eq.~\eqref{eqn:nonl-system}, i.e. its linearization around $\bar{\bm a}$, we have to work in the $(\bm x,\bm p)$--quadrature basis or, equivalently, to consider the modes $\bm a$ and their conjugates $\bm a^*$ separately. This is because in the linearized regime there will be coupling between $\delta \bm a$ and $\delta \bm a^*$ as the nonlinear function $\bm \varphi$ is usually a non-holomorphic function of $\bm a$ (SI).
Thus, we linearize the system
\begin{equation}
\begin{cases}
    \label{eqn:system2'}
    \dot{\bm a} = -iH(\theta)\bm a  - i g \bm \varphi(\bm a, \bm a^*) - \sqrt{\kappa}\, \bm a_\mathrm{in}(\mathbf{x})\\
    \dot{\bm a}^* = iH^*(\theta)\bm a^* +i g [\bm \varphi(\bm a,\bm a^*)]^* - \sqrt{\kappa}\, \bm a^*_\mathrm{in}(\mathbf{x}).
\end{cases}
\end{equation} 
at steady state $(\bar{\bm a},\bar{\bm a}^*)$ obtaining
\begin{equation}
    \frac{d}{dt}
    \begin{pmatrix}
        \delta \bm a\\
        \delta \bm a^*
    \end{pmatrix} =
\nabla_{(\bar{\bm a},\bar{\bm a}^*)}\bm F_\theta(\bar{\bm a},\bar{\bm a}^*)
    \begin{pmatrix}
        \delta \bm a\\
        \delta \bm a^*
    \end{pmatrix},
\end{equation}
where the Jacobian $\nabla_{(\bm a,\bm a^*)}\bm F_\theta(\bar{\bm a},\bar{\bm a}^*)$ is
\begin{equation}
        \begin{pmatrix}
        -iH(\theta) -ig \frac{\partial \bm \varphi}{\partial \bm a }(\bar{\bm a},\bar{\bm a}^*) & -ig \frac{\partial \bm \varphi}{\partial \bm a^* }(\bar{\bm a},\bar{\bm a}^*)\\
        ig \frac{\partial \bm \varphi^*}{\partial \bm a }(\bar{\bm a},\bar{\bm a}^*) & iH^*(\theta) +ig \frac{\partial \bm \varphi^*}{\partial \bm a^* }(\bar{\bm a},\bar{\bm a}^*)
    \end{pmatrix}
\end{equation}
and the $\partial_{\bm{a}}$ symbol indicates the Wirtinger derivative with respect to $\bm a$ \cite{remmert1991theory}. Using the fact that 
\begin{equation}
    \frac{\partial \bm \varphi^*}{\partial \bm a} = \left( \frac{\partial \bm \varphi}{\partial \bm{a}^*} \right)^* \quad \text{and} \quad  \frac{\partial \bm \varphi^*}{\partial \bm a^*} = \left( \frac{\partial \bm \varphi}{\partial \bm a} \right)^*,
\end{equation}
we have that the Jacobian matrix has the form of a Bogoliubov transformation (without the usual normalization):
\begin{equation}
\label{eqn:Jac}
   \nabla_{(\bm a,\bm a^*)}\bm F_\theta(\bar{\bm a},\bar{\bm a}^*) = \begin{pmatrix}
    A(\bar{\bm a},\bar{\bm a}^*) & g B(\bar{\bm a},\bar{\bm a}^*)\\
    g B^*(\bar{\bm a},\bar{\bm a}^*) & A^*(\bar{\bm a},\bar{\bm a}^*)
\end{pmatrix},
\end{equation}
where $A(\bar{\bm a},\bar{\bm a}^*)$ and $B(\bar{\bm a},\bar{\bm a}^*)$ are $N \times N$ matrices depending on the steady state. As we discussed in general above, with Eq.~\eqref{eqn:Jac} one can introduce the linearized scattering matrix $S_\theta(\bar{\bm a},\bar{\bm a}^*)\coloneqq \mathbf{I}_{2N}+\sqrt{\kappa}\nabla_{(\bm a,\bm a^*)}\bm F_\theta(\bar{\bm a},\bar{\bm a}^*)^{-1}\sqrt{\kappa}$.

Note that, in the linear case $(g/\bar{\kappa}=0),$ the classical Hamiltonian we consider to write the dynamical equations Eq.~\eqref{eqn:nonl-system} via
\begin{equation}
    \dot{a}_j(t)=-\frac{\kappa_j}{2}{a}_j-i\frac{\partial \mathcal{H}}{\partial a^*_j}-\sqrt{\kappa_j} \,{a}_{\mathrm{in},j}
\end{equation}
is of the form
\begin{equation}
    {\mathcal{H}}(\bm{a},\bm{a}^*)= \sum_{j=1}^N \Delta_j a_j^* a_j +
    \sum_{j\not=\ell} J_{j,\ell} {a}_j^*{a}_\ell,
\end{equation}
and the linearized scattering matrix $S_\theta(\bar{\bm a},\bar{\bm a}^*)$ (which no longer depends on the steady state coordinates) has the following symmetries:
    \begin{equation}
    \label{eqn:s1'}
        S_\theta^\dagger = \sigma_y S_\theta \sigma_y,
    \end{equation}
    and 
    \begin{equation}
        \label{eqn:s2'}
        S^\dagger_\theta = \sigma_x S_\theta \sigma_x.
    \end{equation}
Recall that $\sigma_x$ flips $\bm a$ and $\bm a^*$.  From this, it follows that the system is reciprocal, i.e. light equally scatters in both directions between two different nodes $j$ and $\ell.$
In fact, for the Hamiltonian above, one can show that
\begin{equation}
    \begin{pmatrix}
        \bm a_{\mathrm{out}}\\
        \bm a_{\mathrm{out}}^*
    \end{pmatrix}=\underbrace{
    \begin{pmatrix}
       \Tilde{S}_\theta & 0\\
        0 & \Tilde{S}^*_\theta
    \end{pmatrix} 
    }_{S_\theta}    \begin{pmatrix}
        \bm a_{\mathrm{in}}\\
        \bm a_{\mathrm{in}}^*
    \end{pmatrix},
\end{equation}
where $\Tilde{S}_\theta= \mathbf{I}_N +i\sqrt{\kappa}H(\theta)^{-1}\sqrt{\kappa}$ is the $N\times N$ scattering matrix (at the zero frequency) as usually defined in linear systems. Note that $\Tilde{S}_\theta=\Tilde{S}_\theta^\mathsf{T}$ implies any of \eqref{eqn:s1'} and \eqref{eqn:s2'} and vice-versa.

If the system is nonlinear, such symmetry (reciprocity) is broken, nevertheless, for small values of $g,$ we observe only small deviations from reciprocity. In particular, corresponding to Eqs.~\eqref{eqn:s1'}, \eqref{eqn:s2'}, one can prove that the following quasi-symmetries hold (SI)
\begin{equation}
    \label{eqn:qs1'}
    S_\theta^{\dagger}(\bar{\bm a}, \bar{\bm a}^*) = \sigma_y S_\theta(\bar{\bm a}, \bar{\bm a}^*) \sigma_y + \mathcal{O}(g/\bar{\kappa}),
\end{equation}
\begin{equation}
    \label{eqn:qs2'}
    S_\theta^{\dagger}(\bar{\bm a},\bar{\bm a}^*) = \sigma_x S_\theta( \bar{\bm a},\bar{\bm a}^*) \sigma_x + \mathcal{O}(g/\bar{\kappa}).
\end{equation}
For instance, with respect to the Frobenius norm, one can prove that (SI):
\begin{align}
\notag
    &\|  S_\theta(\bar{\bm{a}},\bar{\bm{a}}^*)^\dagger - \sigma_y S_\theta(\bar{\bm{a}},\bar{\bm{a}}^*)\sigma_y \|_F \leq8\|\sqrt{\kappa}\|_F^2 \| H^{-1}\|_F  \cdot\\
    &\frac{g\|H^{-1}\|_F \left( \left\| \frac{\partial \varphi}{\partial \bm{a}}(\bar{\bm{a}},\bar{\bm{a}}^*)\right\|_F + \left\| \frac{\partial \varphi}{\partial \bm{a}^*}(\bar{\bm{a}},\bar{\bm{a}}^*)\right\|_F \right)}{1- 2 \sqrt{2}g\|H^{-1}\|_F \left( \left\| \frac{\partial \varphi}{\partial \bm{a}}(\bar{\bm{a}},\bar{\bm{a}}^*)\right\|_F + \left\| \frac{\partial \varphi}{\partial \bm{a}^*}(\bar{\bm{a}},\bar{\bm{a}}^*)\right\|_F \right)}.
    \label{eq:big-O-terms}
\end{align}
Since relations Eq.~\eqref{eqn:s1'} and Eq.~\eqref{eqn:s2'} are equivalent to linear reciprocity $\Tilde{S}_\theta=\Tilde{S}_\theta^\mathsf{T}$, we informally say that a nonlinear system Eq.~\eqref{eqn:nonl-system} is \emph{quasi-reciprocal} if it is in a steady state regime where such relations well approximate Eq.~\eqref{eqn:qs1'} and Eq.~\eqref{eqn:qs2'}, respectively.

\subsection{Generalization to arbitrary linear input-output relations}

For clarity of presentation, we have so far considered systems of the form Eq.~\eqref{eq:dyn_sys}, with input–output relations $\bm{\xi}_\mathrm{out} = \bm{\xi}_\mathrm{in} + \sqrt{\kappa}\, \bm{\xi}$ and a quasi-symmetry Eq.~\eqref{eqn:quasi-symmetry-1}. In the SI, we formulate our method in a more general setting. Specifically, we consider a dynamical system $\dot{\bm{\xi}}=\bm{F}_\theta(\bm{\xi})-\Pi \, \bm{\xi}_\mathrm{in},$ with linear input–output relations $\bm{\xi}_\mathrm{out} = \Gamma \bm{\xi}_\mathrm{in} + \Sigma\, \bm{\xi}$ and quasi-symmetry $(\nabla_{\bm\xi} \bm{F}_\theta (\bar{\bm\xi})^{-1})^\dagger = U_1 \nabla_{\bm\xi} \bm{F}_\theta(\bar{\bm{\xi}})^{-1}U_2+\mathcal{O}(g)$, in which $\Pi, \Gamma, \Sigma, U_1$, and $U_2$ are square matrices. In contrast to $\sqrt{\kappa}$, $\Pi$ and $\Sigma$ are not necessarily diagonal matrices.
This generalization is crucial for addressing, for instance, optical systems consisting of components coupled by waveguides which support waves propagating in both directions. Elimination of the waveguides (via the standard input-output equations) leads to input-output relations for the external signals of the generalized type introduced here.  In the SI, we specifically consider the example of transmission in optical ring resonators coupled sequentially via a waveguide and successfully apply Scattering Backpropagation.

\subsection{Numerical simulations}

\emph{\textbf{Training XOR}}
To showcase supervised training with Scattering Backpropagation, in the main text, we consider a neuromorphic network of three coupled Kerr non-resonators with $g/\bar{\kappa}=0.2$, represented in Fig.~\ref{fig:XOR}~\textbf{a}. We asses the regression task of learning the XOR binary function, $\oplus:\{0,1\}^2\to\{0,1\}$ such that $0\oplus0=1\oplus1=0$ and $1\oplus0=0\oplus1=1.$
In this case, the network input $\mathbf{x}=(\mathrm{x}_1,\mathrm{x}_2)^\mathsf{T} \in \mathcal{D}_\mathrm{train}\coloneqq\{0,1\}^2$ is encoded in the real parts of the input signal, i.e. $\mathfrak{Re}(a_{\mathrm{in},1})/\bar{\kappa}\coloneqq \mathrm{x}_1$ and $\mathfrak{Re}(a_{\mathrm{in},2})/\bar{\kappa}\coloneqq \mathrm{x}_2$, while their imaginary parts are set to zero. The output is read from the real part of $a_{\mathrm{out},3}$, in particular $\mathbf{y} \coloneqq \, 10  \cdot \mathfrak{Re}(a_{\mathrm{out},3})/\bar{\kappa}$. Thus, the indices of the input and output nodes are respectively $\mathcal{I}_\mathrm{in} \coloneqq \{1,2\}$ and $\mathcal{I}_\mathrm{out} \coloneqq \{3\}$.
We initialize the trainable weights following Xavier's convention \cite{glorot2010understanding} while, as cost function, we use the mean-squared error $C(\mathbf{y},\mathbf{y}_\mathrm{target})=\frac{1}{4}\sum_{\mathbf{x}\in\mathcal{D}_\mathrm{train}}(\mathbf{y}(\mathbf{x})- \mathbf{y}_\mathrm{target})^2$. During training, we numerically solve the dynamical equations up to $\bar{\kappa} \, t_\mathrm{max}=30$ using stepsize $\bar{\kappa} \, dt=0.01.$ Furthermore, we choose $\beta/\bar{\kappa} = 0.01$ and learning rate $\eta=10^{-3}$ for the weight update via Eq.~\eqref{eq:gradientFreuquencies} and Eq.~\eqref{eq:gradientCouplings}.
In Fig.~\ref{fig:XOR}~\textbf{b}, we plot the cost function evolving over 200 epochs, each consisting of training over the entire dataset $\mathcal{D}_{\mathrm{train}}$, and in Fig.~\ref{fig:XOR}~\textbf{c} we show the time evolution of the $\bm{a}_\mathrm{out}(t)$ modes in a trained model.
In the SI, we compare the same three modes self-Kerr architecture on XOR for different values of $g/\bar{\kappa}$. Furthermore, we also consider a larger network of $N=10$ modes with cross-Kerr nonlinearities. As a side remark,  our results show how a system described by Eq.~\eqref{eqn:nonl-system} with self-Kerr nonlinearities is able to learn XOR with just $N=3$ modes, and so only $6$ real independent parameters. For comparison, a Hopfield network needs at least $4$ nodes, and $10$ real independent parameters  \cite{rojas2013neural}, while a multi-layer perceptron with Tanh activation requires at least $5$ nodes (two hidden) and $8$ parameters.

\emph{\textbf{Approximation Analysis.}}
The angle between the true and the approximated gradient given by Eqs.~\eqref{eq:gradientFreuquencies} and \eqref{eq:gradientCouplings} depends the one between $S_\theta(\bar{\bm{a}},\bar{\bm{a}}^*)^\dagger$ and $\sigma_y S_\theta(\bar{\bm{a}},\bar{\bm{a}}^*) \sigma_y$ (SI). For a fully-connected network of self-Kerr resonators, we investigated how such a quantity (defined with respect to the Frobenius inner product) varies with respect to $g |a^{0}_\mathrm{in}|^2/\bar{\kappa}^2$, where $|a^{0}_\mathrm{in}|$ is the average input strength in a site, see Fig.~\ref{fig:XOR}~\textbf{f} and Fig.~\ref{fig:XOR}~\textbf{g}. Indeed, note that the order of $\left\| \frac{\partial \varphi}{\partial \bm{a}}(\bar{\bm{a}},\bar{\bm{a}}^*)\right\|_F$ and $\left\| \frac{\partial \varphi}{\partial \bm{a}^*}(\bar{\bm{a}},\bar{\bm{a}}^*)\right\|_F $ in Eq.~\eqref{eq:big-O-terms} is $|a^{0}_\mathrm{in}|^2$ for self-Kerr nonlinearities (SI). Numerically, to compute the linearized scattering matrix $S_\theta(\bar{\bm{a}},\bar{\bm{a}}^*)$ for different system sizes $N$, input strength and initial parameters $\Delta$ and $J$, we solved the equilibrium equation with the \texttt{fsolve()} function of the \texttt{scipy.optimize.module} \cite{virtanen2020scipy}. The same choice was made when training the $N=10$ network on XOR to investigate how $\beta$ affects the final training accuracy when fixing the initial trainable parameters $J$, i.e. Fig.~\ref{fig:XOR}~\textbf{e}. Furthermore, in both cases, in the simulations we took $\kappa_j=\bar{\kappa}$ for every $j$ and set the internal losses $\kappa_j'$ to zero.

\emph{\textbf{Training MNIST.}} To investigate the performance of our method on a more complex benchmark, we train a network of self-Kerr modes to perform image classification on the MNIST dataset, consisting of $28 \times 28$ pixel images of hand-written digits from $0$ to $9.$ The dataset consists of 60,000 images in the training set, and 10,000 images in the test set. Inspired by CNNs, we set up a layered architecture having sparse connections, similar to what was done in \cite{Wanjura2024Fully}. To keep the network structure simple (and more experimentally plausible), we choose not to introduce multiple channels. To compensate for the resulting reduction in degrees of freedom, we do not implement translational weight sharing between kernels acting on different locations. This actually mimics the brain's visual cortex structure more closely than a standard CNN.  The physical neurons are subdivided into four (one input, two hidden, and one output) layers of shape $(28 \times 28)-(12 \times 12)-(5 \times 5)-10$. The connectivity in the first two connection layers is sparse according to square kernels of respectively size $6$ and $4$ (both with stride $2)$, while the final connection layer is dense. In this way, the network is able to capture the local patterns in the image with a limited number of trainable parameters (compared to a fully connected architecture).  In total, our network consists of $N=963$ nodes and $6,797$ independent trainable parameters, namely the resonators' detunings and couplings. The pixels of the image are collected in an input vector $\mathbf{x}$ whose components are encoded in the real parts of the input light fields incident on the nodes in the first network layer. Specifically, we set $\mathfrak{Re}(a_{\mathrm{in}, j})/\bar{\kappa} \coloneqq \frac{\mathrm{x}_j}{100\sqrt{2}}$ for $j$ in $\mathcal{I}_\mathrm{in} \coloneqq \{1, \dots, 784\}$, where the factor of $1/100$ is chosen to rescale the input pixels to be in the order of $1$ and the $1/\sqrt{2}$ because the code implementation is in term of the field (real) quadratures. Then, we consider the real part of the output light in the final layer $a_{\mathrm{out},j}$ for $j$ in $\mathcal{I}_\mathrm{out} \coloneqq \{954, \dots, 963\}$ to be the ten logits which are the output $\mathbf{y}$ of the network.
Furthermore, consider a cross-entropy loss function $C(\mathbf{y},\mathbf{y}_\mathrm{target})=-\sum_{m=1}^{10}\mathbf{y}_{\mathrm{target},m}\log(\sigma(\mathbf{y})_m)$, in which $\sigma(\mathbf{y})_m=\frac{\exp(\mathrm{y}_m/T)}{\sum_{k=1}^{10} \exp( \mathrm{y}_k/T)}$ is the softmax function with temperature $T=0.1$, and $\mathbf{y}_\mathrm{target}$ is the one-hot encoding of $\mathbf{x}$'s true label.
During training, we numerically solve the dynamical equations \eqref{eqn:nonl-system} up to $\bar{\kappa} \, t_\mathrm{max}=60$ using a stepsize $\bar{\kappa} \, dt=0.1.$ We choose $\beta/\bar{\kappa} = 0.01$ in Eqs.~\eqref{eq:gradientFreuquencies},~\eqref{eq:gradientCouplings}, and a learning rate of $\eta=0.1$ for the weight update Eq.~\eqref{eqn:weight-update}. We perform stochastic gradient descent, averaging the approximated gradients over mini-batches of size $10$.

\subsection{Measuring the exact gradient with $2N$ scattering experiments}

As mentioned in the main text, even if the system is not quasi-reciprocal, it is possible to change the training procedure described in Section \ref{sec:extracting_gradients} to compute the exact gradient $\partial_\theta C$ without assuming any quasi-symmetry. This requires performing $2N$ scattering experiments in the feedback phase (instead of one, as we do in Scattering Backpropagation). In a large, fully connected network with trainable couplings $J_{j,\ell}$, this procedure is still much more efficient than the parameter-shift method, which involves $N^2$ experiments. The goal is to measure the linearized scattering matrix and compute $\partial_\theta C$ via Eq.~\eqref{eq:xiOut} ---that we obtained differentiating the steady state Eq.~\eqref{eq:steadyStateFree} using the implicit function theorem.

First, recall that, in the case of our dynamical equations Eq.~\eqref{eqn:system2'} with $\bm{\xi}\coloneqq(\bm{a}, \bm{a}^*)^\mathsf{T}$, the linearized scattering matrix of the system  has the form of a Bogoliuvov transformation
\begin{equation}
    S_\theta(\bar{\bm a},\bar{\bm a}^*) = \begin{pmatrix}
        S_{11} & S_{12}\\
        S_{12}^* & S_{11}^*
    \end{pmatrix},
\end{equation}
where $S_{11}$ and $S_{12}$ are $N\times N$ matrices. Now, the main idea is to repeat the feedback phase described in Section \ref{sec:extracting_gradients} $2N$ times. In particular, for each $k=1,\dots, N$, we perform two scattering experiments: first, we define the error signal to be $\delta \bm{a}^{(2k-1)}_{\mathrm{in}}\coloneqq(0, \dots, 0, \beta', 0, \dots, 0)^\mathsf{T}$, in which the non-zero entry is the $k$-th and $\beta'$ is a small, positive number (in units of the square root of a loss rate). In this way, the system response determined by Eq.~\eqref{eq:in-out-relations} at the $(2k-1)$-th iteration is given by
\begin{equation}
    \delta \bm{a}^{(2k-1)}_{\mathrm{out}}=\beta' \,\Bigl( S_{11}^{(k)} + S_{12}^{(k)}\Bigr)+\mathcal{O}(\beta'^2/\sqrt{\bar{\kappa}}),
\end{equation}
where we indicated as $\delta \bm{a}^{(2k-1)}_{\mathrm{out}}$ the perturbation on the output after the $(2k-1)$-th experiment, and with $S_{11}^{(k)}$ the $k$-th column of the matrix $S_{11}$.
Next, in the $(2k)$-th iteration, we define a new error signal to be $\delta \bm{a}^{(2k)}_{\mathrm{in}}\coloneqq(0, \dots, 0, i\beta', 0, \dots, 0)^\mathsf{T}$, in which the non-zero entry is the $k$-th, and measure the system response
\begin{equation}
     \delta \bm{a}^{(2k)}_{\mathrm{out}}=i\beta' \,\Bigl( S_{11}^{(k)} - S_{12}^{(k)}\Bigr)+\mathcal{O}(\beta'^2/\sqrt{\bar{\kappa}}).   
\end{equation}
Therefore, a column at the time, it is possible to recover the full linearized scattering matrix (up to $\mathcal{O}(\beta')$ terms) using
\begin{equation}
    S_{11}^{(k)} = \frac{\delta \bm{a}^{(2k-1)}_{\mathrm{out}} - i\,\delta \bm{a}^{(2k)}_{\mathrm{out}}}{2 \beta'}+\mathcal{O}(\beta'/\sqrt{\bar{\kappa}})
\end{equation}
and
\begin{equation}
    S_{12}^{(k)} = \frac{\delta \bm{a}^{(2k-1)}_{\mathrm{out}} + i\,\delta \bm{a}^{(2k)}_{\mathrm{out}}}{2 \beta'}+\mathcal{O}(\beta'/\sqrt{\bar{\kappa}}),
\end{equation}
and finally compute the gradient $\partial_\theta C$ via Eq.~\eqref{eq:xiOut}.
Note that with this modified version of the algorithm, at the price of $2N$ experiments, we have reconstructed the full linearized scattering matrix $S_\theta(\bar{\bm a},\bar{\bm a}^*)$ without any assumption on the quasi-reciprocity or on the system.
Furthermore, notice that if the trainable parameters $\theta$ of the system Eq.~\eqref{eqn:system2'} are the detunings $\Delta_j$ and the couplings $J_{j,\ell}$, this modified method performs $\mathcal{O}(\sqrt{N_{\theta}})$ experiments, which is much more efficient than the parameter-shift method requiring $\mathcal{O}(N_{\theta})$ for a fully connected setup, where $N_{\theta}$ is the number of parameters. 

Nevertheless, in the parameter regime in which the system of Kerr-resonators we investigated numerically possesses a steady state (the one we are interested in), the gradient approximation given by Eqs.~\eqref{eq:gradientFreuquencies} and~\eqref{eq:gradientCouplings} was already sufficient for performing gradient descent on the considered tasks. In addition, the quasi-symmetries of Eq.~\eqref{eqn:system2'} depend on $g,$ the input power $|a^0_{\mathrm{in}}|$, and the losses $\kappa$ (see Fig.~\ref{fig:XOR} and SI). Thus, in many neuromorphic applications, it is probably more efficient to engineer such quantities to design a quasi-reciprocal system in the first place, and use the approximate version of Scattering Backpropagation rather than this less efficient alternative.

%

\twocolumngrid
\onecolumngrid

\clearpage

\appendix

\setcounter{figure}{0}
\renewcommand{\thefigure}{S\arabic{figure}}
\renewcommand{\theequation}{S\arabic{equation}}
\setcounter{equation}{0}
\setcounter{section}{0}
\setcounter{table}{0}

\onecolumngrid
\part*{\large\centering Supplementary Information for Training nonlinear optical neural networks with Scattering Backpropagation}
\begin{center}
    Nicola Dal Cin, Florian Marquardt, Clara C. Wanjura
\end{center}

\section{Unit-less equations}
In the main text, we consider the dynamics modeling the time-evolution of complex modes, e.g. optical resonators, $a(t)=(a_1(t), \dots, a_N(t))^\mathsf{T}$ via
\begin{equation}
\label{eqn:system1}
    \frac{d}{dt}a(t) = -i H(\theta) a(t) - i g \varphi(a(t)) - \sqrt{\kappa}\, a_\mathrm{in}(\mathrm{x}),
\end{equation}
where $H_{j,\ell}\coloneqq J_{j,l}$ and $H_{j,j}\coloneqq-i\frac{\kappa_j+\kappa_j'}{2}+\Delta_j$. In particular, $\Delta_j \coloneqq J_{j,j}$ represents the detunings, $J$ is the real symmetric coupling matrix of the nodes, $\kappa'_j$ and $\kappa_j$ are respectively the internal and external (e.g. due to waveguide coupling) losses of node $a_j.$ As we present in the main text following the usual convention, the dynamical equations \eqref{eqn:system1} are in frequency units, thus the modes $a_j(t)$ are unit-less, while $1/t$, $\kappa_j$, $\kappa_j'$, $\Delta_j$, $J_{j,\ell}$ and $g$ are frequencies ($a_{\mathrm{in},j}$ and $a_{\mathrm{out},j}$ are in units of $\sqrt{\kappa_j}$ for convention). Moreover, as in the main text we define
\begin{equation}
    \delta a_\mathrm{in}\coloneqq -i\beta \frac{\partial C}{\partial a_\mathrm{out}},
\end{equation}
in which we have that $\beta$ is also a frequency.
However, in order to work with unit-less equations, in this Supplementary Information we will rescale \eqref{eqn:system1} by a suitable reference rate $\bar{\kappa}$, introducing $\tilde{t}\coloneqq\bar{\kappa}t$
\begin{align}
    \frac{d a}{d \tilde{t}}(\tilde{t}/\bar{\kappa}) &= \frac{d a}{d \tilde{t}}(t) = \frac{da}{dt}(t)\frac{dt}{d\tilde{t}}(\tilde{t})=\frac{1}{\bar{\kappa}}\left( -i H(\theta) a(t) - i g \varphi(a(t)) - \sqrt{\kappa}\, a_\mathrm{in}(\mathrm{x})\right)\\
    &=-i\tilde{H}(\tilde{\theta})a(\tilde{t}/\bar{\kappa})-i\tilde{g}\varphi(a(\tilde{t}/\bar{\kappa}))-\sqrt{\tilde{\kappa}}\, \tilde{a}_\mathrm{in}(\mathrm{x}),
    \label{eq:part_res}
\end{align}
where
\begin{equation}
\label{eq:dimension-less}
    \tilde{H}_{j,\ell} \coloneqq \tilde{J}_{j, \ell} \coloneqq \frac{J_{j,\ell}}{\bar{\kappa}}, \quad \tilde{g}=\frac{g}{\bar{\kappa}}, \quad \tilde{\kappa}\coloneqq \frac{\kappa}{\bar{\kappa}}, \quad \tilde{\kappa}'\coloneqq \frac{\kappa'}{\bar{\kappa}}, \quad  \tilde{\Delta}_j \coloneqq \frac{\Delta_j}{\bar{\kappa}}, \quad \tilde{H}_{j,j}\coloneqq-i\frac{\tilde{\kappa}_j+\tilde{\kappa}_j'}{2}+\tilde{\Delta}_j, \quad\text{and} \quad \tilde{a}_\mathrm{in}(\mathrm{x})\coloneqq \frac{a_\mathrm{in}(\mathrm{x})}{\sqrt{\bar{\kappa}}}.
\end{equation}
By letting $\tilde{a}(\tilde{t})\coloneqq a(\tilde{t}/\bar{\kappa})$, equation \eqref{eq:part_res} can be expressed as
\begin{equation}
    \dot{\tilde{a}}(\tilde{t})\coloneqq \frac{d\tilde{a}}{d\tilde{t}}(\tilde{t})=-i\tilde{H}(\tilde{\theta})\tilde{a}(\tilde{t})-i\tilde{g}\varphi(\tilde{a}(\tilde{t}))-\sqrt{\tilde{\kappa}}\, \tilde{a}_\mathrm{in}(\mathrm{x}),
\end{equation}
which is unit-less and of the same form of \eqref{eqn:system1}. In this way, one also has
\begin{equation}
    \delta \tilde{a}_\mathrm{in}=\frac{\delta a_\mathrm{in}}{\sqrt{\bar{\kappa}}}= -i\frac{\beta}{\sqrt{\bar{\kappa}}} \frac{\partial C}{\partial a_\mathrm{out}}= -i\frac{\beta}{\bar{\kappa}}\frac{\partial C}{\partial \tilde{a}_\mathrm{out}},
\end{equation}
and so $\tilde{\beta}=\beta / \bar{\kappa}.$
From now on, with some abuse of notation, we will refer to this unit-less formulation omitting the `tildes' for readability.

\section{Linearized Scattering Matrix}
\label{app:lin_scatt_mat}
In order to study the steady-state regime of the system \eqref{eqn:system1}, i.e. its linearization around $\bar{a}$, we have to work in the $(x,p)$--quadrature basis or, equivalently, to consider the modes $a$ and their conjugates $a^*$ separately. This is because in the linearized regime there will be coupling between $\delta a$ and $\delta a^*$ as the nonlinear function $\varphi$ is usually a non-holomorphic function of $a$ (see Remark \ref{rem:nonholomorphic}). Thus, we will study a system of the form
\begin{equation}
\begin{cases}
    \label{eqn:system2}
    \dot{a} = -iH(\theta)a - i g \varphi(a, a^*) - \sqrt{\kappa}\, a_\mathrm{in}(\mathrm{x})\\
    \dot{a}^* = iH^*(\theta)a^* +i g [\varphi(a,a^*)]^* - \sqrt{\kappa}\, a^*_\mathrm{in}(\mathrm{x}),
\end{cases}
\end{equation}
with input-output relations
\begin{equation}
\label{eq:inout1}
    a_{\mathrm{out},j} = a_{\mathrm{in},j} + \sqrt{\kappa_j} a_j, \quad \quad a^*_{\mathrm{out},j} = a^*_{\mathrm{in},j} + \sqrt{\kappa_j} a^*_j, \quad \quad \text{for each   } j=1,\dots,N.
\end{equation}
In fact, it is convenient consider a more general system of differential equations
\begin{equation}
    \dot{\xi}=F(\xi)-\Pi \, \xi_\mathrm{in},
\end{equation}
with linear input-output relations
\begin{equation}
    \xi_\mathrm{out}=\Gamma \xi_\mathrm{in}+ \Sigma \xi,
\end{equation}
where $\xi(t) \in \R^{m}$, while $\Gamma,$ $\Pi,$ $\Sigma$ are $m \times m$ invertible matrices --- also assuming $\Pi,$ $\Sigma$ are invertible. 

Thus, for our case of $N$ optical modes described by \eqref{eqn:system2} and \eqref{eq:inout1}, we have $\xi (t)\coloneqq (a(t), a^*(t))^{\mathsf{T}}$, $\Gamma \coloneqq \mathbf{I}_{2N}$ and $\Pi = \Sigma \coloneqq \sqrt{\kappa}$ where,
with some abuse of notation, $\kappa=\diag(\kappa_1, \dots, \kappa_N,\kappa_1, \dots, \kappa_N)$ indicates the $2N\times 2N$ matrix defined by repeating the losses with respect to $a$ and $a^*$ on the diagonal. 

\begin{lem}[Linearized Scattering Matrix]
\label{lem:smatrix}
    Consider a system 
        $\dot{\xi} = F(\xi)-\Pi\, \xi_\mathrm{in}$ with linear input output relations $\xi_\mathrm{out}=\Gamma \xi_\mathrm{in}+ \Sigma \, \xi$. If, in the steady state regime, we perturb the input field by $\delta \xi_\mathrm{in}$, then we have
    \begin{equation}
        \delta \xi_\mathrm{out} = S(\bar{\xi}) \,\delta \xi_\mathrm{in} + \mathcal{O}(\delta \xi_\mathrm{in}^2),
    \end{equation}
    where $\bar{\xi}$ is the steady state and $ S(\bar{\xi})\coloneqq \Gamma + \Sigma \, \nabla_\xi F(\bar{\xi})^{-1}\Pi$ is the linearized scattering matrix.
\end{lem}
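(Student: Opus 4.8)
The plan is to repackage, in this coordinate-free setting, the derivation already carried out for the optical case in the Methods section (Eqs.~\eqref{eq:steadyStateFree}--\eqref{eq:linearizedScatteringMatrix}). The central object is the steady-state map $\xi_\mathrm{in}\mapsto\bar{\xi}$ implicitly defined by $F(\bar{\xi})=\Pi\,\xi_\mathrm{in}$, and the strategy has three steps: (i) use the implicit function theorem to show this map is well-defined and $C^1$ near the operating point, with $\delta\bar{\xi}=\mathcal{O}(\delta\xi_\mathrm{in})$; (ii) compute its linearization; (iii) push the result through the \emph{linear} input--output relation, which commutes with everything trivially.

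First I would write the two steady-state conditions. At the unperturbed operating point $\bar{\xi}$ one has $F(\bar{\xi})=\Pi\,\xi_\mathrm{in}$; after the drive is shifted to $\xi_\mathrm{in}+\delta\xi_\mathrm{in}$ the system relaxes to a new steady state $\bar{\xi}+\delta\bar{\xi}$ obeying $F(\bar{\xi}+\delta\bar{\xi})=\Pi(\xi_\mathrm{in}+\delta\xi_\mathrm{in})$. Subtracting and Taylor-expanding $F$ about $\bar{\xi}$ gives
\begin{equation}
\nabla_\xi F(\bar{\xi})\,\delta\bar{\xi}+\mathcal{O}(\delta\bar{\xi}^2)=\Pi\,\delta\xi_\mathrm{in}.
\end{equation}
Invertibility of $\nabla_\xi F(\bar{\xi})$ (the hypothesis required for the implicit function theorem, automatically satisfied when $\bar{\xi}$ is a hyperbolic, e.g.\ stable, steady state since the Jacobian then has no zero eigenvalue) lets me both conclude $\delta\bar{\xi}$ is a $C^1$ function of $\delta\xi_\mathrm{in}$ with $\delta\bar{\xi}=\mathcal{O}(\delta\xi_\mathrm{in})$, and solve the displayed equation:
\begin{equation}
\delta\bar{\xi}=\nabla_\xi F(\bar{\xi})^{-1}\Pi\,\delta\xi_\mathrm{in}+\mathcal{O}(\delta\xi_\mathrm{in}^2),
\end{equation}
where the estimate $\delta\bar{\xi}=\mathcal{O}(\delta\xi_\mathrm{in})$ has been used to convert the $\mathcal{O}(\delta\bar{\xi}^2)$ remainder into $\mathcal{O}(\delta\xi_\mathrm{in}^2)$.

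Then I would substitute into the input--output relation. Evaluating $\xi_\mathrm{out}=\Gamma\xi_\mathrm{in}+\Sigma\,\xi$ at the perturbed and unperturbed steady states and subtracting, the constant matrices pass through directly, giving $\delta\xi_\mathrm{out}=\Gamma\,\delta\xi_\mathrm{in}+\Sigma\,\delta\bar{\xi}$ (note that invertibility of $\Gamma,\Pi,\Sigma$ is not actually needed for this particular step). Inserting the expression for $\delta\bar{\xi}$ yields
\begin{equation}
\delta\xi_\mathrm{out}=\bigl(\Gamma+\Sigma\,\nabla_\xi F(\bar{\xi})^{-1}\Pi\bigr)\,\delta\xi_\mathrm{in}+\mathcal{O}(\delta\xi_\mathrm{in}^2)=S(\bar{\xi})\,\delta\xi_\mathrm{in}+\mathcal{O}(\delta\xi_\mathrm{in}^2),
\end{equation}
which is the claim.

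The only real obstacle is hypothesis management around $\nabla_\xi F(\bar{\xi})$: one must assume, or derive from stability of the steady state, that this Jacobian is invertible, since otherwise $S(\bar{\xi})$ is undefined and the implicit function theorem does not apply. A secondary, purely bookkeeping point is verifying that the Taylor remainder genuinely scales as $\delta\xi_\mathrm{in}^2$ rather than merely $o(\delta\xi_\mathrm{in})$; this holds provided $F$ is $C^2$ near $\bar{\xi}$, which should be stated as a standing smoothness assumption. Beyond these, no further work is required --- the lemma is a direct generalization of the Methods computation.
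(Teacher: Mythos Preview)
Your proposal is correct and follows essentially the same approach as the paper's proof: subtract the perturbed and unperturbed steady-state equations, Taylor-expand $F$ about $\bar{\xi}$, invert the Jacobian, and push through the linear input--output relation. Your version adds a bit more care about the implicit function theorem hypotheses and the conversion of the $\mathcal{O}(\delta\bar{\xi}^2)$ remainder to $\mathcal{O}(\delta\xi_\mathrm{in}^2)$, but the argument is otherwise identical.
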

\begin{proof}
    The steady state of the free system $\bar{\xi}$ is the solution of
    \begin{equation}
        F(\bar{\xi})=\Pi \, \xi_\mathrm{in}.
    \end{equation}
If we perturb the input, the system becomes $\dot{\xi} = F(\xi)-\Pi\, (\xi_\mathrm{in}+\delta \xi_\mathrm{in})$ and evolves towards a new steady state $\bar{\xi}+\delta \bar{\xi}$ which solves
    \begin{equation}
        F(\bar{\xi}+\delta \bar{\xi})=\Pi\, (\xi_\mathrm{in}+\delta \xi_\mathrm{in}).
    \end{equation}
By subtracting the two equations and expanding the vector field around $\bar{\xi}$ we get
\begin{equation}
   \Pi\,  \delta \xi_\mathrm{in} = F(\bar{\xi}+\delta \bar{\xi}) - F(\bar{\xi}) = \nabla_\xi F(\bar{\xi})\, \delta\bar{\xi} + \mathcal{O}(\delta \bar{\xi}^2).
\end{equation}
Finally, we conclude by recalling the input--output relation $\delta \xi_\mathrm{out} = \Gamma \delta \xi_\mathrm{in} + \Sigma \, \delta \bar{\xi}$ and inverting the equation above.
\end{proof}

\section{Quasi-Reciprocity in optical systems}
\label{app:quasi-reciprocity}
Note that in our optical case described by \eqref{eqn:system2} and \eqref{eq:inout1}, the linearized scattering matrix takes the form $S(\bar{a},\bar{a}^*)=\mathbf{I}_{2N} + \sqrt{\kappa} \nabla_{(\bar{a},\bar{a}^*)}F(\bar{a},\bar{a}^*)\sqrt{\kappa}$. More explicitly, linearizing equations \eqref{eqn:system2} at steady state $(\bar{a},\bar{a}^*)$ leads to 
\begin{equation}
    \label{eqn:linearized_system2}
    \frac{d}{dt}
    \begin{pmatrix}
        \delta a\\
        \delta a^*
    \end{pmatrix} =
\nabla_{(\bar{a},\bar{a}^*)}F(\bar{a},\bar{a}^*)
    \begin{pmatrix}
        \delta a\\
        \delta a^*
    \end{pmatrix},
\end{equation}
where the Jacobian is
\begin{equation}
\label{eqn:jacobian}
        M \coloneqq \nabla_{(a,a^*)}F(\bar{a},\bar{a}^*)=\begin{pmatrix}
        -iH(\theta) -ig \frac{\partial \varphi}{\partial a }(\bar{a},\bar{a}^*) & -ig \frac{\partial \varphi}{\partial a^* }(\bar{a},\bar{a}^*)\\
        ig \frac{\partial \varphi^*}{\partial a }(\bar{a},\bar{a}^*) & iH^*(\theta) +ig \frac{\partial \varphi^*}{\partial a^* }(\bar{a},\bar{a}^*)
    \end{pmatrix}
\end{equation}
and the $\partial_a$ symbol indicates the Wirtinger derivative with respect to $a$ \cite{remmert1991theory}. 
Since for differentiable functions we have
\begin{equation}
    \frac{\partial f^*}{\partial a} = \left( \frac{\partial f}{\partial a^*} \right)^* \quad \text{and} \quad  \frac{\partial f^*}{\partial a^*} = \left( \frac{\partial f}{\partial a} \right)^*,
\end{equation}
it follows that the Jacobian matrix has the form of a Bogoliubov transformation (although without the usual normalization):
\begin{equation}
   M = \begin{pmatrix}
    A(\bar{a},\bar{a}^*) & g B(\bar{a},\bar{a}^*)\\
    g B^*(\bar{a},\bar{a}^*) & A^*(\bar{a},\bar{a}^*)
\end{pmatrix},
\end{equation}
where $A(\bar{a},\bar{a}^*)$ and $B(\bar{a},\bar{a}^*)$ are $N \times N$ matrices depending on the steady state. 
More generally, if $\mathcal{H}$ is the (real) classical Hamiltonian used to derive the dynamical equations 
\begin{equation}
    \dot{a}_j = -\frac{\kappa_j}{2}a_j - i \frac{\partial \mathcal{H}}{\partial a^*_j}-\sqrt{\kappa_j} a_{\mathrm{in},j}
\end{equation}
the Jacobian matrix of the latter (also considering the $a^*$ modes) can be also written as
\begin{equation}
    -\frac{1}{2}\begin{pmatrix}
        \kappa & 0\\
        0 & \kappa
    \end{pmatrix}-i \sigma_z \begin{pmatrix}
        \dfrac{\partial }{\partial a}\dfrac{\partial \mathcal{H} }{\partial a^*}(\bar{a},\bar{a}^*) & \dfrac{\partial }{\partial a^*}\dfrac{\partial \mathcal{H} }{\partial a^*}(\bar{a},\bar{a}^*)\\[8pt]
        \dfrac{\partial }{\partial a}\dfrac{\partial \mathcal{H} }{\partial a}(\bar{a},\bar{a}^*) & \dfrac{\partial }{\partial a^*}\dfrac{\partial \mathcal{H} }{\partial a}(\bar{a},\bar{a}^*)
    \end{pmatrix}, \quad \text{where} \quad \sigma_z \coloneqq \begin{pmatrix}
        \textbf{I}_N & 0\\
        0 & -\textbf{I}_N
    \end{pmatrix}.
\end{equation}
Furthermore, note that 
\begin{equation}
    \dfrac{\partial }{\partial a}\dfrac{\partial \mathcal{H} }{\partial a^*}(\bar{a},\bar{a}^*) = \left(\dfrac{\partial }{\partial a^*}\dfrac{\partial \mathcal{H} }{\partial a}(\bar{a},\bar{a}^*)\right)^*, \quad \quad \dfrac{\partial }{\partial a^*}\dfrac{\partial \mathcal{H} }{\partial a^*}= \left(\dfrac{\partial }{\partial a}\dfrac{\partial \mathcal{H} }{\partial a}(\bar{a},\bar{a}^*) \right)^*
\end{equation}
and that they are respectively Hermitian and symmetric matrices since
\begin{equation}
    \frac{\partial}{\partial a_\ell}\frac{\partial \mathcal{H}}{\partial a_j^*}(\bar{a},\bar{a}^*)=\frac{\partial}{\partial a_j^*}\frac{\partial \mathcal{H}}{\partial a_\ell}(\bar{a},\bar{a}^*)=\frac{\partial}{\partial a_j^*}\left(\frac{\partial \mathcal{H}^*}{\partial a_\ell}\right)(\bar{a},\bar{a}^*)=\frac{\partial}{\partial a_j^*}\left(\frac{\partial \mathcal{H}}{\partial a^*_\ell}\right)^*(\bar{a},\bar{a}^*)=\left( \frac{\partial}{\partial a_j} \frac{\partial \mathcal{H}}{\partial a^*_\ell}(\bar{a},\bar{a}^*) \right)^*
\end{equation}
and
\begin{equation}
    \frac{\partial}{\partial a^*_\ell}\frac{\partial \mathcal{H}}{ \partial a^*_j}(\bar{a},\bar{a}^*)= \frac{\partial }{\partial a^*_j }\frac{\partial \mathcal{H}}{ \partial a^*_\ell}(\bar{a},\bar{a}^*).
\end{equation}
In particular, in our case \eqref{eqn:system1} in which $\mathcal{H}$ has real, symmetric couplings $J_{j,\ell}$ and
\begin{align}
\label{eqn:Poisson_b}
    \dot{a}_j &= -\frac{\kappa_j}{2}a_j - i \frac{\partial \mathcal{H}}{\partial a^*_j}-\sqrt{\kappa_j} a_{\mathrm{in},j}\\
    &=-\frac{\kappa_j}{2}a_j - i\sum_{\ell=1}^NJ_{j,\ell}a_\ell-ig\varphi_j(a,a^*)-\sqrt{\kappa_j} a_{\mathrm{in},j},
\end{align} this implies that the matrices $\frac{\partial \varphi}{\partial a}(\bar{a},\bar{a}^*)$ and $\frac{\partial \varphi}{\partial a^*}(\bar{a},\bar{a}^*)$ are also respectively Hermitian and symmetric as
\begin{align}
    J_{j,\ell}+g\frac{\partial \varphi_j}{\partial a_\ell}(\bar{a},\bar{a}^*)=\frac{\partial}{\partial a_\ell}\frac{\partial \mathcal{H}}{\partial a_j^*}(\bar{a},\bar{a}^*)=\left( \frac{\partial}{\partial a_j} \frac{\partial \mathcal{H}}{\partial a^*_\ell}(\bar{a},\bar{a}^*) \right)^*=J_{\ell,j}^*+\left(g\frac{\partial \varphi_\ell}{\partial a_j}(\bar{a},\bar{a}^*)\right)^*
\end{align}
and
\begin{equation}
    g\frac{\partial \varphi_j}{\partial a^*_\ell}(\bar{a},\bar{a}^*)=\frac{\partial}{\partial a^*_\ell}\frac{\partial \mathcal{H}}{ \partial a^*_j}(\bar{a},\bar{a}^*)= \frac{\partial }{\partial a^*_j }\frac{\partial \mathcal{H}}{ \partial a^*_\ell}(\bar{a},\bar{a}^*)=g\frac{\partial \varphi_\ell}{\partial a^*_j}(\bar{a},\bar{a}^*).
\end{equation}

\begin{eg}
\label{eg:self-cross-Kerr}
    For instance, in the case of $N$ modes with self-Kerr nonlinearity of strength $g$ we have
    \begin{equation}
        \hat{\mathcal{H}}(\hat{a},\hat{a}^\dagger)= \sum_{j=1}^N
        \sum_{\ell=1}^N J_{j,\ell} \hat{a}_j^\dagger \hat{a}_\ell
        +\frac{g}{2}\sum_{j=1}^N \hat{a}_j^\dagger\hat{a}_j^\dagger \hat{a}_j\hat{a}_j,
    \end{equation}
    so, in the classical limit, the nonlinear terms arising in the linearization of the dynamical equations read 
    \begin{equation}
        \frac{\partial \varphi_j}{\partial a_\ell}(\bar{a},\bar{a}^*) = 2 |\bar{a}_j|^2 \, \delta_{j,\ell}, \quad \quad \frac{\partial \varphi_j}{\partial a^*_\ell}(\bar{a},\bar{a}^*) =   \bar{a}_j^2 \, \delta_{j,\ell},
    \end{equation}
    where $\delta_{j,\ell}$ indicates the Kronecker delta. Instead, in the case of $N$ modes with cross-Kerr nonlinearity of strength $g$ the Hamiltonian is
    \begin{equation}
    \hat{\mathcal{H}}(\hat{a},\hat{a}^\dagger)= 
    \sum_{j=1}^N \sum_{\ell=1}^N  J_{j,\ell} \hat{a}_j^\dagger \hat{a}_\ell
    +\frac{g}{2}\sum_{j\not = \ell} \hat{a}_j^\dagger 
 \hat{a}_j\hat{a}_\ell^\dagger  \hat{a}_\ell,
\end{equation}
and so in the classical limit we have for each $j \not = \ell$
    \begin{equation}
        \frac{\partial \varphi_j}{\partial a_\ell}(\bar{a},\bar{a}^*) =  \bar{a}_j \bar{a}^*_\ell, \quad \quad \frac{\partial \varphi_j}{\partial a^*_\ell}(\bar{a},\bar{a}^*) = \bar{a}_j \bar{a}_\ell,
    \end{equation}
    that are entries of, respectively, a Hermitian and a symmetric matrix.
\end{eg}

\begin{rem}
\label{rem:bog_transf_group}
The linearized scattering matrix $S(\bar{a},\bar{a}^*)$ is also a Bogoliubov transformation. In fact, assuming $H$ and $\frac{\partial \varphi}{\partial a^*}(\bar{a},\bar{a}^*)$ are invertible matrices, using \eqref{eqn:jacobian} and the block-matrix inverse formula we can write 
\begin{equation}
    W \coloneqq \nabla_{(a,a^*)}F(\bar{a},\bar{a}^*)^{-1}=\begin{pmatrix}
        W_1 & W_2\\
        W_2^* & W_1^*
    \end{pmatrix}
\end{equation}
where $W_1$ and $W_2$ are $N\times N$ matrices 
\begin{equation}
    W_1 = \biggl[\biggl( -iH-ig\frac{\partial \varphi}{\partial a}(\bar{a},\bar{a}^*) \biggl) - g^2 \biggl(\frac{\partial \varphi}{\partial a^*}(\bar{a},\bar{a}^*) \biggr)\biggl( iH^*+ig \biggl(\frac{\partial \varphi}{\partial a}(\bar{a},\bar{a}^*)\biggr)^* \biggl)^{-1}\biggl(\frac{\partial \varphi}{\partial a^*}(\bar{a},\bar{a}^*) \biggr)^*\biggr]^{-1}
\end{equation}
and
\begin{equation}
     W_2 = \biggl[ig \biggl(\frac{\partial \varphi}{\partial a^*}(\bar{a},\bar{a}^*) \biggr)^* -  \biggl( iH^*+ig\biggl(\frac{\partial \varphi}{\partial a}(\bar{a},\bar{a}^*)\biggl)^* \biggl)\biggl(- ig \frac{\partial \varphi}{\partial a^*}(\bar{a},\bar{a}^*) \biggr)^{-1}\biggl( -iH-ig\frac{\partial \varphi}{\partial a}(\bar{a},\bar{a}^*) \biggl)\biggr]^{-1}.
\end{equation}
Therefore $W$ and $S(\bar{a},\bar{a}^*)$ have the form of a Bogoliubov transformation.
\end{rem}

\begin{rem}[Reciprocity]
\label{rem:reciprocity}
Note that, in the linear case $(g=0),$ the linearized scattering matrix $S=\mathbf{I}+\sqrt{\kappa}M^{-1}\sqrt{\kappa}$ no longer depends on the steady state coordinates and has the following symmetries:
    \begin{equation}
    \label{eqn:qs1}
        S^\dagger = \sigma_y S \sigma_y,
    \end{equation}
    and 
    \begin{equation}
        \label{eqn:qs2}
        S^\dagger = \sigma_x S \sigma_x,
    \end{equation}
meaning that the system is reciprocal, i.e. light equally scatters in both directions between two different nodes $j$ and $\ell.$
In fact, if $g=0$:
\begin{equation}
    \begin{pmatrix}
        a_{\mathrm{out}}\\
        a_{\mathrm{out}}^*
    \end{pmatrix}=\underbrace{
    \begin{pmatrix}
       \Tilde{S} & 0\\
        0 & \Tilde{S}^*
    \end{pmatrix} 
    }_S    \begin{pmatrix}
        a_{\mathrm{in}}\\
        a_{\mathrm{in}}^*
    \end{pmatrix},
\end{equation}
where $\Tilde{S}= \mathbf{I}+i\sqrt{\kappa}H^{-1}\sqrt{\kappa}$ is the S matrix that one usually defines in the linear case at the zero frequency, and reciprocity, i.e. $\Tilde{S}=\Tilde{S}^\mathsf{T}$, implies any of \eqref{eqn:qs1} and \eqref{eqn:qs2} and vice-versa. In other words, the two symmetries above both capture the physical notion of reciprocity and are well posed in a nonlinear regime, where one has to consider both $a$ and $a^*$ in linearization.

\end{rem}
\begin{rem}[Non-holomorphic nonlinearity]
\label{rem:nonholomorphic}
    Indeed, if $\varphi(a,a^*)$ is an holomorphic function, i.e. if the Cauchy-Riemann equations $\frac{\partial \varphi}{\partial a^*}(a,a^*)=0$ hold, then the system would be reciprocal. Nevertheless, one can show that this is not a physical scenario, meaning that every optical nonlinearity also depends on $a^*.$ This follows from having monomial terms in the Hamiltonian containing at least two creation operators, which lead to monomials containing complex conjugate terms when computing \eqref{eqn:Poisson_b} (e.g. $\hat a_j^\dagger \hat a_j^\dagger \hat a_j \hat a_j$ leading to $2\, a_j^* a_j^2$).
\end{rem}
Therefore, if the system is nonlinear, such symmetry (reciprocity) is broken, nevertheless, for small values of $g,$ we can still observe little deviations from reciprocity. More rigorously:
\begin{prop}
\label{prop:quasi-rec}
    The linearized scattering matrix $S(\bar{a},\bar{a}^*) = \mathbf{I} + \sqrt{\kappa}[\nabla_{(a,a^*)}F(\bar{a},\bar{a}^*)]^{-1}\sqrt{\kappa}$ has the following quasi-symmetry 
    \begin{equation}
    \label{eqn:quasi-symm}
        S^{\dagger}(\bar{a},\bar{a}^*) = \sigma_y S(\bar{a},\bar{a}^*) \sigma_y + \mathcal{O}(g).
    \end{equation}
\end{prop}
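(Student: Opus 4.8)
The plan is to argue perturbatively in the nonlinearity strength $g$: at $g=0$ the linearized scattering matrix coincides with the reciprocal linear scattering matrix, so I would show that $S^\dagger$ and $\sigma_y S\sigma_y$ both deviate from this common value only at order $g$. Write $M\coloneqq\nabla_{(a,a^*)}F(\bar a,\bar a^*)$, so that $S(\bar a,\bar a^*)=\mathbf I_{2N}+\sqrt{\kappa}\,M^{-1}\sqrt{\kappa}$ by Lemma~\ref{lem:smatrix} (with $\Gamma=\mathbf I$, $\Pi=\Sigma=\sqrt{\kappa}$). In the $2N\times 2N$ block form $\sqrt{\kappa}=D\oplus D$ with $D=\diag(\sqrt{\kappa_j})$ real, so $\sqrt{\kappa}$ is Hermitian and commutes with $\sigma_y$; hence $S^\dagger=\mathbf I+\sqrt{\kappa}\,(M^{-1})^\dagger\sqrt{\kappa}$ and $\sigma_y S\sigma_y=\mathbf I+\sqrt{\kappa}\,\sigma_y M^{-1}\sigma_y\,\sqrt{\kappa}$, so it suffices to prove $(M^{-1})^\dagger=\sigma_y M^{-1}\sigma_y+\mathcal{O}(g)$.

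Next I would decompose $M=M_0+gM_1$, where $M_0=\bigl(\begin{smallmatrix}-iH&0\\0&iH^*\end{smallmatrix}\bigr)$ and $gM_1$ collects the nonlinear contributions in Eq.~\eqref{eqn:jacobian}, built from $\frac{\partial\varphi}{\partial a}(\bar a,\bar a^*)$ (diagonal blocks) and $\frac{\partial\varphi}{\partial a^*}(\bar a,\bar a^*)$ (off-diagonal blocks). Since the couplings $J_{j,\ell}$ are real symmetric and the detuning/loss part of $H$ is diagonal, $H$ is complex symmetric, $H^{\mathsf T}=H$, and a one-line computation gives the \emph{exact} identity $M_0^\dagger=\sigma_y M_0\sigma_y$ (equivalently the linear reciprocity of Remark~\ref{rem:reciprocity}). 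Therefore $M^\dagger-\sigma_y M\sigma_y=g\,(M_1^\dagger-\sigma_y M_1\sigma_y)\eqqcolon gR$, again an exact identity. Then, using $\sigma_y^2=\mathbf I$ (so $(\sigma_y M\sigma_y)^{-1}=\sigma_y M^{-1}\sigma_y$) and $(M^{-1})^\dagger=(M^\dagger)^{-1}$, the resolvent identity yields $(M^{-1})^\dagger-\sigma_y M^{-1}\sigma_y=(M^\dagger)^{-1}-(\sigma_y M\sigma_y)^{-1}=-g\,(M^\dagger)^{-1}R\,(\sigma_y M\sigma_y)^{-1}$.

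This last expression is $\mathcal{O}(g)$ as soon as $\|M^{-1}\|$ and $\|R\|\le\|M_1\|+\|\sigma_y M_1\sigma_y\|$ stay bounded for small $g$. The former holds because the dissipation makes $-iH$ invertible (its Hermitian part is $-\tfrac{1}{2}\diag(\kappa_j+\kappa_j')\prec0$), hence $M$ invertible for small $g$, and in any case a stable steady state is assumed to exist; the latter holds because $\frac{\partial\varphi}{\partial a}(\bar a,\bar a^*)$ and $\frac{\partial\varphi}{\partial a^*}(\bar a,\bar a^*)$ are finite along the steady-state branch $g\mapsto\bar a(g)$, which is continuous by the implicit function theorem (using $M_0$ invertible at $g=0$). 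Conjugating by the fixed invertible $\sqrt{\kappa}$ and adding $\mathbf I$ then gives $S^\dagger=\sigma_y S\sigma_y+\mathcal{O}(g)$, which is the claim.

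I expect the only genuinely laborious step — and hence the main obstacle — to be the \emph{quantitative} version: upgrading this schematic bound to the explicit Frobenius estimate \eqref{eq:big-O-terms}. For that I would re-expand $(M^\dagger)^{-1}$ and $(\sigma_y M\sigma_y)^{-1}$ as Neumann series in $g$ about $M_0^{-1}=(iH^{-1})\oplus(-i(H^*)^{-1})$, use submultiplicativity of $\|\cdot\|_F$ together with the identities $\frac{\partial\varphi^*}{\partial a}=\bigl(\frac{\partial\varphi}{\partial a^*}\bigr)^*$ and $\frac{\partial\varphi^*}{\partial a^*}=\bigl(\frac{\partial\varphi}{\partial a}\bigr)^*$ to bound $\|M_1\|_F$ by $\bigl\|\frac{\partial\varphi}{\partial a}\bigr\|_F+\bigl\|\frac{\partial\varphi}{\partial a^*}\bigr\|_F$, and sum the geometric remainder; this is exactly where the convergence condition $2\sqrt2\,g\|H^{-1}\|_F\bigl(\|\tfrac{\partial\varphi}{\partial a}\|_F+\|\tfrac{\partial\varphi}{\partial a^*}\|_F\bigr)<1$ (the quasi-reciprocity regime) and the denominator of \eqref{eq:big-O-terms} appear. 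The twin statement with $\sigma_x$ in place of $\sigma_y$ follows verbatim, since $H^{\mathsf T}=H$ likewise gives $M_0^\dagger=\sigma_x M_0\sigma_x$.
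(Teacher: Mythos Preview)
Your proposal is correct and follows essentially the same route as the paper: both reduce the claim to $(M^{-1})^\dagger=\sigma_y M^{-1}\sigma_y+\mathcal{O}(g)$ via the fact that $\sqrt{\kappa}$ is real diagonal and commutes with $\sigma_y$, both identify the exact symmetry $M_0^\dagger=\sigma_y M_0\sigma_y$ of the linear part coming from $H^{\mathsf T}=H$, and both then control the order-$g$ remainder. The only cosmetic difference is that you invoke the resolvent identity $(M^\dagger)^{-1}-(\sigma_y M\sigma_y)^{-1}=-g(M^\dagger)^{-1}R(\sigma_y M\sigma_y)^{-1}$ for the qualitative statement, whereas the paper writes $M^\dagger=\Xi-g\Lambda$, $\sigma_y M\sigma_y=\Xi-g\Theta$ and expands both inverses as Neumann series in $g$ around the common $\Xi^{-1}$ before subtracting; for the quantitative Frobenius bound \eqref{eq:big-O-terms} you correctly anticipate having to pass to the same Neumann expansion anyway, so the two arguments coincide there.
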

\begin{proof}
    The matrix $M \coloneqq \nabla_{(a,a^*)}F(\bar{a},\bar{a}^*)$ can be rearranged as
    \begin{equation}
        M = \begin{pmatrix}
            -iH & 0\\
            0 & i H^*
        \end{pmatrix} -ig \begin{pmatrix}
        \frac{\partial \varphi}{\partial a }(\bar{a},\bar{a}^*) & \frac{\partial \varphi}{\partial a^* }(\bar{a},\bar{a}^*)\\
        -\left(\frac{\partial \varphi}{\partial a^* }(\bar{a},\bar{a}^*)\right)^* & - \left( \frac{\partial \varphi}{\partial a }(\bar{a},\bar{a}^*)\right)^*
    \end{pmatrix}.
    \end{equation}
    Recalling that $H$ is a symmetric matrix, we have
   \begin{equation}
   \label{eq:Mdag}
        M^\dagger = \begin{pmatrix}
            iH^* & 0\\
            0 & -i H
        \end{pmatrix} +ig \begin{pmatrix}
        \frac{\partial \varphi}{\partial a }(\bar{a},\bar{a}^*) & \frac{\partial \varphi}{\partial a^* }(\bar{a},\bar{a}^*)\\
        -\left(\frac{\partial \varphi}{\partial a^* }(\bar{a},\bar{a}^*)\right)^* & - \left( \frac{\partial \varphi}{\partial a }(\bar{a},\bar{a}^*)\right)^*
    \end{pmatrix}^\dagger,
    \end{equation}
    and
    \begin{equation}
    \label{eq:syMsy}
        \sigma_y M \sigma_y = \begin{pmatrix}
            iH^* & 0\\
            0 & -i H
        \end{pmatrix}-ig\begin{pmatrix}
        - \left( \frac{\partial \varphi}{\partial a }(\bar{a},\bar{a}^*)\right)^* & \left(\frac{\partial \varphi}{\partial a^* }(\bar{a},\bar{a}^*)\right)^*\\
         -\frac{\partial \varphi}{\partial a^* }(\bar{a},\bar{a}^*)& 
        \frac{\partial \varphi}{\partial a }(\bar{a},\bar{a}^*) 
    \end{pmatrix}.
    \end{equation}
From \eqref{eq:Mdag} and \eqref{eq:syMsy}, we can also write $M^\dagger = \Xi - g \Lambda$ and $ \sigma_y M \sigma_y=\Xi - g \Theta$, and using the Woodbory matrix identity we conclude
\begin{align}
(M^{-1})^\dagger -  \sigma_y M^{-1} \sigma_y &= (M^\dagger)^{-1} - (\sigma_y M \sigma_y)^{-1}= \sum_{m=0}^\infty \left(g \Xi^{-1}\Lambda \right)^m \Xi^{-1} -\sum_{m=0}^\infty \left(g \Xi^{-1}\Theta \right)^m \Xi^{-1}  \\
&= \sum_{m=1}^\infty g^m \left( (\Xi^{-1}\Lambda)^m - (\Xi^{-1}\Theta)^m \right)\Xi^{-1}=\mathcal{O}(g),
\label{eq:bound1}
\end{align}
where the first two power series above converge if the spectral radius of $g\Xi^{-1}\Lambda$ and $g\Xi^{-1}\Theta$ are less than one, which is true for $g$ small enough as we assume $H$ to be non-singular. From \eqref{eq:bound1} we can write
\begin{align}
    S(\bar{a},\bar{a}^*)^\dagger - \sigma_y S(\bar{a},\bar{a}^*)\sigma_y &= \Bigl( \mathbf{I}+\sqrt{\kappa}M^{-1}\sqrt{\kappa}\Bigr)^\dagger - \sigma_y\Bigl(\mathbf{I}+\sqrt{\kappa}M^{-1}\sqrt{\kappa}\Bigr)\sigma_y\\
    &= \Bigl( \mathbf{I}+\sqrt{\kappa}(M^{-1})^\dagger \sqrt{\kappa}\Bigr) - \Bigl(\mathbf{I}+\sqrt{\kappa}\sigma_y M^{-1}\sigma_y\sqrt{\kappa}\Bigr)\\
    &= \sqrt{\kappa}\sum_{m=1}^\infty g^m \left( (\Xi^{-1}\Lambda)^m - (\Xi^{-1}\Theta)^m \right)\Xi^{-1}\sqrt{\kappa}=\mathcal{O}(g)
\end{align}
using the fact that $\kappa$ is the diagonal matrix obtained repeating losses $\kappa_1, \dots \kappa_N$ twice and thus it commutes with $\sigma_y$.
Also, since
\begin{align}
    \| \Lambda \|, \| \Theta\| &\leq \left( \left\| \frac{\partial \varphi}{\partial a}(\bar{a},\bar{a}^*)\right\| + \left\| \frac{\partial \varphi^*}{\partial a}(\bar{a},\bar{a}^*)\right\|  + \left\| \frac{\partial \varphi}{\partial a^*}(\bar{a},\bar{a}^*)\right\| + \left\| \frac{\partial \varphi^*}{\partial a^*}(\bar{a},\bar{a}^*)\right\| \right)\\
    &=2\left( \left\| \frac{\partial \varphi}{\partial a}(\bar{a},\bar{a}^*)\right\| + \left\| \frac{\partial \varphi}{\partial a^*}(\bar{a},\bar{a}^*)\right\| \right),
\end{align}
for sub-multiplicative matrix norms we have
\begin{align}
    \|  S(\bar{a},\bar{a}^*)^\dagger - \sigma_y S(\bar{a},\bar{a}^*)\sigma_y \| &\leq \|\sqrt{\kappa}\|^2 \cdot \| \Xi^{-1} \| \sum_{m=1}^\infty g^m \Bigl( \|\Xi^{-1}\Lambda\|^m +  \|\Xi^{-1}\Theta\|^m \Bigr)\\
    &\leq \|\sqrt{\kappa}\|^2 \cdot \| \Xi^{-1} \| \sum_{m=1}^\infty g^m \|\Xi^{-1}\|^m\Bigl( \|\Lambda\|^m + \| \Theta\|^m \Bigr)\\
    &\leq 2 \|\sqrt{\kappa}\|^2 \cdot \| \Xi^{-1} \| \sum_{m=1}^\infty \Bigl(2g \| \Xi^{-1} \|\Bigr)^m \left( \left\| \frac{\partial \varphi}{\partial a}(\bar{a},\bar{a}^*)\right\| + \left\| \frac{\partial \varphi}{\partial a^*}(\bar{a},\bar{a}^*)\right\| \right)^m\\
    &= 4 \|\sqrt{\kappa}\|^2 \cdot \| \Xi^{-1} \| \frac{g\|\Xi^{-1}\| \left( \left\| \frac{\partial \varphi}{\partial a}(\bar{a},\bar{a}^*)\right\| + \left\| \frac{\partial \varphi}{\partial a^*}(\bar{a},\bar{a}^*)\right\| \right)}{1- 2g\|\Xi^{-1}\| \left( \left\| \frac{\partial \varphi}{\partial a}(\bar{a},\bar{a}^*)\right\| + \left\| \frac{\partial \varphi}{\partial a^*}(\bar{a},\bar{a}^*)\right\| \right)}.
\end{align}
In particular, for induced norms we have $\| \Xi^{-1} \|= \| H^{-1} \|$ and so
\begin{equation}
    \|  S(\bar{a},\bar{a}^*)^\dagger - \sigma_y S(\bar{a},\bar{a}^*)\sigma_y \| \leq 4 \|\sqrt{\kappa}\|^2 \cdot \| H^{-1} \| \frac{g\|H^{-1}\| \left( \left\| \frac{\partial \varphi}{\partial a}(\bar{a},\bar{a}^*)\right\| + \left\| \frac{\partial \varphi}{\partial a^*}(\bar{a},\bar{a}^*)\right\| \right)}{1- 2g\|H^{-1}\| \left( \left\| \frac{\partial \varphi}{\partial a}(\bar{a},\bar{a}^*)\right\| + \left\| \frac{\partial \varphi}{\partial a^*}(\bar{a},\bar{a}^*)\right\| \right)}.
\end{equation}
Whereas, if we consider the Frobenius norm
\begin{equation}
    \| \Xi^{-1} \|_F^2 = \Tr((\Xi^{-1})^\dagger \Xi^{-1})= 2 \| H^{-1} \|_F^2
\end{equation}
we have
\begin{equation}
    \|  S(\bar{a},\bar{a}^*)^\dagger - \sigma_y S(\bar{a},\bar{a}^*)\sigma_y \|_F \leq 8\|\sqrt{\kappa}\|_F^2 \cdot \| H^{-1} \|_F \frac{g\|H^{-1}\|_F \left( \left\| \frac{\partial \varphi}{\partial a}(\bar{a},\bar{a}^*)\right\|_F + \left\| \frac{\partial \varphi}{\partial a^*}(\bar{a},\bar{a}^*)\right\|_F \right)}{1- 2 \sqrt{2}g\|H^{-1}\|_F \left( \left\| \frac{\partial \varphi}{\partial a}(\bar{a},\bar{a}^*)\right\|_F + \left\| \frac{\partial \varphi}{\partial a^*}(\bar{a},\bar{a}^*)\right\|_F \right)}.
\end{equation}
As one would expect, in the case of self/cross-Kerr nonlinearities, assuming the steady state components $|\bar{a}_j|$ are of the same order of a reference input amplitude $|a_\mathrm{in}^0|$, then $\left\| \frac{\partial \varphi}{\partial a}(\bar{a},\bar{a}^*)\right\|$ and $\left\| \frac{\partial \varphi}{\partial a^*}(\bar{a},\bar{a}^*)\right\|$ in the equation above are quadratic in $|a_\mathrm{in}^0|$.
\end{proof}
Similarly, also the following other quasi-symmetry correspondent to \eqref{eqn:qs2} holds
\begin{prop}
\label{prop:quasi-rec-x}
    The linearized scattering matrix $S(\bar{a},\bar{a}^*) = \mathbf{I} + \sqrt{\kappa}[\nabla_{(a,a^*)}F(\bar{a},\bar{a}^*)]^{-1}\sqrt{\kappa}$ has the following quasi-symmetry 
    \begin{equation}
    \label{eqn:quasi-symm-x}
        S^{\dagger}(\bar{a},\bar{a}^*) = \sigma_x S(\bar{a},\bar{a}^*) \sigma_x + \mathcal{O}(g).
    \end{equation}
\end{prop}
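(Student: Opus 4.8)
The plan is to adapt the proof of Proposition~\ref{prop:quasi-rec} essentially verbatim, with $\sigma_y$ replaced by $\sigma_x$; the only genuinely new point is to check how conjugation by $\sigma_x$ — which swaps the two $N\times N$ blocks — acts on the Bogoliubov-type Jacobian, and to verify that it leaves the $g$-independent part invariant.

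First I would write the Jacobian as $M \coloneqq \nabla_{(a,a^*)}F(\bar a,\bar a^*) = M_0 - ig N$, where $M_0 = \mathrm{diag}(-iH,\,iH^*)$ is the $g=0$ part and $N$ collects the Wirtinger derivatives $\partial\varphi/\partial a$, $\partial\varphi/\partial a^*$ and their conjugates evaluated at the steady state. The key observation is that the $g$-independent parts of $M^\dagger$ and of $\sigma_x M \sigma_x$ coincide: since $H$ is symmetric, $H^\dagger = H^*$, so $M_0^\dagger = \mathrm{diag}(iH^*,\,-iH)$; on the other hand conjugating $M_0$ by $\sigma_x$ merely swaps its two diagonal blocks, which again yields $\mathrm{diag}(iH^*,\,-iH)$. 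Call this common matrix $\Xi$. Hence $M^\dagger = \Xi - g\Lambda$ and $\sigma_x M \sigma_x = \Xi - g\Theta$ for matrices $\Lambda,\Theta$ built entirely from the Wirtinger derivatives of $\varphi$, which are therefore uniformly bounded.

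Next, since $H$ is non-singular, $\Xi$ is invertible, and for $g$ small enough the Neumann (Woodbury) expansion gives $(M^\dagger)^{-1} = \sum_{m\ge 0}(g\Xi^{-1}\Lambda)^m\Xi^{-1}$ and likewise $(\sigma_x M\sigma_x)^{-1} = \sum_{m\ge 0}(g\Xi^{-1}\Theta)^m\Xi^{-1}$. Subtracting, and using $\sigma_x^2 = \mathbf{I}$ together with $(M^\dagger)^{-1} = (M^{-1})^\dagger$, I obtain $(M^{-1})^\dagger - \sigma_x M^{-1}\sigma_x = \sum_{m\ge 1} g^m\bigl[(\Xi^{-1}\Lambda)^m - (\Xi^{-1}\Theta)^m\bigr]\Xi^{-1} = \mathcal{O}(g)$. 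Finally I would conclude by inserting this into $S(\bar a,\bar a^*) = \mathbf{I} + \sqrt{\kappa}\,M^{-1}\sqrt{\kappa}$ and noting that $\kappa = \mathrm{diag}(\kappa_1,\dots,\kappa_N,\kappa_1,\dots,\kappa_N)$ commutes with $\sigma_x$, since it repeats the same block twice; therefore $S^\dagger - \sigma_x S\sigma_x = \sqrt{\kappa}\bigl[(M^{-1})^\dagger - \sigma_x M^{-1}\sigma_x\bigr]\sqrt{\kappa} = \mathcal{O}(g)$. If a quantitative estimate is desired, the same submultiplicative-norm bounds as in Proposition~\ref{prop:quasi-rec} carry over and give the analogous Frobenius-norm inequality.

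I do not expect a real obstacle: the argument is a routine transplant of the $\sigma_y$ case. The one point requiring a little care is the block bookkeeping — conjugation by $\sigma_x$ does not preserve the internal structure of the nonlinear part $N$ block by block, but this is harmless because those terms already sit at order $g$; what must be checked explicitly is only that the $g$-independent block-diagonal part $\Xi$ is genuinely invariant under $\sigma_x$-conjugation, which is precisely where the symmetry $H^{\mathsf{T}} = H$ enters.
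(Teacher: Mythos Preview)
Your proposal is correct and follows essentially the same approach as the paper: the paper's proof simply states that the argument is analogous to that of Proposition~\ref{prop:quasi-rec}, after displaying the explicit form of $\sigma_x M \sigma_x$ (whose $g$-independent part is indeed $\mathrm{diag}(iH^*,-iH)$, matching $M_0^\dagger$ as you observed). Your treatment is in fact more detailed than the paper's, which leaves the Neumann-series step and the commutation of $\sqrt{\kappa}$ with $\sigma_x$ implicit.
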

\begin{proof}
    Follows from an analogue argument as above using
\begin{equation}
\label{eqn:sxWsx}
    \sigma_x M \sigma_x = \begin{pmatrix}
            iH^* & 0\\
            0 & -i H
        \end{pmatrix} -ig \begin{pmatrix}
         - \left( \frac{\partial \varphi}{\partial a }(\bar{a},\bar{a}^*)\right)^* &         -\left(\frac{\partial \varphi}{\partial a^* }(\bar{a},\bar{a}^*)\right)^* \\
        \frac{\partial \varphi}{\partial a^* }(\bar{a},\bar{a}^*)& 
        \frac{\partial \varphi}{\partial a }(\bar{a},\bar{a}^*)
    \end{pmatrix}.
\end{equation}
\end{proof}

\begin{rem}[Sparsity of nonlinearity and quasi-symmetry]
Recalling the above symmetries, combining \eqref{eq:Mdag} and \eqref{eq:syMsy} one obtains
\begin{equation}
    M^\dagger - \sigma_y M \sigma_y  = - 2 g \,\mathfrak{Im}\begin{pmatrix}
        \frac{\partial \varphi}{\partial a} (\bar{a},\bar{a}^*) & -\frac{\partial \varphi}{\partial a^*}(\bar{a},\bar{a}^*) \\
        -\frac{\partial \varphi}{\partial a^*}(\bar{a},\bar{a}^*) &\frac{\partial \varphi}{\partial a} (\bar{a},\bar{a}^*)
    \end{pmatrix},
\end{equation}
which for a network with only self-nonlinearities is sparse, as in this case $ \mathfrak{Im}\frac{\partial \varphi}{\partial a} (\bar{a},\bar{a}^*) = 0$ and $\frac{\partial \varphi}{\partial a^*}(\bar{a},\bar{a}^*)$ is diagonal (see Example \ref{eg:self-cross-Kerr} for self-Kerr). This, in practice, contributes in having a better gradient approximation as $N$ increases, as displayed in Fig.~\ref{fig:XOR}~\textbf{b}.
\end{rem}

\section{Gradient approximation for general systems with linear input-output relations}
\label{app:gradient-deriv}
In this section, we derive a general gradient approximation formula for applying Scattering Backpropagation to a system of differential equations
\begin{equation}
\label{eq:de1}
    \dot{\xi}=F_\theta(\xi)-\Pi \, \xi_\mathrm{in},
\end{equation}
with linear input-output relations
\begin{equation}
    \xi_\mathrm{out}=\Gamma \xi_\mathrm{in}+ \Sigma \xi,
\end{equation}
with invertible matrices $\Pi$ and $\Sigma$. In a supervised learning setting, we aim at efficiently estimating the derivative $\frac{\partial C}{\partial \theta}(\mathrm{y}, \mathrm{y_{target}})$ of the cost function $C(\mathrm{y}, \mathrm{y_{target}})$ with respect to the trainable parameters $\theta$. The latter, measures the deviation of the obtained neuromorphic output $\mathrm{y}$ (defined via the system output $\xi_\mathrm{out})$ from the expected target output $\mathrm{y_{target}}$ correspondent to a fixed input $\mathrm{x}$ encoded in $\xi_\mathrm{in}$.

As we will discuss, our gradient approximation depends on a quasi-symmetry of the (inverse of the) Jacobian of \eqref{eq:de1} at a steady state $\bar{\xi}$ (i.e. of the Green's function):
\begin{equation}
\label{eq:generalQS}
    (\nabla_\xi F_\theta (\bar{\xi})^{-1})^\dagger = U_1 \nabla_\xi F_\theta(\bar{\xi})^{-1}U_2+\mathcal{O}(g),
\end{equation}
where $U_1$ and $U_2$ are constant matrices and $g$ can be a non-trainable parameter of the system \eqref{eq:de1}. For instance, in our optical example \eqref{eqn:system2} $g$ is the nonlinearity strength and \eqref{eq:generalQS} is related to the system approximate reciprocity (broken by the optical nonlinearity). In this case, one usually consider either $U_1=U_2=\sigma_x$ or $U_1=U_2=\sigma_y.$ 

\begin{lem}
\label{lem:ift}
   In a system 
        $\dot{\xi} = F_\theta(\xi)-\Pi\, \xi_\mathrm{in}$ with input--output relations  $\xi_\mathrm{out} = \Gamma \xi_\mathrm{in} + \Sigma \,  \xi$, at the steady state $\bar{\xi}$ we have
\begin{equation}
    \frac{\partial \xi_\mathrm{out}}{\partial  \theta} (\bar{\xi}, \theta) = - \Sigma (\nabla_\xi F_\theta(\bar{\xi}))^{-1}\frac{\partial F_\theta}{\partial \theta}(\bar{\xi}).
\end{equation}
\end{lem}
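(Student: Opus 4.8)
The plan is to differentiate the steady-state condition with respect to $\theta$ and then propagate the result through the linear input--output relation. The steady state $\bar{\xi}=\bar{\xi}(\theta)$ is, by definition, the solution of $F_\theta(\bar{\xi})=\Pi\,\xi_\mathrm{in}$, where the input field $\xi_\mathrm{in}$ encodes the task input $\mathrm{x}$ and is held fixed while $\theta$ is varied. So the first step is to view $G(\xi,\theta)\coloneqq F_\theta(\xi)-\Pi\,\xi_\mathrm{in}$ as a smooth map whose zero set defines $\bar{\xi}(\theta)$ implicitly.

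Next I would invoke the implicit function theorem. Since $\nabla_\xi F_\theta(\bar{\xi})=\nabla_\xi G(\bar{\xi},\theta)$ is assumed invertible at the steady state --- this is exactly the non-singularity hypothesis used throughout (and is implied by asymptotic stability of $\bar{\xi}$, i.e. $0\notin\mathrm{spec}\,\nabla_\xi F_\theta(\bar{\xi})$) --- the map $\theta\mapsto\bar{\xi}(\theta)$ is well-defined and differentiable on a neighborhood. Differentiating the identity $F_\theta(\bar{\xi}(\theta))=\Pi\,\xi_\mathrm{in}$ with respect to $\theta$ and applying the chain rule gives
\begin{equation}
    \nabla_\xi F_\theta(\bar{\xi})\,\frac{\partial\bar{\xi}}{\partial\theta}+\frac{\partial F_\theta}{\partial\theta}(\bar{\xi})=0,
\end{equation}
hence $\frac{\partial\bar{\xi}}{\partial\theta}=-\bigl(\nabla_\xi F_\theta(\bar{\xi})\bigr)^{-1}\frac{\partial F_\theta}{\partial\theta}(\bar{\xi})$, where $\partial F_\theta/\partial\theta$ denotes the partial derivative of the vector field in its explicit parameter dependence, evaluated at $\bar{\xi}$.

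Finally I would substitute this into the input--output relation: differentiating $\xi_\mathrm{out}=\Gamma\,\xi_\mathrm{in}+\Sigma\,\xi$ at the steady state with respect to $\theta$, and using that $\xi_\mathrm{in}$ does not depend on $\theta$, yields $\frac{\partial\xi_\mathrm{out}}{\partial\theta}=\Sigma\,\frac{\partial\bar{\xi}}{\partial\theta}=-\Sigma\bigl(\nabla_\xi F_\theta(\bar{\xi})\bigr)^{-1}\frac{\partial F_\theta}{\partial\theta}(\bar{\xi})$, which is the claim. The computation itself is routine; the only point that genuinely deserves care is spelling out the hypotheses that make the implicit function theorem applicable --- smoothness of $F_\theta$ in both arguments and invertibility of the Jacobian at $\bar{\xi}$ --- and emphasizing that $\partial F_\theta/\partial\theta$ is a partial (not total) derivative, since this is precisely what makes the resulting gradient formula physically extractable later.
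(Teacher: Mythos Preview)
Your proof is correct and follows essentially the same approach as the paper: apply the implicit function theorem to the steady-state equation $F_\theta(\bar\xi)=\Pi\,\xi_\mathrm{in}$, differentiate in $\theta$ via the chain rule to obtain $\partial\bar\xi/\partial\theta=-(\nabla_\xi F_\theta(\bar\xi))^{-1}\partial F_\theta/\partial\theta$, and then push through the linear input--output relation using that $\xi_\mathrm{in}$ is $\theta$-independent. If anything, you are slightly more explicit than the paper about the hypotheses (invertibility of the Jacobian, smoothness) needed for the implicit function theorem.
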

\begin{proof}
The steady state equation reads $0=F_\theta(\bar{\xi})-\Pi\, \xi_\mathrm{in}$, under mild conditions on the regularity of $F_\theta$, by the implicit function theorem there exists a map $\theta \mapsto \bar{\xi}(\theta)$ that locally satisfies such equation. If we then differentiate the equation we get
\begin{equation}
    0=\frac{d}{d \theta}F_\theta(\bar{\xi}(\theta))=\frac{\partial F_\theta}{\partial \theta}(\bar{\xi}(\theta))+ \nabla_\xi F_\theta(\bar{\xi}(\theta)) \frac{\partial \bar{\xi}}{\partial \theta}(\theta).
\end{equation}
We conclude by solving for $\frac{\partial \bar{\xi}}{\partial \theta}$ and using the input--output relations.
\end{proof}
Note that in our optical case the last result reduces to
\begin{equation}
    \frac{\partial \xi_\mathrm{out}}{\partial  \theta} (\bar{\xi}, \theta) = (\mathbf{I} - S_\theta(\bar{\xi}, \theta) )\sqrt{\kappa^{-1}} \frac{\partial F_\theta}{\partial  \theta }(\bar{\xi}, \theta),
\end{equation}
where $S_\theta(\bar{\xi})=\mathbf{I} + \sqrt{\kappa} \,\nabla_\xi F_\theta(\bar{\xi})^{-1}\sqrt{\kappa}.$

\begin{theo}[Gradient approximation] 
\label{theo:gradient}
For a system evolving according to
$\dot{\xi} = F_\theta(\xi)-\Pi\, \xi_\mathrm{in}$ with linear input--output relations  $\xi_\mathrm{out} = \Gamma \xi_\mathrm{in} + \Sigma \,  \xi$ such that $\Pi$ and $\Sigma$ are invertible, the expression for the gradient of the cost function $C(\mathrm{y},\mathrm{y_{target}})$ in presence of a quasi-symmetry
\begin{equation}    
(\nabla_\xi F_\theta (\bar{\xi})^{-1})^\dagger = U_1 \nabla_\xi F_\theta(\bar{\xi})^{-1}U_2+\mathcal{O}(g),
\end{equation}
can be expressed as
\begin{align}
    \frac{\partial C}{\partial \theta}(\mathrm{y},\mathrm{y_{target}})
        &= - \left(\frac{\partial F_\theta}{\partial  \theta }(\bar{\xi})\right)^\dagger U_1 \Sigma^{-1} \; \frac{\delta \xi_\mathrm{out} - \Gamma \, \delta \xi_\mathrm{in}}{\beta} + \mathcal{O}(g,\beta),
\end{align}
where
\begin{equation}
\label{eq:general_error_signal}
    \delta \xi_\mathrm{in} \coloneqq \beta \Pi^{-1} U_2 \Sigma^\dagger \frac{\partial C}{\partial \xi^*_\mathrm{out}}(\mathrm{y},\mathrm{y}_\mathrm{target}).
\end{equation}
\end{theo}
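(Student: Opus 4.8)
The plan is to follow the same three-ingredient argument used in the Methods for the special case $\Gamma=\mathbf{I}_{2N}$, $\Pi=\Sigma=\sqrt{\kappa}$, $U_1=U_2=U$: the chain rule, the implicit-function-theorem formula of Lemma~\ref{lem:ift}, and the linearized scattering matrix of Lemma~\ref{lem:smatrix}, now carried out with the general matrices $\Gamma,\Pi,\Sigma$ in place and keeping careful track of the $\mathcal{O}(g)$ and $\mathcal{O}(\beta)$ remainders.

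First I would differentiate the cost via the chain rule, $\frac{\partial C}{\partial\theta} = \left(\frac{\partial C}{\partial\xi_\mathrm{out}}\right)^{\mathsf{T}}\frac{\partial\xi_\mathrm{out}}{\partial\theta}$, and then --- using that $C$ and $\theta$ are real, so this expression is a real scalar equal to its own adjoint --- rewrite it as $\frac{\partial C}{\partial\theta} = \left(\frac{\partial\xi_\mathrm{out}}{\partial\theta}\right)^{\dagger}\left(\frac{\partial C}{\partial\xi_\mathrm{out}}\right)^{*} = \left(\frac{\partial\xi_\mathrm{out}}{\partial\theta}\right)^{\dagger}\frac{\partial C}{\partial\xi^{*}_\mathrm{out}}$. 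This adjoint move is exactly what allows a single feedback experiment to furnish all parameter gradients at once. Next I substitute Lemma~\ref{lem:ift}, $\frac{\partial\xi_\mathrm{out}}{\partial\theta} = -\Sigma(\nabla_\xi F_\theta(\bar\xi))^{-1}\frac{\partial F_\theta}{\partial\theta}$, to obtain $\frac{\partial C}{\partial\theta} = -\left(\frac{\partial F_\theta}{\partial\theta}\right)^{\dagger}\bigl((\nabla_\xi F_\theta(\bar\xi))^{-1}\bigr)^{\dagger}\Sigma^{\dagger}\frac{\partial C}{\partial\xi^{*}_\mathrm{out}}$, and then invoke the quasi-symmetry $\bigl((\nabla_\xi F_\theta(\bar\xi))^{-1}\bigr)^{\dagger} = U_1(\nabla_\xi F_\theta(\bar\xi))^{-1}U_2 + \mathcal{O}(g)$.

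The remaining steps are pure linear algebra. By the definition \eqref{eq:general_error_signal} of the error signal, $U_2\Sigma^{\dagger}\frac{\partial C}{\partial\xi^{*}_\mathrm{out}} = \beta^{-1}\Pi\,\delta\xi_\mathrm{in}$, so after applying the quasi-symmetry the gradient reads $\frac{\partial C}{\partial\theta} = -\left(\frac{\partial F_\theta}{\partial\theta}\right)^{\dagger}U_1(\nabla_\xi F_\theta(\bar\xi))^{-1}\Pi\,\frac{\delta\xi_\mathrm{in}}{\beta} + \mathcal{O}(g)$. I would then use $S_\theta(\bar\xi) = \Gamma + \Sigma(\nabla_\xi F_\theta(\bar\xi))^{-1}\Pi$ (here invertibility of $\Sigma$ is needed) to replace $(\nabla_\xi F_\theta(\bar\xi))^{-1}\Pi = \Sigma^{-1}(S_\theta(\bar\xi)-\Gamma)$, and finally Lemma~\ref{lem:smatrix}, $\delta\xi_\mathrm{out} = S_\theta(\bar\xi)\,\delta\xi_\mathrm{in} + \mathcal{O}(\delta\xi_\mathrm{in}^{2})$, together with the observation that $\delta\xi_\mathrm{in}$ is linear in $\beta$ so the quadratic remainder is $\mathcal{O}(\beta^{2})$; hence $(S_\theta(\bar\xi)-\Gamma)\delta\xi_\mathrm{in} = \delta\xi_\mathrm{out} - \Gamma\,\delta\xi_\mathrm{in} + \mathcal{O}(\beta^{2})$, and dividing by $\beta$ leaves precisely the advertised formula with error $\mathcal{O}(g,\beta)$.

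The main obstacle is the bookkeeping rather than any conceptual difficulty. Two points need care: first, the adjoint/Wirtinger manipulation in the opening step must be justified carefully depending on whether $\xi_\mathrm{out}$ is treated as a genuine real vector (as in Lemma~\ref{lem:smatrix}) or, in the optical case, as the stacked pair $(a,a^{*})$ carrying its own conjugation constraint, and one must check that $\left(\frac{\partial C}{\partial\xi_\mathrm{out}}\right)^{*} = \frac{\partial C}{\partial\xi^{*}_\mathrm{out}}$ holds for real $C$ so the statement covers both conventions; second, one must verify that the $\mathcal{O}(g)$ term coming from the quasi-symmetry multiplies only $\mathcal{O}(1)$ factors (true, since $\delta\xi_\mathrm{in}/\beta = \mathcal{O}(1)$) and that the higher-order scattering remainder genuinely contributes $\mathcal{O}(\beta)$ after the division by $\beta$ --- which is exactly where the scaling $\delta\xi_\mathrm{in}\propto\beta$ built into the error signal is essential. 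Everything else reduces to matrix identities already recorded in the excerpt.
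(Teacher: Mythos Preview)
Your proposal is correct and follows essentially the same approach as the paper's proof: chain rule, adjoint trick to isolate a single parameter-independent error signal, Lemma~\ref{lem:ift} for $\partial\xi_\mathrm{out}/\partial\theta$, the quasi-symmetry to swap $(\nabla_\xi F_\theta^{-1})^\dagger$ for $U_1(\nabla_\xi F_\theta^{-1})U_2$, and finally Lemma~\ref{lem:smatrix} to express everything via measurable quantities. The only cosmetic difference is that the paper applies Lemma~\ref{lem:ift} before taking the adjoint whereas you take the adjoint first; your remarks on the Wirtinger bookkeeping and on tracking the $\mathcal{O}(g)$ and $\mathcal{O}(\beta)$ remainders are accurate and, if anything, more explicit than the paper's own treatment.
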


\begin{proof}
    Considering the derivative w.r.t. a single parameter $\theta_j$ and applying Lemma \ref{lem:ift}, we can write
    \begin{align}
        \frac{\partial C}{\partial \theta_j} &= \left(  \frac{\partial C}{\partial \xi_\mathrm{out}} \right)^\mathsf{T} \frac{\partial\xi_\mathrm{out}}{\partial  \theta_j}\\
        &= -\left(  \frac{\partial C}{\partial \xi_\mathrm{out}} \right)^\mathsf{T}\Sigma (\nabla_\xi F_\theta(\bar{\xi}))^{-1}\frac{\partial F_\theta}{\partial \theta_j}(\bar{\xi}).
    \end{align}
We would like to interpret the last term $\frac{\partial F_\theta}{\partial  \theta_j}(\bar{\xi})$ as the injected error signal $\delta \xi_\mathrm{in},$ which would result in a measurable output change $\delta \xi_\mathrm{out}$, given by the action of the linearized scattering matrix onto $\delta \xi_\mathrm{in}.$ Nevertheless, this would be equivalent to training via parameter-shift, as it would require a number of experiments equal to the number of learnable parameters. In these cases, the main idea is to transpose the above expression so that the vector on the right no longer depends on $\theta$:
    \begin{align}
        \frac{\partial C}{\partial \theta_j} &= \left(  \frac{\partial C}{\partial \xi_\mathrm{out}} \right)^\mathsf{T} \frac{\partial\xi_\mathrm{out}}{\partial  \theta_j}= \left(\frac{\partial\xi_\mathrm{out}}{\partial  \theta_j}  \right)^\mathsf{T} \frac{\partial C}{\partial \xi_\mathrm{out}} =\left(\frac{\partial\xi_\mathrm{out}}{\partial  \theta_j}  \right)^\dagger \left( \frac{\partial C}{\partial \xi_\mathrm{out}}\right)^* \\
        \label{eqn:see_angle}
        &= -\left(\frac{\partial F_\theta}{\partial  \theta_j}(\bar{\xi})\right)^\dagger (\nabla_\xi F_\theta(\bar{\xi})^{-1})^\dagger \Sigma^\dagger \frac{\partial C}{\partial \xi^*_\mathrm{out}}= - \left(\frac{\partial F_\theta}{\partial  \theta_j}(\bar{\xi})\right)^\dagger  U_1(\nabla_\xi F_\theta(\bar{\xi})) ^{-1}\underbrace{U_2 \Sigma^\dagger \frac{\partial C}{\partial \xi^*_\mathrm{out}}}_{=: \Pi \, \delta \xi_\mathrm{in}/\beta} + \mathcal{O}(g)\\
        &= -\frac{1}{\beta}\left(\frac{\partial F_\theta}{\partial  \theta_j}(\bar{\xi})\right)^\dagger  U_1 \Sigma^{-1} \underbrace{\Sigma (\nabla_\xi F_\theta(\bar{\xi})) ^{-1}\Pi}_{S_\theta(\bar{\xi})-\Gamma} \delta \xi_\mathrm{in}+ \mathcal{O}(g) = -\left(\frac{\partial F_\theta}{\partial  \theta_j}(\bar{\xi})\right)^\dagger  U_1 \Sigma^{-1} \frac{\delta \xi_\mathrm{out} - \Gamma \delta \xi_\mathrm{in}}{\beta}+ \mathcal{O}(g)+ \mathcal{O}(\beta)
    \end{align}
    where we have respectively used the fact that $\frac{\partial C}{\partial \theta_j}$ is real, Lemma \ref{lem:ift}, Proposition \ref{prop:quasi-rec}, and Lemma \ref{lem:smatrix}. 
\end{proof}
Note that if the matrices $U_1$, $U_2$, $\Sigma,$ $\Gamma$, and $\Pi$ are local, than also the gradient approximation above proscribes local updates for the trainable parameters.
Furthermore, in the optical case modeled by \eqref{eqn:system2} and \eqref{eq:inout1} having $\Gamma = \mathbf{I}$ and $\Sigma=\Pi =\sqrt{\kappa}$, 
the gradient formula above takes the form
\begin{align}
\label{eq:ugly-grad1}
    \frac{\partial C}{\partial \theta}
    = 
    - \left(\frac{\partial F_\theta}{\partial  \theta }(\bar{\xi})\right)^\dagger U_1 \sqrt{\kappa^{-1}} \; \frac{\delta  \xi_\mathrm{out} - \delta  \xi_\mathrm{in}}{\beta} +  \mathcal{O}\left(g,\beta\right),
\end{align}
where 
\begin{equation}
\label{eq:ugly-grad2}
    \delta \xi_\mathrm{in} \coloneqq \beta \sqrt{\kappa^{-1}} U_2 \sqrt{\kappa}\frac{\partial C}{\partial \xi^*_\mathrm{out}}(\mathrm{y},\mathrm{y}_\mathrm{target}).
\end{equation}
Notice the latter are slightly different from the equations derived in the Methods section, as there we instead assumed (in favor of simplicity) the quasi-symmetry 
\begin{equation}
\label{eq:qs-scatt-M}
    S_\theta(\bar{\xi})^\dagger = U S_\theta(\bar{\xi}) U^{-1}+\mathcal{O}(g)
\end{equation}
in place of \eqref{eq:generalQS}, where $S_\theta(\bar{\xi})=\mathbf{I} + \sqrt{\kappa} \,\nabla_\xi F_\theta(\bar{\xi})^{-1}\sqrt{\kappa}.$ Nevertheless, in this case, \emph{if} the matrix $U = U_1 = U_2^{-1}$ and commutes with $\sqrt{\kappa}$ (which is true in our optical scenario for $U=\sigma_x$ or $U=\sigma_y$) then \eqref{eq:generalQS} implies \eqref{eq:qs-scatt-M} and equations \eqref{eq:ugly-grad1} and \eqref{eq:ugly-grad2} reduce to
\begin{align}
    \frac{\partial C}{\partial \theta}
    = 
    - \left(\frac{\partial F_\theta}{\partial  \theta }(\bar{\xi})\right)^\dagger \sqrt{\kappa^{-1}}U \; \frac{\delta  \xi_\mathrm{out} - \delta  \xi_\mathrm{in}}{\beta} +  \mathcal{O}\left(g,\beta\right),
\end{align}
where 
\begin{equation}
    \delta \xi_\mathrm{in} \coloneqq \beta U^{-1}\frac{\partial C}{\partial \xi^*_\mathrm{out}}(\mathrm{y},\mathrm{y}_\mathrm{target}).
\end{equation}
\section{Gradient approximation for quasi-reciprocal systems}
\begin{prop} 
\label{Prop:gradient}
For the system \eqref{eqn:system2}, it holds
\begin{align}
\label{eqn:det_grad_approx}
    \frac{\partial C}{\partial \Delta_j} = -\frac{2}{\kappa_j} \mathfrak{Re}\left[ (a_{\mathrm{out},j}-a_{\mathrm{in},j})\frac{\delta a_{\mathrm{out},j}- \delta a_{\mathrm{in},j}}{\beta} \right] + \mathcal{O}(g,\beta).
\end{align}
and
\begin{align}
   \frac{\partial C}{\partial J_{j,\ell}} = - &\frac{2}{\sqrt{\kappa_j \kappa_\ell}} \mathfrak{Re}\biggl[ (a_{\mathrm{out}\ell}-a_{\mathrm{in}\ell})\frac{\delta a_{\mathrm{out},j}- \delta a_{\mathrm{in},j}}{\beta}+ (a_{\mathrm{out},j}-a_{\mathrm{in},j})\frac{\delta a_{\mathrm{out}\ell}- \delta a_{\mathrm{in}\ell}}{\beta} \biggr]+\mathcal{O}(g,\beta). 
\label{eqn:coupl_grad_approx}
\end{align}
where 
\begin{equation}
\label{eqn:error_signal}
    \delta a_\mathrm{in} \coloneqq -i \beta \frac{\partial C}{\partial a_{\mathrm{out},j}}(\mathrm{y},\mathrm{y}_\mathrm{target}). 
\end{equation}
\end{prop}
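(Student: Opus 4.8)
The statement is a corollary of Theorem~\ref{theo:gradient}, specialized to the optical equations of motion \eqref{eqn:system2} with input--output relations \eqref{eq:inout1}, i.e.\ $\xi = (a, a^*)^\mathsf{T}$, $\Gamma = \mathbf{I}_{2N}$, $\Pi = \Sigma = \sqrt{\kappa}$, and with the quasi-symmetry taken in the form $U_1 = U_2 = \sigma_y$. The latter is exactly $(\nabla_\xi F_\theta(\bar\xi)^{-1})^\dagger = \sigma_y\,\nabla_\xi F_\theta(\bar\xi)^{-1}\,\sigma_y + \mathcal{O}(g)$, which is established inside the proof of Proposition~\ref{prop:quasi-rec}. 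Since $\sigma_y$ commutes with the doubled diagonal loss matrix $\sqrt{\kappa}$, the general formula \eqref{eq:ugly-grad1}--\eqref{eq:ugly-grad2} collapses to
\begin{equation*}
  \frac{\partial C}{\partial \theta} = -\left(\frac{\partial F_\theta}{\partial \theta}(\bar\xi)\right)^\dagger \sqrt{\kappa^{-1}}\,\sigma_y\,\frac{\delta\xi_\mathrm{out} - \delta\xi_\mathrm{in}}{\beta} + \mathcal{O}(g,\beta), \qquad \delta\xi_\mathrm{in} = \beta\,\sigma_y\,\frac{\partial C}{\partial \xi^*_\mathrm{out}}.
\end{equation*}
First I would check that this error signal reproduces \eqref{eqn:error_signal}: writing $\partial C/\partial\xi^*_\mathrm{out} = (\partial C/\partial a^*_\mathrm{out},\, \partial C/\partial a_\mathrm{out})^\mathsf{T}$ and using that $C$ is real, so $\partial C/\partial a^*_\mathrm{out} = (\partial C/\partial a_\mathrm{out})^*$, acting with $\sigma_y$ gives $\delta\xi_\mathrm{in} = (-i\beta\,\partial C/\partial a_\mathrm{out},\, (-i\beta\,\partial C/\partial a_\mathrm{out})^*)^\mathsf{T} = (\delta a_\mathrm{in},\, \delta a_\mathrm{in}^*)^\mathsf{T}$, consistent with the block structure of $\xi$.

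The remaining work is to evaluate $\partial F_\theta/\partial\theta$ at the steady state and carry out the contraction. Because $\Delta_j$ and $J_{j,\ell}$ enter only through $H(\theta)$ in \eqref{eqn:system2}, with $H_{jj} = -i(\kappa_j+\kappa_j')/2 + \Delta_j$ and $H_{j\ell} = H_{\ell j} = J_{j,\ell}$ for $j\ne\ell$, we have $\partial H/\partial\Delta_j = e_j e_j^\mathsf{T}$ and $\partial H/\partial J_{j,\ell} = e_j e_\ell^\mathsf{T} + e_\ell e_j^\mathsf{T}$ (with $e_j$ the standard basis vectors), and the nonlinear term does not depend on these parameters, so
\begin{align*}
  \frac{\partial F_\theta}{\partial \Delta_j}(\bar\xi) = \begin{pmatrix} -i\bar a_j e_j \\ i\bar a_j^* e_j \end{pmatrix}, \qquad \frac{\partial F_\theta}{\partial J_{j,\ell}}(\bar\xi) = \begin{pmatrix} -i(\bar a_\ell e_j + \bar a_j e_\ell) \\ i(\bar a_\ell^* e_j + \bar a_j^* e_\ell) \end{pmatrix}.
\end{align*}
Taking the conjugate transpose turns these into row covectors supported only on components $j$ (and $\ell$). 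Using $\sigma_y (u, v)^\mathsf{T} = (-iv, iu)^\mathsf{T}$ and $\delta\xi_\mathrm{out} - \delta\xi_\mathrm{in} = (\,\delta a_\mathrm{out} - \delta a_\mathrm{in},\, (\delta a_\mathrm{out} - \delta a_\mathrm{in})^*\,)^\mathsf{T}$, the contraction for $\Delta_j$ yields $-\frac{1}{\beta\sqrt{\kappa_j}}\big(\bar a_j p_j + \bar a_j^* p_j^*\big) = -\frac{2}{\beta\sqrt{\kappa_j}}\,\mathfrak{Re}[\bar a_j p_j]$, where $p_j = \delta a_{\mathrm{out},j} - \delta a_{\mathrm{in},j}$; the two blocks add (rather than cancel) because the products of the imaginary units coming from $(\partial F_\theta/\partial\theta)^\dagger$ and from $\sigma_y$ equal $+1$ in both blocks. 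Finally I would substitute the input--output relation $\bar a_j = (a_{\mathrm{out},j} - a_{\mathrm{in},j})/\sqrt{\kappa_j}$ to obtain \eqref{eqn:det_grad_approx}; the coupling case is identical but produces two symmetric terms, one from the $e_j$ part and one from the $e_\ell$ part of $\partial F_\theta/\partial J_{j,\ell}$, after the substitution $\bar a_\ell = (a_{\mathrm{out},\ell}-a_{\mathrm{in},\ell})/\sqrt{\kappa_\ell}$ giving \eqref{eqn:coupl_grad_approx}.

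There is no deep obstacle here: the result is a direct specialization of Theorem~\ref{theo:gradient}, and everything reduces to careful bookkeeping of the $2\times 2$ block structure in $(a, a^*)$. The one place to be attentive is the tracking of factors of $i$ and complex conjugates when peeling off the $a^*$-block, so as to confirm that the $a$- and $a^*$-contributions recombine into $2\,\mathfrak{Re}[\cdot]$, and to verify that the only approximations incurred are the $\mathcal{O}(g)$ from Proposition~\ref{prop:quasi-rec} and the $\mathcal{O}(\beta)$ from Lemma~\ref{lem:smatrix} (the linearization of the scattering response), together matching the stated $\mathcal{O}(g,\beta)$. It is also worth recalling that in the unit-less conventions of this Supplementary Information $g$, $\beta$, and $\kappa_j$ are dimensionless, so these error terms should be read as $\mathcal{O}(g/\bar\kappa,\, \beta/\bar\kappa)$ in the language of the Methods.
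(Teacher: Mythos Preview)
Your proposal is correct and follows exactly the paper's approach: the paper's proof is a one-line invocation of Theorem~\ref{theo:gradient} with $U=U_1=U_2^{-1}=\sigma_y$, and you carry out precisely that specialization, just with the bookkeeping (computing $\partial F_\theta/\partial\theta$, checking the error signal, and recombining the $a$- and $a^*$-blocks into $2\,\mathfrak{Re}[\cdot]$) made explicit. Nothing is missing and no step is different in spirit from the paper.
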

\begin{proof}
    The claim is obtained by applying Theorem~\ref{theo:gradient} to the system \eqref{eqn:system2} choosing $U=U_1=U_2^{-1}$ as $\sigma_y\coloneqq \begin{pmatrix}
        0 & -i\mathbf{I}_N\\
        i\mathbf{I}_N & 0
    \end{pmatrix}$.
\end{proof}

In practice, in order to have an efficient training algorithm one has to pay the price of approximating the true gradients with the expression above. The error depends on the quasi-symmetry \eqref{eqn:quasi-symm} or, more precisely, on the angle between $\nabla_{(a,a^*)}F(\bar{a},\bar{a}^*)^{-1}$ and $\sigma_y\nabla_{(a,a^*)}F(\bar{a},\bar{a}^*)^{-1}\sigma_y$ (linked to the angle between $S(\bar{a},\bar{a}^*)^\dagger$ and $\sigma_y S(\bar{a},\bar{a}^*)\sigma_y$ showed in Figure \ref{fig:XOR}~\textbf{a} for fully connected networks with self-Kerr nonlinearities) which determines the angle between the true gradient and the approximated (see \eqref{eqn:see_angle}).

\begin{prop}
\label{prop:angle_W}
    Let $W \coloneqq \nabla_{(a,a^*)}F(\bar{a},\bar{a}^*)^{-1}$ be the inverse of the Jacobian of \eqref{eqn:system2} at the steady state and assume $H$ is invertible. Then
    \begin{equation}
      \alpha= \cos^{-1} \frac{\langle W^\dagger, \sigma_y W \sigma_y \rangle_F}{\|W^\dagger \|_F \cdot \| \sigma_y W \sigma_y \|_F} =\mathcal{O}(g).
    \end{equation}
\end{prop}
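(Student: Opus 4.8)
The plan is to reduce the statement to two facts that are essentially already contained in Proposition~\ref{prop:quasi-rec} and its proof: that the ``reciprocity deficit'' $W^\dagger-\sigma_y W\sigma_y$ is of order $g$ in Frobenius norm, and that $\|W\|_F$ stays bounded away from zero as $g\to 0$. Throughout I would work in the \emph{real} inner product space of $2N\times 2N$ complex matrices with $\langle A,B\rangle := \mathfrak{Re}\,\Tr(A^\dagger B)$ and induced norm $\|\cdot\|_F$, so that the $\alpha$ in the statement is the ordinary Euclidean angle; this is the only sensible reading of the $\cos^{-1}$ in the proposition and is consistent with the convention in the main text. Set $X:=W^\dagger$ and $Y:=\sigma_y W\sigma_y$. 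Since $\sigma_y$ is unitary, $\|X\|_F=\|Y\|_F=\|W\|_F$, so the law of cosines gives the exact identity
\begin{equation}
\|X-Y\|_F^2=\|X\|_F^2+\|Y\|_F^2-2\,\mathfrak{Re}\,\Tr(X^\dagger Y)=2\|W\|_F^2\,(1-\cos\alpha).
\end{equation}
Thus it suffices to show $\|X-Y\|_F=\mathcal O(g)$ and $\|W\|_F\ge c>0$ for small $g$, for then $1-\cos\alpha=\mathcal O(g^2)$ and hence $\alpha=\arccos\!\big(1-\mathcal O(g^2)\big)=\mathcal O(g)$ (equivalently, one can bound $\sin\alpha=\|(X-Y)^{\perp}\|_F/\|Y\|_F\le\|X-Y\|_F/\|W\|_F=\mathcal O(g)$ and invoke Jordan's inequality).

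For the first bound I would simply quote the Woodbury/Neumann expansion already derived in the proof of Proposition~\ref{prop:quasi-rec}, namely Eq.~\eqref{eq:bound1}: with $W=M^{-1}$ one has $W^\dagger-\sigma_y W\sigma_y=\sum_{m\ge1}g^m\big((\Xi^{-1}\Lambda)^m-(\Xi^{-1}\Theta)^m\big)\Xi^{-1}$, and estimating this geometric-type tail with a sub-multiplicative norm exactly as done there for $S$ yields $\|W^\dagger-\sigma_y W\sigma_y\|_F\le c_1 g$ for $g$ below the same threshold, with $c_1$ controlled by $\|\Xi^{-1}\|_F=\sqrt2\,\|H^{-1}\|_F$ and by $\big\|\tfrac{\partial\varphi}{\partial a}(\bar a,\bar a^*)\big\|_F+\big\|\tfrac{\partial\varphi}{\partial a^*}(\bar a,\bar a^*)\big\|_F$.

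For the denominator bound I would use that $H$ invertible makes $\Xi^\dagger=\diag(-iH,iH^*)$ invertible, and $\Xi^\dagger$ is precisely the $g=0$ value of $M=\nabla_{(a,a^*)}F(\bar a,\bar a^*)$. The Neumann series $W=\sum_{m\ge0}g^m\big((\Xi^\dagger)^{-1}\Lambda^\dagger\big)^m(\Xi^\dagger)^{-1}$ then gives $\|W-(\Xi^\dagger)^{-1}\|_F=\mathcal O(g)$, and since $\|(\Xi^\dagger)^{-1}\|_F=\sqrt2\,\|H^{-1}\|_F>0$ there is $g_0>0$ such that $\|W\|_F\ge c_2>0$ for $|g|<g_0$. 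Plugging both bounds into the displayed identity closes the argument.

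The \textbf{main obstacle} is the uniformity of these $\mathcal O(g)$ statements: both $c_1$ and the size of the correction in Step on the denominator involve the Jacobian blocks $\tfrac{\partial\varphi}{\partial a}(\bar a,\bar a^*)$ and $\tfrac{\partial\varphi}{\partial a^*}(\bar a,\bar a^*)$, which depend on the steady state $\bar a$ and hence implicitly on $g$ and on the input field. One must observe that, for a fixed input and a polynomial (e.g.\ Kerr or cross-Kerr) nonlinearity, $\bar a$ converges as $g\to0$ to the finite linear steady state solving $-iH\bar a=\sqrt\kappa\,a_\mathrm{in}$, so these blocks remain bounded on a neighborhood of $g=0$ and the orders in $g$ are genuine; the hypothesis that $H$ be invertible is exactly what keeps $(\Xi^\dagger)^{-1}$ finite and $\|W\|_F$ bounded below, and is therefore indispensable.
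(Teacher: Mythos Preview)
Your proposal is correct and takes essentially the same route as the paper: both arguments use the law of cosines together with $\|W^\dagger\|_F=\|\sigma_y W\sigma_y\|_F=\|W\|_F$ and the bound $\|W^\dagger-\sigma_y W\sigma_y\|_F=\mathcal O(g)$ from Proposition~\ref{prop:quasi-rec} to obtain $\cos\alpha=1-\mathcal O(g^2)$. The paper additionally remarks (via Remark~\ref{rem:bog_transf_group}) that the Bogoliubov form of $W$ makes $\langle W^\dagger,\sigma_y W\sigma_y\rangle_F$ real a priori, whereas you bypass this by working in the real inner product; conversely, you are more explicit than the paper about the lower bound on $\|W\|_F$ and the uniformity of the constants in $g$, points the paper leaves implicit.
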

\begin{proof}
First, note that being $W$ a Bogoliubov transformation (see Remark \ref{rem:bog_transf_group}) implies $\langle W^\dagger, \sigma_y W \sigma_y \rangle_F=\Tr(W \sigma_y W \sigma_y)$ is a real number, and so 
\begin{equation}
    \alpha \coloneqq \cos^{-1} \frac{\mathfrak{Re} \langle W^\dagger, \sigma_y W \sigma_y \rangle_F}{\|W^\dagger \|_F \cdot \| \sigma_y W \sigma_y \|_F} = \cos^{-1} \frac{\langle W^\dagger, \sigma_y W \sigma_y \rangle_F}{\|W^\dagger \|_F \cdot \| \sigma_y W \sigma_y \|_F}.
\end{equation}
Since, for inner products, in general holds
\begin{equation}
    \langle C, D \rangle + \langle D, C \rangle = \langle C, C \rangle + \langle D, D \rangle - \langle C-D, C-D \rangle
\end{equation}
we have 
\begin{align}
    \cos \alpha = \frac{\langle W^\dagger,\sigma_y W \sigma_y \rangle_F}{\|W^\dagger\|_F \cdot \|\sigma_y W \sigma_y \|_F}&=\frac{\langle W^\dagger, W^\dagger \rangle_F+\langle \sigma_y W \sigma_y ,\sigma_y W \sigma_y \rangle_F - \langle W^\dagger-\sigma_y W \sigma_y ,W^\dagger-\sigma_y W \sigma_y \rangle_F }{2\|W^\dagger\|_F \cdot \|\sigma_y W \sigma_y \|_F}\\
    &=\frac{\langle W^\dagger, W^\dagger \rangle_F+\langle \sigma_y W \sigma_y ,\sigma_y W \sigma_y \rangle_F - \mathcal{O}(g^2)}{2\|W^\dagger\|_F \cdot \|\sigma_y W \sigma_y \|_F}\\
    &=\frac{\Tr (W^\dagger W )+\Tr(\sigma_y W^\dagger \sigma_y \sigma_y W \sigma_y )- \mathcal{O}(g^2)}{2\sqrt{\Tr (W^\dagger W )\Tr(\sigma_y W^\dagger \sigma_y \sigma_y W \sigma_y )}}\\
    &=\frac{2\Tr (W^\dagger W ) - \mathcal{O}(g^2)}{2\Tr (W^\dagger W )}= 1 - \mathcal{O}(g^2)
\end{align}
where we used quasi-reciprocity, i.e. Proposition \eqref{prop:quasi-rec}.
\end{proof}

\begin{prop}
\label{prop:angle_S}
    Let $S(\bar{a},\bar{a}^*)$ be the linerized scattering matrix of \eqref{eqn:system2} at the steady state and assume $H$ is invertible. Then
    \begin{equation}
      \alpha= \cos^{-1} \frac{\langle S^\dagger(\bar{a},\bar{a}^*), \sigma_y S(\bar{a},\bar{a}^*) \sigma_y \rangle_F}{\|S(\bar{a},\bar{a}^*)^\dagger \|_F \cdot \| \sigma_y S(\bar{a},\bar{a}^*) \sigma_y \|_F}=\mathcal{O}(g).
    \end{equation}
\end{prop}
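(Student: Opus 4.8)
The plan is to run the argument of Proposition~\ref{prop:angle_W} verbatim, now with the linearized scattering matrix $S\coloneqq S(\bar a,\bar a^*)$ in place of $W$. The one structural input needed is Remark~\ref{rem:bog_transf_group}, which says that $S$ has the Bogoliubov block form $S=\begin{pmatrix} S_{11} & S_{12}\\ S_{12}^* & S_{11}^*\end{pmatrix}$. A short block computation gives $\sigma_y S\sigma_y=\begin{pmatrix} S_{11}^* & -S_{12}^*\\ -S_{12} & S_{11}\end{pmatrix}$, which is again of Bogoliubov form, and hence $\langle S^\dagger,\sigma_y S\sigma_y\rangle_F=\Tr(S\sigma_y S\sigma_y)=2\Tr(S_{11}S_{11}^*)-2\,\mathfrak{Re}\Tr(S_{12}S_{12})$ is a real number (using $\overline{\Tr(S_{11}S_{11}^*)}=\Tr(S_{11}^*S_{11})=\Tr(S_{11}S_{11}^*)$). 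Therefore the numerator of $\cos\alpha=\langle S^\dagger,\sigma_y S\sigma_y\rangle_F/(\|S^\dagger\|_F\,\|\sigma_y S\sigma_y\|_F)$ carries no hidden imaginary part, exactly as in Proposition~\ref{prop:angle_W}.

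Next I would apply the polarization-type identity $\langle C,D\rangle_F+\langle D,C\rangle_F=\|C\|_F^2+\|D\|_F^2-\|C-D\|_F^2$ with $C=S^\dagger$ and $D=\sigma_y S\sigma_y$. Since $\langle C,D\rangle_F$ is real the left-hand side is $2\langle S^\dagger,\sigma_y S\sigma_y\rangle_F$; by unitary invariance of the Frobenius norm together with $\|S^\dagger\|_F=\|S\|_F$, both $\|C\|_F$ and $\|D\|_F$ equal $\|S\|_F$, so the denominator of $\cos\alpha$ equals $\|S\|_F^2$. Invoking Proposition~\ref{prop:quasi-rec} in the quantitative Frobenius form displayed there, $\|S^\dagger-\sigma_y S\sigma_y\|_F=\mathcal{O}(g)$, so $\|C-D\|_F^2=\mathcal{O}(g^2)$ and
\[
  \cos\alpha=\frac{2\|S\|_F^2-\mathcal{O}(g^2)}{2\|S\|_F^2}=1-\mathcal{O}(g^2).
\]

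Finally I would note that $\|S\|_F$ stays bounded away from $0$ as $g\to 0$: by Lemma~\ref{lem:smatrix}, $S$ depends continuously on $g$ and tends to the invertible linear scattering matrix $\diag(\tilde S,\tilde S^*)$, whose real trace $N$ forces $\|\tilde S\|_F\ge\sqrt{N}$; hence $\mathcal{O}(g^2)/\|S\|_F^2=\mathcal{O}(g^2)$ uniformly for small $g$, and $\alpha=\arccos(1-\mathcal{O}(g^2))=\mathcal{O}(g)$ via $1-\cos\alpha\sim\alpha^2/2$. The main obstacle here is not conceptual but bookkeeping: one must (i) check the Bogoliubov structure of $S$ and of $\sigma_y S\sigma_y$ carefully enough to see that the relevant Frobenius inner product is real, and (ii) make sure the denominator $\|S\|_F$ does not degenerate in the limit $g\to 0$. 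With both in hand the statement is a direct transcription of the proof of Proposition~\ref{prop:angle_W}, the only genuinely new ingredient being the explicit bound of Proposition~\ref{prop:quasi-rec} applied to $S$ rather than to $W$.
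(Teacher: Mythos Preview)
Your proposal is correct and follows exactly the approach the paper takes: its proof of Proposition~\ref{prop:angle_S} is literally ``Analogue to the previous one,'' i.e., rerun the argument of Proposition~\ref{prop:angle_W} with $S(\bar a,\bar a^*)$ in place of $W$, using the Bogoliubov structure (Remark~\ref{rem:bog_transf_group}) to get reality of the Frobenius inner product and Proposition~\ref{prop:quasi-rec} for the $\mathcal{O}(g)$ bound on $\|S^\dagger-\sigma_y S\sigma_y\|_F$. Your extra care in checking that $\|S\|_F$ stays bounded away from zero as $g\to 0$ is a welcome detail the paper leaves implicit; the specific lower bound you cite via the trace is slightly off (since $H$ is complex, $\Tr\tilde S$ need not be real), but the conclusion follows anyway from $S\to\diag(\tilde S,\tilde S^*)$ being invertible.
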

\begin{proof}
    Analogue to the previous one.
\end{proof}

    Similar results on gradient and angle approximation can be shown with respect to the other quasi-symmetry we discussed in Proposition \ref{prop:quasi-rec-x}. For instance, one can show
\begin{prop}
For the system \eqref{eqn:system2}, it holds
  \begin{align}
  \label{eq:dc/dl_sigma_x}
    \frac{\partial C}{\partial \Delta_j} = -\frac{2}{\kappa_j} \mathfrak{Im}\left[ (a_{\mathrm{out},j}-a_{\mathrm{in},j})\frac{\delta a_{\mathrm{out},j}- \delta a_{\mathrm{in},j}}{\beta} \right] + \mathcal{O}(g,\beta)
\end{align}
and
\begin{equation}
  \label{eq:dc/dJ_sigma_x}
   \frac{\partial C}{\partial J_{j,\ell}} = -\frac{2}{\sqrt{\kappa_j \kappa_\ell}} \mathfrak{Im}\left[ (a_{\mathrm{out}\ell}-a_{\mathrm{in}\ell})\frac{\delta a_{\mathrm{out},j}- \delta a_{\mathrm{in},j}}{\beta} + (a_{\mathrm{out},j}-a_{\mathrm{in},j})\frac{\delta a_{\mathrm{out}\ell}- \delta a_{\mathrm{in}\ell}}{\beta} \right] + \mathcal{O}(g,\beta)
\end{equation}
where
\begin{equation}
    \delta a_{\mathrm{in}} \coloneqq  \beta \frac{\partial C}{\partial a_{\mathrm{out},j}}.
\end{equation}  
\end{prop}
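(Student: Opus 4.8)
The plan is to derive \eqref{eq:dc/dl_sigma_x} and \eqref{eq:dc/dJ_sigma_x} exactly as Proposition~\ref{Prop:gradient} was obtained, but invoking the $\sigma_x$ quasi-reciprocity of Proposition~\ref{prop:quasi-rec-x} in place of the $\sigma_y$ one. Concretely, I would apply Theorem~\ref{theo:gradient} to the system \eqref{eqn:system2}--\eqref{eq:inout1} with $\xi=(a,a^*)^\mathsf{T}$, $\Gamma=\mathbf{I}_{2N}$, $\Pi=\Sigma=\sqrt{\kappa}$, and the choice $U_1=U_2=\sigma_x$. The hypothesis $(\nabla_\xi F^{-1})^\dagger=\sigma_x\,\nabla_\xi F^{-1}\,\sigma_x+\mathcal{O}(g)$ needed to run the theorem is precisely the $W=M^{-1}$--level content of the computation behind Proposition~\ref{prop:quasi-rec-x}, built from \eqref{eqn:sxWsx}. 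Since $\sigma_x$ is involutive (so $U_2=U_1^{-1}$) and commutes with $\sqrt{\kappa}$ --- because $\sqrt{\kappa}$ is the $2N\times 2N$ block-diagonal matrix with the two $N\times N$ blocks equal --- the conclusion of Theorem~\ref{theo:gradient} reduces to the ``$U$ commutes with $\sqrt{\kappa}$'' form stated at the end of Section~\ref{app:gradient-deriv}:
\[
\frac{\partial C}{\partial\theta}=-\left(\frac{\partial F_\theta}{\partial\theta}(\bar\xi)\right)^\dagger\sqrt{\kappa^{-1}}\,\sigma_x\,\frac{\delta\xi_\mathrm{out}-\delta\xi_\mathrm{in}}{\beta}+\mathcal{O}(g,\beta),\qquad \delta\xi_\mathrm{in}=\beta\,\sigma_x\,\frac{\partial C}{\partial\xi^*_\mathrm{out}}.
\]

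Next I would unpack the error signal. Writing $\xi^*_\mathrm{out}=(a^*_\mathrm{out},a_\mathrm{out})^\mathsf{T}$ and using that $C$ is real, the swap $\sigma_x$ sends this to $(\partial C/\partial a_\mathrm{out},\,\partial C/\partial a^*_\mathrm{out})^\mathsf{T}$, so the first $N$ components of $\delta\xi_\mathrm{in}$ are $\delta a_\mathrm{in}=\beta\,\partial C/\partial a_\mathrm{out}$ (and the last $N$ are its complex conjugate), matching the error signal in the statement. Note the contrast with Proposition~\ref{Prop:gradient}: $\sigma_y$ introduces the factor $-i$ whereas $\sigma_x$ does not, which is exactly why $\mathfrak{Im}$ replaces $\mathfrak{Re}$ below.

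Then I would evaluate $(\partial F_\theta/\partial\theta)^\dagger$ for the two parameter families. From \eqref{eqn:system2}, $F_\theta=(-iH(\theta)a-ig\varphi,\ iH^*(\theta)a^*+ig\varphi^*)^\mathsf{T}$ with $H_{j,j}=-i(\kappa_j+\kappa'_j)/2+\Delta_j$ and $H_{j,\ell}=H_{\ell,j}=J_{j,\ell}$, hence at the steady state $\partial F_\theta/\partial\Delta_j=(-i\bar a_j e_j,\ i\bar a^*_j e_j)^\mathsf{T}$ and $\partial F_\theta/\partial J_{j,\ell}=(-i(\bar a_\ell e_j+\bar a_j e_\ell),\ i(\bar a^*_\ell e_j+\bar a^*_j e_\ell))^\mathsf{T}$, where $e_j$ is the $j$-th standard basis vector of $\mathbb{C}^N$. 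Taking the conjugate transpose, multiplying by $\sqrt{\kappa^{-1}}\sigma_x$ (which componentwise swaps the $a$- and $a^*$-blocks and divides by $\sqrt{\kappa_j}$), and pairing with $\delta\xi_\mathrm{out}-\delta\xi_\mathrm{in}=(\delta a_\mathrm{out}-\delta a_\mathrm{in},\ \delta a^*_\mathrm{out}-\delta a^*_\mathrm{in})^\mathsf{T}$, each contribution collapses to a term of the form $i(z^*-z)=2\,\mathfrak{Im}\,z$, producing $-\tfrac{2}{\sqrt{\kappa_j}}\mathfrak{Im}(\bar a_j w_j)$ and $-\tfrac{2}{\sqrt{\kappa_j}}\mathfrak{Im}(\bar a_\ell w_j)-\tfrac{2}{\sqrt{\kappa_\ell}}\mathfrak{Im}(\bar a_j w_\ell)$ with $w_j\coloneqq(\delta a_{\mathrm{out},j}-\delta a_{\mathrm{in},j})/\beta$; using the input--output relation $a_{\mathrm{out},j}-a_{\mathrm{in},j}=\sqrt{\kappa_j}\,\bar a_j$ to trade the $1/\sqrt{\kappa_j}$ prefactors and steady-state amplitudes for measured output fields yields exactly \eqref{eq:dc/dl_sigma_x} and \eqref{eq:dc/dJ_sigma_x}.

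The only real subtlety --- the ``main obstacle'' such as it is --- is bookkeeping: one must keep the Wirtinger block structure of $\xi=(a,a^*)$ straight through all the conjugations, check that $\sigma_x$ genuinely commutes with $\sqrt{\kappa}$ as defined (the matrix repeating $\kappa_1,\dots,\kappa_N$ twice), and confirm that the $\mathcal{O}(g)$ error furnished by Proposition~\ref{prop:quasi-rec-x} survives multiplication by the $\bar\xi$-independent constant matrices $\sqrt{\kappa^{-1}},\sigma_x$ and by $(\partial F_\theta/\partial\theta)^\dagger$ without degradation of its order --- identical to the error-tracking already done in the proof of Theorem~\ref{theo:gradient}. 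No new estimates beyond those established earlier are required.
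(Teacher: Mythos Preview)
Your proposal is correct and follows precisely the paper's own argument: the paper's proof consists of the single sentence ``The claim is obtained by applying Theorem~\ref{theo:gradient} to the system \eqref{eqn:system2} choosing $U=U_1=U_2^{-1}$ as $\sigma_x$,'' and your write-up is simply a careful unpacking of that application, with the same quasi-symmetry input (Proposition~\ref{prop:quasi-rec-x}) and the same reduction via $[\sigma_x,\sqrt{\kappa}]=0$. Your bookkeeping on the error signal and on how the $\sigma_x$ swap turns the $\mathfrak{Re}$ of Proposition~\ref{Prop:gradient} into an $\mathfrak{Im}$ is exactly what the paper leaves implicit.
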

\begin{proof}
    The claim is obtained by applying Theorem~\ref{theo:gradient} to the system \eqref{eqn:system2} choosing $U=U_1=U_2^{-1}$ as $\sigma_x\coloneqq \begin{pmatrix}
        0 & \mathbf{I}_N\\
        \mathbf{I}_N & 0
    \end{pmatrix}$.
\end{proof}

\section{Quadrature basis}
In the $(x,p)$-quadrature basis
\begin{equation}
    x_j \coloneqq \frac{a_j + a^*_j}{\sqrt{2}}, \quad p_j \coloneqq i\frac{a^*_j - a_j}{\sqrt{2}}
\end{equation}
the dynamical equations \eqref{eqn:system2} read
\begin{equation}
    \label{eqn:quadratures_Langevin}
    \begin{cases}
        \dot{x} = -\frac{\kappa+\kappa'}{2}x + Jp +g \, \mathfrak{Im} \varphi\left(\frac{x+ip}{\sqrt{2}},\frac{x-ip}{\sqrt{2}}\right)-\sqrt{\kappa_j}x_{\mathrm{in}}\\
        \dot{p} = -\frac{\kappa+\kappa'}{2}p - Jx -g \, \mathfrak{Re} \varphi\left(\frac{x+ip}{\sqrt{2}},\frac{x-ip}{\sqrt{2}}\right)-\sqrt{\kappa_j}p_{\mathrm{in}}
    \end{cases}
\end{equation}
The Jacobian matrix of this ODE at a steady state $(\bar{x},\bar{p})$ is
\begin{equation}
\label{eqn:Jac_xp}
    \nabla F_{xp}(\bar{x},\bar{p}) =\begin{pmatrix}
        -\frac{\kappa+\kappa'}{2} + g \frac{\partial}{\partial x}\mathfrak{Im} \varphi\left(\frac{\bar x+i\bar p}{\sqrt{2}},\frac{\bar x-i\bar p}{\sqrt{2}}\right) & J + g \frac{\partial}{\partial p}\mathfrak{Im} \varphi\left(\frac{\bar x+i\bar p}{\sqrt{2}},\frac{\bar x-i\bar p}{\sqrt{2}}\right)\\
        -J - g \frac{\partial}{\partial x}\mathfrak{Re} \varphi\left(\frac{\bar x+i\bar p}{\sqrt{2}},\frac{\bar x-i\bar p}{\sqrt{2}}\right) & -\frac{\kappa+\kappa'}{2} - g \frac{\partial}{\partial p}\mathfrak{Re} \varphi\left(\frac{\bar x+i\bar p}{\sqrt{2}},\frac{\bar x-i\bar p}{\sqrt{2}}\right)
    \end{pmatrix}
\end{equation}
and, we the usual abuse of notation on the loss matrix, we can define a linearized scattering matrix in this basis 
\begin{equation}
    S_{xp}(\bar{x},\bar{p}) \coloneqq \mathbf{I}_{2N} + \sqrt{\kappa}\nabla F_{xp}(\bar{x},\bar{p})^{-1}\sqrt{\kappa}.
\end{equation}
It is easy to show that for $S_{xp}$, in the linear case $g=0$, hold symmetries
\begin{equation}
\label{eqn:qsx}
    S_{xp}^\mathsf{T} = \sigma_x S_{xp}\sigma_x, \quad \sigma_x \coloneqq \begin{pmatrix}
        0 & \mathbf{I}_N \\
        \mathbf{I}_N & 0
    \end{pmatrix}
\end{equation}
and
\begin{equation}
\label{eqn:qsz}
    S_{xp}^\mathsf{T} = \sigma_z S_{xp}\sigma_z, \quad \sigma_z \coloneqq \begin{pmatrix}
        \mathbf{I}_N & 0\\
        0 & -\mathbf{I}_N
    \end{pmatrix}
\end{equation}
correspondent to \eqref{eqn:qs1} and \eqref{eqn:qs2}, which are equivalent to system's reciprocity. In the nonlinear regime, by a change of basis also follow the results on the respective quasi-symmetries that we discussed above in Propositions \ref{prop:quasi-rec}, \ref{prop:quasi-rec-x}, \ref{prop:angle_W}.

\section{Comparison with Equilibrium Propagation for vector fields}
\label{app:EP}

\subsection{Equilibrium Propagation for vector fields}
In \cite{scellier2018generalization}, authors generalize Equilibrium Propagation (EP) for fixed point systems whose dynamics is described by vector fields, so extending the method also to non-energy-based models. In practice, in a supervised setting aiming to predict the target $\mathrm{y}$ of a network input $\mathrm{x}$,
they consider the evolution of neurons $s$ described by
\begin{equation}
    \frac{ds}{dt}=\mu_\theta(\mathrm{x},s)
\end{equation}
assuming that the system evolves towards a fixed point $s_\theta^{\mathrm{x}}$ depending on $\mathrm{x}$ and the system's parameters $\theta$ via the implicit relation
\begin{equation}
\mu_\theta(\mathrm{x},s_\theta^{\mathrm{x}})=0.
\end{equation}
In particular, they consider the example of a model with two hidden layers ($s_1$ and $s_2$) and one output layers ($s_0$) and a component-wise defined vector field
\begin{align}
    \mu_{\theta,0}(\mathrm{x},s) &= W_{01}  \rho(s_1) - s_0\\
    \mu_{\theta,1}(\mathrm{x},s) &= W_{12}  \rho(s_1)+W_{01}  \rho(s_1) - s_1\\
    \mu_{\theta,2}(\mathrm{x},s) &= W_{23}  \rho(s_1)+W_{21} \rho(s_1) - s_2 
\end{align}
where the $W$s indicate the trainable weight matrices. Here, unlike energy-based models like Hopfield \cite{hopfield1984neurons} which assume symmetric coupling among neurons, the tunable connections are not tied.
In this example, authors consider a quadratic cost function depending on the output layer $s_0$
\begin{equation}
    C(\mathrm{y},s)=\frac{1}{2}\| \mathrm{y} - s_0 \|^2
\end{equation}
and aim to minimize $J(\mathrm{x},\mathrm{y},\theta)\coloneqq C(\mathrm{y},s_\theta^\mathrm{x})$ with respect to $\theta$, by proposing an algorithm to compute an approximation of
\begin{equation}
    \frac{\partial J}{\partial \theta}(\mathrm{x},\mathrm{y},\theta),
\end{equation}
whose precision depends on the `degree of symmetry' of the Jacobian matrix of $\mu_\theta$ at the fixed point $s_\theta^\mathrm{x}$.

Inspired by the original version of EP for energy-based models \cite{scellier2017equilibrium} where the cost function $C$ is seen as an `external potential energy' in the output later $s_0$ that drives the network prediction towards the target $\mathrm{y}$, they define an \emph{augmented vector field}
\begin{equation}
\label{eqn:augmented_vf}
    \mu^\beta_\theta(\mathrm{x},\mathrm{y},s) \coloneqq \mu_\theta(\mathrm{x},s) - \beta \frac{\partial C}{\partial s}(\mathrm{y},s),
\end{equation}
where $\beta \geq 0$ is the \emph{influence parameter}. Thus, in this augmented field, the term $\frac{\partial C}{\partial s}(\mathrm{y},s)$ can be viewed as an \emph{external force} that nudges the network output towards the target. The main idea of this generalized EP, is to apply the update rule
\begin{equation}
    \Delta \theta \propto \nu(\mathrm{x},\mathrm{y},\theta),
\end{equation}
where
\begin{equation}
    \nu(\mathrm{x},\mathrm{y},\theta) \coloneqq \left( \frac{\partial \mu_\theta}{\partial \theta}(\mathrm{x},s_\theta^\mathrm{x}) \right)^\mathsf{T} \, \frac{\partial s_\theta^\beta}{\partial \beta} \biggl\vert_{\beta = 0}
\end{equation}
can be computed with two `experiments' in a free phase, when $\beta = 0$ and we measure $s_\theta^0 \coloneqq s_\theta^\mathrm{x}$, and a nudged phase, when $\beta > 0$ and one measures $s_\theta^\beta$ defined by
\begin{equation}
\mu^\beta_\theta(\mathrm{x},\mathrm{y},s_\theta^\beta) = 0.
\end{equation}
It turns out, see Theorem 1 in \cite{scellier2018generalization}, that the angle between the vectors $\frac{\partial J}{\partial \theta}(\mathrm{x},\mathrm{y},\theta)$ and $-\nu(\mathrm{x},\mathrm{y},\theta)$ is linked to the `degree of symmetry' of the Jacobian matrix of $\mu_\theta$ at the fixed point $s_\theta^\mathrm{x}$, leading to an exact computation of the gradient in the case of symmetric weights (in accordance with energy-based EP).

\subsection{Scattering Backpropagation as generalization of Equilibrium Propagation for vector fields}
In our case, we start from a real vector field in the $(x,p)$-quadratures of the form \eqref{eqn:quadratures_Langevin}, which we report here for convenience
\begin{equation}
\label{eqn:quadratures_Langevin_pt2}
    \begin{cases}
        \dot{x} = -\frac{\kappa+\kappa'}{2}x + Jp +g \, \mathfrak{Im} \varphi\left(\frac{x+ip}{\sqrt{2}},\frac{x-ip}{\sqrt{2}}\right)-\sqrt{\kappa_j}x_{\mathrm{in}}\\
        \dot{p} = -\frac{\kappa+\kappa'}{2}p - Jx -g \, \mathfrak{Re} \varphi\left(\frac{x+ip}{\sqrt{2}},\frac{x-ip}{\sqrt{2}}\right)-\sqrt{\kappa_j}p_{\mathrm{in}}
    \end{cases}
\end{equation}
At this point, one could think to directly apply EP for vector fields to these dynamical equations, nevertheless, there are two main obstructions
\begin{itemize}
    \item The Jacobian matrix of \eqref{eqn:quadratures_Langevin_pt2} at the steady state $(\bar{x}, \bar{p})$ is very far from being symmetric (see \eqref{eqn:Jac_xp}), even for small values of $g$ and with symmetric couplings $J$ (required by the physics). Therefore, the approximation given by $\nu(\mathrm{x},\mathrm{y},\theta)$ is not accurate and the optimization method does not converge.
    \item Defining an augmented vector field can be practically difficult in the optical implementations we aim at. Indeed, defining \eqref{eqn:augmented_vf} would imply being able to modify the Hamiltonian of our system, thing which is often impossible to engineer for many choices of cost function $C$, e.g. for the cross-entropy loss we chose for training the CNN-like setup.
\end{itemize}

Notably, system \eqref{eqn:quadratures_Langevin_pt2} is an example of the well-studied Port-Hamiltonian systems \cite{van2014port}, a class of differential equations which models Hamiltonian systems in presence of dissipation an input-output relations. Thus, the training method we propose to address above points can be applied to more general dynamical systems even outside the optical domain (see Example \ref{eg:PortHS}).

In the following, we will briefly show how the training method we exposed in the main text, for systems with input-output relations $s_\mathrm{out}=s(t),$ can be viewed as a generalization of vector field EP to solve these two obstructions. In accordance with the previous section, we will keep the same notations for a general system of neurons $s$ following the vector field dynamics
\begin{equation}
\label{eqn:vector_field_dyn}
\frac{ds}{dt}=\mu_\theta(\mathrm{x},s),
\end{equation}
and reaching a steady state (fixed point) $s_\theta^0$
\begin{equation}
    \mu_\theta(\mathrm{x},s_\theta^0)=0.
\end{equation}
We will show that, if 
\begin{equation}
    \nabla \mu_\theta(\mathrm{x},s_\theta^0)^{-1}\approx P_1 \, (\nabla \mu_\theta(\mathrm{x},s_\theta^0)^{-1})^\mathsf{T} P_2,
\end{equation} for some constant matrices $P_1$ and $P_2$, in the sense that the angle (defined with respect to some scalar product) between them is small, then it is possible to approximate the gradient $\frac{\partial J}{\partial \theta }(\mathrm{x},\mathrm{y},\theta)$ similarly as we discussed above.

\begin{rem}
In the original EP for vector fields one would have $P_1=P_2=\mathbf{I},$ whereas in our optical example \eqref{eqn:quadratures_Langevin_pt2} one can choose e.g. $P_1=P_2=\sigma_x$ or $P_1=P_2=\sigma_z$ (see equations \eqref{eqn:qsx} and \eqref{eqn:qsz}).
\end{rem}

As always, the training scheme consists of two phases. In the first one, we let the system evolve towards a steady state $s_\theta^0$ that we measure and consider as the neuromorphic architecture's output, allowing us to compute  $C(\mathrm{y},s_\theta^0)$ and $\frac{\partial C}{\partial s}(\mathrm{y},s_\theta^0)$. Then, we slightly modify the system defining the \emph{augmented vector field}  
\begin{equation}
\label{eqn:augmented_vf_new}
    \mu^\beta_\theta(\mathrm{x},\mathrm{y},s) \coloneqq \mu_\theta(\mathrm{x},s) - \beta P^\mathsf{T}_1 \frac{\partial C}{\partial s}(\mathrm{y},s_\theta^0),
\end{equation}
which differs from \eqref{eqn:augmented_vf} for the presence of $P_1^\mathsf{T}$ and the fact that the derivative is evaluated at the first fixed point $s_\theta^0$. This is crucial as it allows us to interpret such term as a perturbation of the external probe field, i.e. $\delta a_\mathrm{in}$, as we did in the main text. In this way, we do not need to engineer a different Hamiltonian for this nudged system, but only to inject a small error signal on top of our original input.
In the second phase, we let evolve the nudged system 
\begin{equation}
    \frac{ds}{dt}=\mu^\beta_\theta(\mathrm{x},\mathrm{y},s)
\end{equation}
towards a new nudged equilibrium $s_\theta^\beta$ defined by
\begin{equation}
\mu^\beta_\theta(\mathrm{x},\mathrm{y},s_\theta^\beta) = 0.
\end{equation}
Finally, we can apply the update rule
\begin{equation}
    \Delta \theta \propto \nu(\mathrm{x},\mathrm{y},\theta),
\end{equation}
where now we define
\begin{equation}
\label{eqn:nu_new}
    \nu(\mathrm{x},\mathrm{y},\theta) \coloneqq \left( P_2 \, \frac{\partial \mu_\theta}{\partial \theta}(\mathrm{x},s_\theta^\mathrm{x}) \right)^\mathsf{T}  \, \frac{\partial s_\theta^\beta}{\partial \beta} \biggl\vert_{\beta = 0},
\end{equation}
which can be again be computed with the two fixed points measured in the two phases, approximating
\begin{equation}
    \frac{\partial s_\theta^\beta}{\partial \beta} \biggl\vert_{\beta = 0} = \frac{s_\theta^\beta - s_\theta^0}{\beta} + \mathcal{O}(\beta).
\end{equation}

As we show in the following result, it turns out that the angle between the gradient approximation $\left( \frac{\partial J}{\partial \theta } \right)_{\mathrm{approx}} \coloneqq -\nu(\mathrm{x},\mathrm{y},\theta)^\mathsf{T}$ and the true gradient $\frac{\partial J}{\partial \theta }$ depends on the angle between the inverse of Jacobian $\nabla \mu_\theta(\mathrm{x},s_\theta^0)^{-1}$ and the matrix $P_1 \, (\nabla \mu_\theta(\mathrm{x},s_\theta^0)^{-1})^\mathsf{T} P_2$.

\begin{theo}
\label{theo:gen_EP}
    The true gradient $\frac{\partial J}{\partial \theta}(\mathrm{x},\mathrm{y},\theta)$ and the approximation $\left( \frac{\partial J}{\partial \theta } \right)_{\mathrm{approx}} \coloneqq -\nu(\mathrm{x},\mathrm{y},\theta)^\mathsf{T}$ can be expressed as
    \begin{equation}
         \frac{\partial J}{\partial \theta}(\mathrm{x},\mathrm{y},\theta) = - \left( \frac{\partial C}{\partial s}(\mathrm{y},s_\theta^0) \right)^\mathsf{T} \biggl( \nabla \mu_\theta(\mathrm{x},s_\theta^0) \biggr)^{-1} \, \frac{\partial \mu_\theta}{\partial \theta}(\mathrm{x},s_\theta^0)
    \end{equation}
    \begin{equation}
        \left( \frac{\partial J}{\partial \theta } \right)_{\mathrm{approx}}(\mathrm{x},\mathrm{y},\theta)= - \left( \frac{\partial C}{\partial s}(\mathrm{y},s_\theta^0) \right)^\mathsf{T}  P_1\biggl( \nabla \mu_\theta(\mathrm{x},s_\theta^0)^{-1}\biggr)^{\mathsf{T} }P_2\frac{\partial \mu_\theta}{\partial \theta}(\mathrm{x},s_\theta^0)
    \end{equation}
\end{theo}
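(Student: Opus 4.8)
The plan is to prove both displayed identities by two separate applications of the implicit function theorem --- one to the free fixed-point equation differentiated in $\theta$, the other to the nudged fixed-point equation differentiated in $\beta$ --- and then to reconcile the two expressions by a transposition bookkeeping step.

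First I would establish the formula for the true gradient. Since $J(\mathrm{x},\mathrm{y},\theta)=C(\mathrm{y},s_\theta^{\mathrm{x}})$ and both $C$ and $\theta$ are real, the chain rule gives $\frac{\partial J}{\partial\theta}=(\frac{\partial C}{\partial s}(\mathrm{y},s_\theta^{0}))^{\mathsf{T}}\,\frac{\partial s_\theta^{\mathrm{x}}}{\partial\theta}$, where I write $s_\theta^{0}=s_\theta^{\mathrm{x}}$ for the free fixed point. Differentiating the defining relation $\mu_\theta(\mathrm{x},s_\theta^{\mathrm{x}})=0$ with respect to $\theta$ and invoking the implicit function theorem exactly as in Lemma~\ref{lem:ift} (which needs $\nabla\mu_\theta(\mathrm{x},s_\theta^{0})$ invertible, a consequence of the fixed point being stable and nondegenerate) yields $\frac{\partial s_\theta^{\mathrm{x}}}{\partial\theta}=-\nabla\mu_\theta(\mathrm{x},s_\theta^{0})^{-1}\,\frac{\partial\mu_\theta}{\partial\theta}(\mathrm{x},s_\theta^{0})$. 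Substituting this into the chain-rule expression immediately gives the first displayed identity.

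Next I would compute $\frac{\partial s_\theta^{\beta}}{\partial\beta}\big|_{\beta=0}$ from the nudged fixed-point equation $\mu_\theta(\mathrm{x},s_\theta^{\beta})-\beta P_1^{\mathsf{T}}\frac{\partial C}{\partial s}(\mathrm{y},s_\theta^{0})=0$. The decisive simplification is that $\frac{\partial C}{\partial s}(\mathrm{y},s_\theta^{0})$ is frozen at the free fixed point and therefore carries no $\beta$-dependence --- this is precisely where our construction departs from standard Equilibrium Propagation, in which $\partial_s C$ is evaluated at the $\beta$-dependent fixed point $s_\theta^{\beta}$ and an extra $\beta$-proportional term appears. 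Differentiating in $\beta$ and evaluating at $\beta=0$ (where $s_\theta^{\beta}=s_\theta^{0}$) gives $\frac{\partial s_\theta^{\beta}}{\partial\beta}\big|_{\beta=0}=\nabla\mu_\theta(\mathrm{x},s_\theta^{0})^{-1}P_1^{\mathsf{T}}\frac{\partial C}{\partial s}(\mathrm{y},s_\theta^{0})$. Inserting this into $\nu(\mathrm{x},\mathrm{y},\theta)=(P_2\frac{\partial\mu_\theta}{\partial\theta}(\mathrm{x},s_\theta^{0}))^{\mathsf{T}}\frac{\partial s_\theta^{\beta}}{\partial\beta}\big|_{\beta=0}$, using $(P_2A)^{\mathsf{T}}=A^{\mathsf{T}}P_2^{\mathsf{T}}$, and finally transposing the whole expression to obtain $(\frac{\partial J}{\partial\theta})_{\mathrm{approx}}=-\nu^{\mathsf{T}}$ --- which reverses the order of the matrix product and sends $P_1^{\mathsf{T}}\mapsto P_1$, $P_2^{\mathsf{T}}\mapsto P_2$, and $\nabla\mu_\theta^{-1}\mapsto(\nabla\mu_\theta^{-1})^{\mathsf{T}}$ --- produces the second displayed identity.

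I do not expect a genuine obstacle here: the argument is two rounds of the implicit function theorem plus linear-algebra bookkeeping. The only points needing care are (i) tracking at which fixed point each derivative of $C$ is evaluated, and (ii) the final transposition turning the measured ``learning signal'' $\nu$ into the gradient estimate. As a byproduct, comparing the two identities shows that the angle between $\frac{\partial J}{\partial\theta}$ and $(\frac{\partial J}{\partial\theta})_{\mathrm{approx}}$ is governed by the angle between $\nabla\mu_\theta(\mathrm{x},s_\theta^{0})^{-1}$ and $P_1(\nabla\mu_\theta(\mathrm{x},s_\theta^{0})^{-1})^{\mathsf{T}}P_2$, in direct analogy with Theorem~1 of \cite{scellier2018generalization}; and the subsequent replacement $\frac{\partial s_\theta^{\beta}}{\partial\beta}\big|_{\beta=0}\approx(s_\theta^{\beta}-s_\theta^{0})/\beta$ is what introduces the $\mathcal{O}(\beta)$ error of the practical update rule.
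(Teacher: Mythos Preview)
Your proposal is correct and follows essentially the same route as the paper's proof: two applications of the implicit function theorem (once in $\theta$ on the free fixed-point equation, once in $\beta$ on the nudged one), followed by substitution into the definition of $\nu$ and a final transposition. The paper differentiates the augmented equation $\mu_\theta^\beta=0$ in both variables rather than the free equation for the $\theta$-step, but since these coincide at $\beta=0$ this is only a cosmetic difference.
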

\begin{proof}
    The main idea of the proof is the same of Theorem 1 in \cite{scellier2018generalization}, i.e. to use the implicit function theorem and differentiate the fixed point equation
    \begin{equation}
\mu^\beta_\theta(\mathrm{x},\mathrm{y},s_\theta^\beta) = 0
\end{equation}
with respect to $\theta$ and $\beta$ to compute $\frac{\partial s_\theta^\beta}{\partial \theta}$ and $\frac{\partial s_\theta^\beta}{\partial \beta}$. We find respectively
\begin{equation}
\label{eqn:dsdtheta}
   \frac{\partial s_\theta^\beta}{\partial \theta} = - \biggl( \nabla \mu^\beta_\theta(\mathrm{x},s_\theta^\beta) \biggr)^{-1} \, \frac{\partial \mu^\beta_\theta}{\partial \theta}(\mathrm{x},s_\theta^\beta)
\end{equation}
\begin{equation}
\label{eqn:nu}
       \frac{\partial s_\theta^\beta}{\partial \beta} = - \biggl( \nabla \mu^\beta_\theta(\mathrm{x},s_\theta^\beta) \biggr)^{-1} \, \frac{\partial \mu^\beta_\theta}{\partial \beta}(\mathrm{x},s_\theta^\beta) = \biggl( \nabla \mu^\beta_\theta(\mathrm{x},s_\theta^\beta) \biggr)^{-1} \, P_1^\mathsf{T}\frac{\partial C}{\partial s}(\mathrm{y},s_\theta^0),
\end{equation}
where we used that $ \frac{\partial \mu^\beta_\theta}{\partial \beta} = -P_1^\mathsf{T} \frac{\partial C}{\partial s}(\mathrm{y},s_\theta^0)$ by definition.
The latter, substituted into \eqref{eqn:nu_new} gives
\begin{align}
        \nu(\mathrm{x},\mathrm{y},\theta) &\coloneqq \left( P_2 \, \frac{\partial \mu_\theta}{\partial \theta}(\mathrm{x},s_\theta^\mathrm{x}) \right)^\mathsf{T}  \, \frac{\partial s_\theta^\beta}{\partial \beta} \biggl\vert_{\beta = 0}\\
        &= \left( P_2 \, \frac{\partial \mu_\theta}{\partial \theta}(\mathrm{x},s_\theta^\mathrm{x}) \right)^\mathsf{T}  \, \biggl( \nabla \mu_\theta(\mathrm{x},s_\theta^0) \biggr)^{-1} \, P_1^\mathsf{T} \frac{\partial C}{\partial s}(\mathrm{y},s_\theta^0).
\end{align}
Finally, we conclude by inserting \eqref{eqn:dsdtheta} into
\begin{equation}
    \frac{\partial J}{\partial \theta}(\mathrm{x},\mathrm{y},\theta) = - \left( \frac{\partial C}{\partial s}(\mathrm{y},s_\theta^0) \right)^\mathsf{T}\frac{\partial s_\theta^0}{\partial \theta} =- \left( \frac{\partial C}{\partial s}(\mathrm{y},s_\theta^0) \right)^\mathsf{T} \frac{\partial s_\theta^\beta}{\partial \theta} \biggl\vert_{\beta = 0}.
\end{equation}
\end{proof}
This result can be seen as a generalization of Theorem 1 in \cite{scellier2018generalization} in the presence of a more general `quasi-symmetry' and with a constant nudge (error signal) $\epsilon \coloneqq - \beta P_1^\mathsf{T} \frac{\partial C}{\partial s}(\mathrm{y},s_\theta^0)$ in the augmented vector field \eqref{eqn:augmented_vf_new}. These changes allow us to overcome the two obstructions we presented at the beginning of the section and consist in the main differences between our scheme and EP for training general fixed-point vector fields dynamics.

\begin{eg}
\label{eg:PortHS}
Scattering Backpropagation, can be applied to train general fixed point-dynamical systems by experimentally extracting the exact gradient $\partial_\theta C$ (or an approximation, depending on the system's symmetries). For instance, a large subclass of Port-Hamiltonian systems can be described by
\begin{equation}
\begin{cases}
       \dot{s}=\mu(s)\coloneqq J \, \nabla H(s) - R(s,\nabla H(s))+Gu,\\
    y = G^\mathsf{T}\nabla H(s), 
\end{cases}
\end{equation}
in which $s(t)\in \R^{2N}$ is the internal state at time $t$, $u\in \R^{2N}$ is an external input/control, $H$ is a scalar field, the coupling term $J=\begin{pmatrix}
    0 & \tilde{J}\\
    - \tilde{J} & 0
\end{pmatrix}\in \R^{2N \times 2N}$ is a skew-symmetric matrix with $\tilde{J}^\mathsf{T}=\tilde{J}\in \R^{N \times N}$, and the dissipative term $R(s,\nabla H(s))\in \R^{2N}$ is such that $\nabla R = \begin{pmatrix}
   \nabla_1 R_1 & 0\\
   0 & \nabla_2 R_2
\end{pmatrix}$, with $\nabla_1 R_1$ and $\nabla_2 R_2$ symmetric (e.g. corresponding to the diagonal matrix $\kappa$ in the dynamical equations). For these systems, one can apply our training algorithm (Scattering Backpropagation) to compute the \emph{exact gradient} since
\begin{equation}
    \nabla \mu(s^0)^{-1}=P_1(\nabla \mu(s^0)^{-1})^\mathsf{T}P_2,
\end{equation}
with $P_1=P_2=\begin{pmatrix}
    \mathbf{I}_N & 0\\
    0 & -\mathbf{I}_N
\end{pmatrix}.$
\end{eg}

\section{Further numerical simulations}

\subsubsection{Self-Kerr vs Cross-Kerr on XOR}

We trained $N=10$ (linearly) fully connected networks on XOR. For comparison, we consider both the case of self-Kerr and cross-Kerr nonlinearity, in particular, for the latter, we consider a network nonlinearly coupled in a circle, namely 
\begin{equation}
    \varphi_j(a)=a_j(|a_{j-1}|^2+|a_{j+1}|^2)
\end{equation}for every $j$ (see the schematic representation in the inset of Fig.~\ref{fig:self_vs_cross_Kerr}~\textbf{b}).
Furthermore we consider $g = 0.3$ and take $\kappa_j = 1$, $\kappa'_j = 0$ for every $j$, and we choose Xavier initialization \cite{glorot2010understanding} for the trainable linear couplings $J_{j,\ell}$ and detunings $\Delta_j$.
We encode every input $\mathrm{x} \in \mathcal{D}_\mathrm{train}=\{(0,0),(0,1),(1,0),(1,1)\}$ by choosing $\mathfrak{Re}(a_{\mathrm{in},1}) \coloneqq \mathrm{x}_1$ and $\mathfrak{Re}(a_{\mathrm{in},2}) \coloneqq \mathrm{x}_2$, and setting the other quadratures and input fields to zero. Recall that, in this Supplementary Material, we rescale the dynamical equations \eqref{eqn:system1} and work with dimension-less quantities \eqref{eq:dimension-less}.  Furthermore, we define the output of the network and the loss to be respectively $\mathrm{y} \coloneqq 10 \cdot \mathfrak{Re}(a_{\mathrm{out}5})$ and the Mean-Squared Error (MSE)
\begin{equation}
C(\mathbf{y},\mathbf{y}_\mathrm{target})=\frac{1}{4}\sum_{\mathbf{x}\in\mathcal{D}_\mathrm{train}}(\mathbf{y}(\mathbf{x})- \mathbf{y}_\mathrm{target})^2
\end{equation}
For each of $N_\mathrm{epochs}=1000$ epochs we use a RK4 method we solve the dynamical equations up to $t_\mathrm{max} = 60$ using stepsize $dt=0.1$, $\beta = 0.01$ for approximating the gradient $\partial_\theta C$ with Eqs.~\eqref{eq:dc/dl_sigma_x},~\eqref{eq:dc/dJ_sigma_x} obtained using the quasi-symmetry with $U=\sigma_x$ in Theorem \ref{theo:gradient}, and learning rate $\eta = 10^{-3}.$ As always in our numerical simulations, for every input $\mathrm{x}$ we sample a random initial condition $a(t=0)$ to solve the differential equations in the inference phase until computing $a(t_\mathrm{max})$. 
Instead, in the feedback phase, since we expect the new equilibrium to be close to the old one, we choose $a(t_\mathrm{max})$ as initial condition.
In Fig.~\ref{fig:self_vs_cross_Kerr}~\textbf{a} we plot, for ten different random initialization of the tunable parameters $\Delta_j$ and $J_{j,\ell},$ the MSE during the training of the self-Kerr network. In Fig.~\ref{fig:self_vs_cross_Kerr}~\textbf{b} we show the same data but for the network with cross-Kerr nonlinearity.

\begin{figure}[h!]
    \centering
\includegraphics[width=0.9\linewidth]{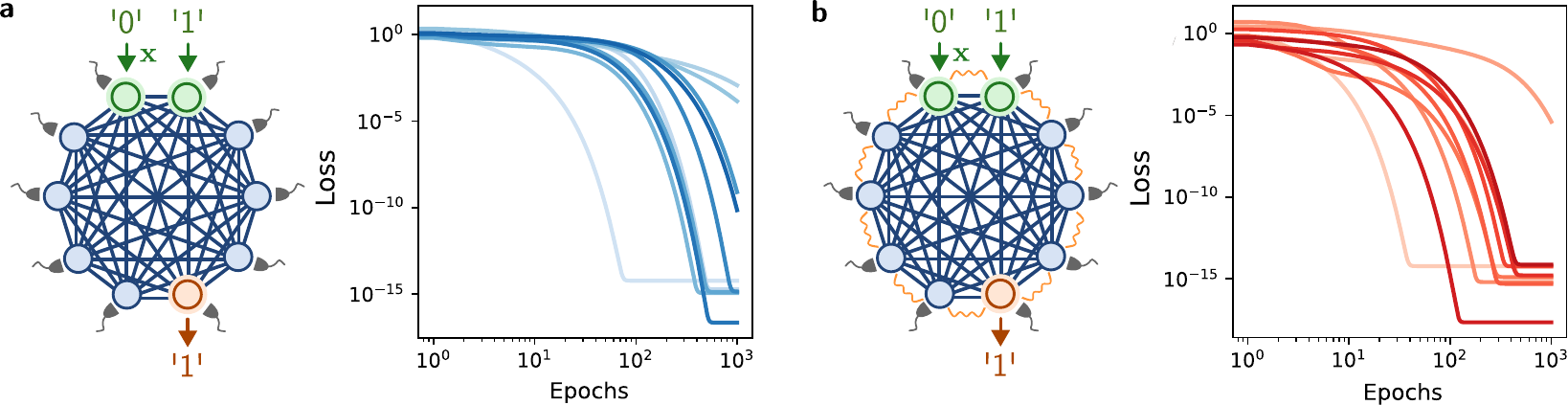}
    \caption{\textbf{a)} Training XOR with a (linearly) fully connected network of $N=10$ nodes with self-Kerr nonlinearity with strength $g=0.3$ (as sketched in the inset). Every input is encoded in the real parts of $a_{\mathrm{in},1}$ and $a_{\mathrm{in},2},$ while the output $\mathrm{y} \coloneqq 10 \cdot \mathfrak{Re}(a_{\mathrm{out},5})$. The MSE is plotted for ten different random initialization of the tunable parameters $\Delta_j$ and $J_{j,\ell}.$
    \textbf{b)} Same plot but for a network of $N=10$ nodes with cross-Kerr nonlinear coupling in a circle (see inset).
    }
    \label{fig:self_vs_cross_Kerr}
\end{figure}

In smaller size models like these we empirically observe a higher sensitivity on the random initial configuration of parameters $J,$ that we choose using Xavier initialization, meaning that for some `unlucky initial configurations' training (with respect to the same hyper-parameters) requires more epochs (as in Fig.~\ref{fig:self_vs_cross_Kerr}) or even, as it is sometimes the case for $N=3$ nodes networks, convergence is not obtained in useful times. Nevertheless, such behavior is less evident for larger networks, and not even noticeable while training MNIST on a $N=963$ node network.

\subsubsection{Robustness wrt initial conditions}
\begin{figure}[h!]
    \centering
\includegraphics[width=0.8\linewidth]{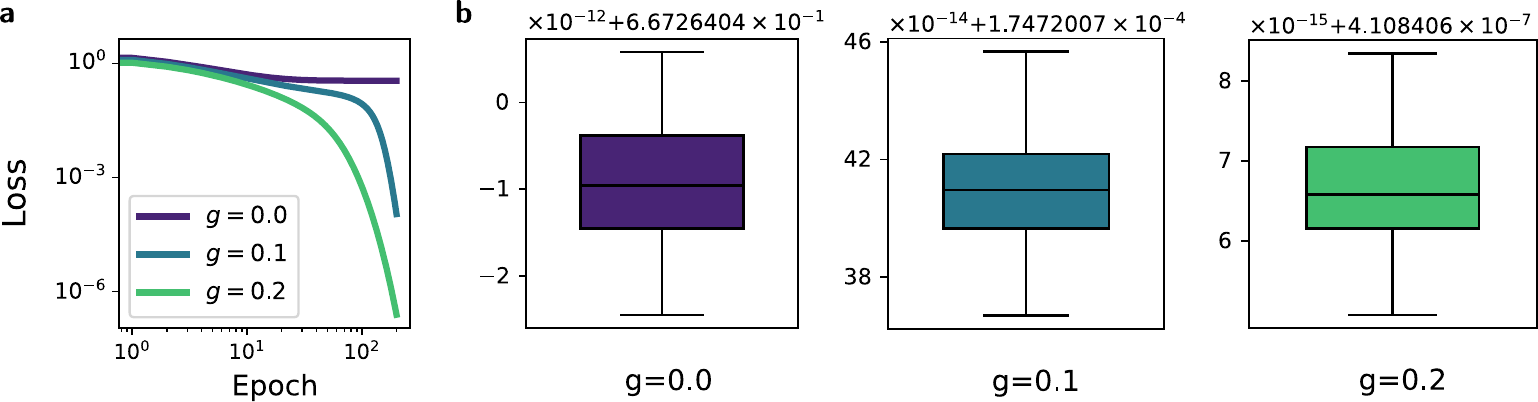}
    \caption{\textbf{a)} Mean squared error during training XOR in a $N=3$ fully connected network with self-Kerr nonlinearities of strength $g$. Models with larger values of $g$ improve faster during the $N_\mathrm{epochs}=200$ epochs, while the ablation case $(g=0.0)$ does not learn the regression task. \textbf{b)} After training, we simulate the dynamics (inference phase) starting from different $N_\mathrm{samples}=100$ random initial conditions $a(t=0).$ We show the different errors obtained in each case, highlighting almost no variability on the performance (probably due to absence of multi-stability).
    }
    \label{fig:robustness}
\end{figure}
    As discussed in the main text, we trained on XOR fully connected models of $N=3$ nodes with self-Kerr nonlinearities. In Fig.~\ref{fig:robustness}~\textbf{a} we plot the MSE loss during the $N_\mathrm{epochs}=200$ training epochs for different models, having $\kappa_j=1$ and $\kappa_j'=0$ for each $j$, but different nonlinearity strength $g$. 
    In particular, we solved the equations with RK4 for $t_\mathrm{max}=60,$ using stepsize $dt=0.01$, $\beta = \eta = 10^{-3}$ for approximating the gradient $\partial_\theta C$ with Eqs.~\eqref{eq:gradientFreuquencies},~\eqref{eq:gradientCouplings}. Input and output are encoded as above in the real parts of respectively modes $a_1$, $a_2$ and $a_3.$
    
    Furthermore, in Fig.~\ref{fig:robustness}~\textbf{b}, we show the losses obtained, for a trained model, when running the inference phase starting by a different initial configuration $a(t=0).$ Specifically, for $N_\mathrm{samples}=100$ times, we solve the dynamical equations starting from a different random initial condition: choosing $    \mathfrak{Re}(a_j(t=0)), \,\mathfrak{Im}(a_j(t=0)) \overset{\text{i.i.d.}}{\sim} \mathcal{N}(0,1)$ for each $j$.
For a fair comparison (see discussion above), we considered the same initial weights in Fig.~\ref{fig:weights}, for the training of each model.
Note that models with larger values of $g$ (up to a certain threshold when the model becomes unstable) train faster. Moreover, the fact that the loss does not vary with respect to different initial conditions $a(t=0)$ suggests that the system is not multi-stable in this parameter regime.

\begin{figure}[h!]
    \centering
\includegraphics[width=0.9\linewidth]{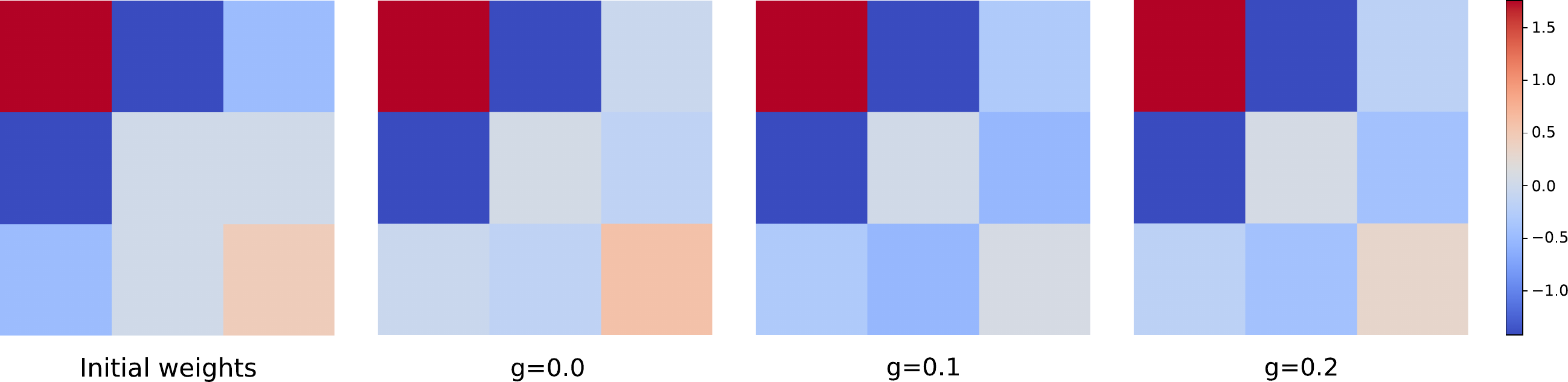}
    \caption{
    On the left, the symmetric random initial weight matrix $J$ and the correspondent trained parameters for every model. The diagonal terms represent the detuning $\Delta_j$ and the off-diagonal the couplings $J_{j,\ell}$ for $1\leq j,\ell \leq 3.$ The other plots on its right, represent the parameters after training a model (initialized with those initial weights) with self-Kerr nonlinearity with different strengths $g$.
    }
    \label{fig:weights}
\end{figure}

From Fig.~\ref{fig:weights} we can also observe how usually the learned weights are usually of the same order of their initialization, which is convenient for the physical applications. Furthermore, in this case, the learned weights in the various systems having different values of $g$ are relatively similar to each other.

\section{Unidirectional optical system trained with Scattering Backpropagation}
\begin{figure}[h!]
    \centering
\includegraphics[width=0.6\linewidth]{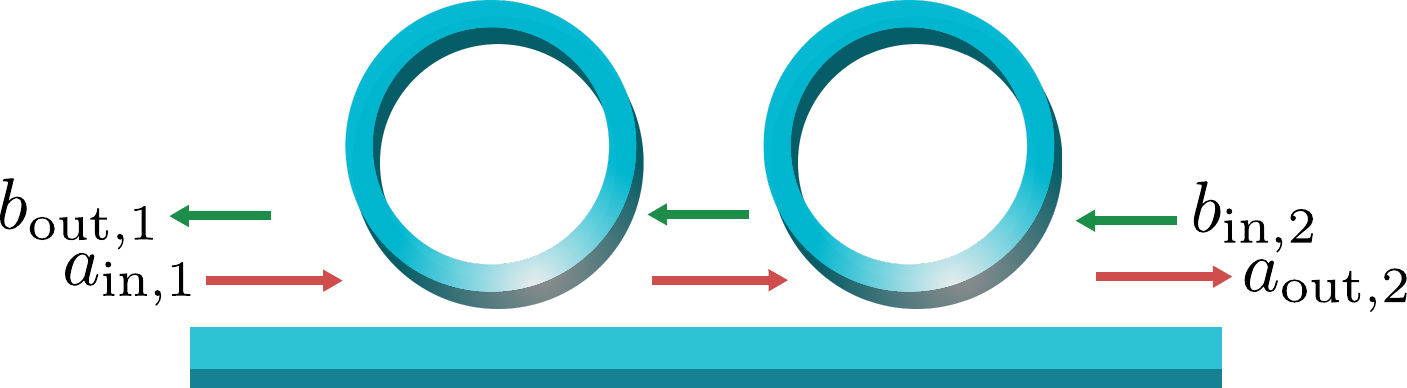}
    \caption{Unidirectional transmission in a system with two resonators coupled sequentially to a waveguide. The dynamics of the right and left-movers (respectively $a$ and $b$) can be described by the differential equation \eqref{eq:two_ring_1} and with input-output relations \eqref{eq:two_ring_inout_a_1},
    \eqref{eq:two_ring_inout_b_1}.
    Note that in the linear regime ($g=0$) the evolution of right and left-movers is decoupled, while if $g \not = 0$ this is not the case anymore. This is crucial for applying Scattering Backpropagation in this setting, encoding the input $\mathrm{x}$ into $a_{\mathrm{in},1}$ and reading the output from $a_{\mathrm{out},2},$ as it is possible to leverage the (quasi-)symmetry in the system and inject the error signal in $b_{\mathrm{in},2}.$ Remarkably, this follows directly from the general \emph{formulae} in Theorem \ref{theo:gradient} without any additional modification of the training method.
    } 
    \label{fig:two-rings}
\end{figure}
In this section, we apply our training method to a reciprocal optical scenario which however display unidirectional transport and different input-output relations. In particular, as displayed in Fig.~\ref{fig:two-rings}, we consider waveguide coupled sequentially to two ring resonators in a configuration that preserves the directionality of wave propagation. Each ring supports right-moving and left-moving modes (respectively $a_j$ and $b_j$), with dynamics influenced by Kerr nonlinearity, as described by the interaction energy
\begin{equation}
    E = g(|a|^4+|b|^4+4|a|^2|b|^2).
\end{equation}
Assuming equal coupling to the waveguide for each resonator (even though the same analysis can be done if $\kappa_1 \not=\kappa_2$), namely the dynamical equations describing the systems are
\begin{equation}
\label{eq:two_ring_1}
\begin{cases}
        \dot{a}_1=-\frac{\kappa + \kappa'_1}{2}a_1-i\Omega_1 a_1-2ig(|a_1|^2+2|b_1|^2)a_1 -\sqrt{\kappa}a_{\mathrm{in},1}\\
        \dot{a}_2=-\frac{\kappa + \kappa'_2}{2}a_2-i\Omega_2 a_2-2ig(|a_2|^2+2|b_2|^2)a_2 -\sqrt{\kappa}a_{\mathrm{in},2}\\
        \dot{b}_1=-\frac{\kappa + \kappa'_1}{2}b_1-i\Omega_1 b_1-2ig(|b_1|^2+2|a_1|^2)b_1 -\sqrt{\kappa}b_{\mathrm{in},1}\\
        \dot{b}_2=-\frac{\kappa+ \kappa'_2}{2}b_2-i\Omega_2 b_2-2ig(|b_2|^2+2|a_2|^2)b_2 -\sqrt{\kappa}b_{\mathrm{in},2}
\end{cases}
\end{equation}
together with input-output relations
\begin{equation}
\label{eq:two_ring_inout_a_1}
a_{\mathrm{out},1}=a_{\mathrm{in},1}+\sqrt{\kappa}a_1, \quad a_{\mathrm{in},2}=a_{\mathrm{out},1}, \quad
a_{\mathrm{out},2}=a_{\mathrm{in},2}+\sqrt{\kappa}a_2
\end{equation}
and
\begin{equation}
\label{eq:two_ring_inout_b_1}
b_{\mathrm{out},1}=b_{\mathrm{in},1}+\sqrt{\kappa}b_1, \quad b_{\mathrm{in},1}=b_{\mathrm{out},2}, \quad
b_{\mathrm{out},2}=b_{\mathrm{in},2}+\sqrt{\kappa}b_2.
\end{equation}
Substituting the latter into the dynamical equations \eqref{eq:two_ring_1} gives
\begin{equation}
\begin{cases}
\label{eq:two_ring_2}
        \dot{a}_1=-\frac{\kappa + \kappa'_1}{2}a_1-i\Omega_1 a_1-2ig(|a_1|^2+2|b_1|^2)a_1 -\sqrt{\kappa}a_{\mathrm{in},1}\\
        \dot{a}_2=-\kappa a_1-\frac{\kappa + \kappa'_2}{2}a_2-i\Omega_2 a_2-2ig(|a_2|^2+2|b_2|^2)a_2 -\sqrt{\kappa}a_{\mathrm{in},1}\\
        \dot{b}_1=-\kappa b_2-\frac{\kappa + \kappa'_1}{2}b_1-i\Omega_1 b_1-2ig(|b_1|^2+2|a_1|^2)b_1 -\sqrt{\kappa}b_{\mathrm{in},2}\\
        \dot{b}_2=-\frac{\kappa + \kappa'_2}{2}b_2-i\Omega_2 b_2-2ig(|b_2|^2+2|a_2|^2)b_2 -\sqrt{\kappa}b_{\mathrm{in},2}
\end{cases}
\end{equation}
Notice that the input-output relations can be simplified, in order to obtain invertible matrices $\Pi$ and $\Sigma$, by introducing new input vectors $\tilde{a}_\mathrm{in}$ and $\tilde{b}_\mathrm{in}$ by letting
\begin{equation}
    \begin{pmatrix}
        a_{\mathrm{out},1}\\
        a_{\mathrm{out},2}
    \end{pmatrix}=\begin{pmatrix}
        1 & 0 \\
        1 & 0
    \end{pmatrix}\begin{pmatrix}
        a_{\mathrm{in},1}\\
        a_{\mathrm{in},2}
    \end{pmatrix}+\begin{pmatrix}
        \sqrt{\kappa} & 0 \\
        \sqrt{\kappa} & \sqrt{\kappa}
    \end{pmatrix}\begin{pmatrix}
        a_1\\
        a_2
    \end{pmatrix}=:\begin{pmatrix}
        \tilde{a}_{\mathrm{in},1}\\
        \tilde{a}_{\mathrm{in},2}
    \end{pmatrix}+\begin{pmatrix}
        \sqrt{\kappa} & 0 \\
        \sqrt{\kappa} & \sqrt{\kappa}
    \end{pmatrix}\begin{pmatrix}
        a_1\\
        a_2
    \end{pmatrix}
\end{equation}
and 
\begin{equation}
    \begin{pmatrix}
        b_{\mathrm{out},1}\\
        b_{\mathrm{out},2}
    \end{pmatrix}=\begin{pmatrix}
        0 & 1 \\
        0 & 1
    \end{pmatrix}\begin{pmatrix}
        b_{\mathrm{in},1}\\
        b_{\mathrm{in},2}
    \end{pmatrix}+\begin{pmatrix}
        \sqrt{\kappa} & \sqrt{\kappa} \\
        0 & \sqrt{\kappa}
    \end{pmatrix}\begin{pmatrix}
        b_1\\
        b_2
    \end{pmatrix}=:\begin{pmatrix}
        \tilde{b}_{\mathrm{in},1}\\
        \tilde{b}_{\mathrm{in},2}
    \end{pmatrix}+\begin{pmatrix}
        \sqrt{\kappa} & \sqrt{\kappa} \\
         0 & \sqrt{\kappa}
    \end{pmatrix}\begin{pmatrix}
        b_1\\
        b_2
    \end{pmatrix}.
\end{equation}
After this change, the dynamical equations and input-output relations for $\xi := (a_1, a_2, b_1, b_2,a_1^*, a_2^*, b_1^*, b_2^*)^\mathsf{T}$ are respectively
\begin{equation}
    \dot{\xi} = F_\theta(\xi)-\Pi \tilde{\xi}_\mathrm{in}
\end{equation}
where
\begin{equation}
    \label{eq:two-modes-eq-xi}
\begin{cases}
        \dot{a}_1=-\frac{\kappa + \kappa'_1}{2}a_1-i\Omega_1 a_1-2ig(|a_1|^2+2|b_1|^2)a_1 -\sqrt{\kappa}\tilde{a}_{\mathrm{in},1}\\
        \dot{a}_2=-\kappa a_1-\frac{\kappa + \kappa'_2}{2}a_2-i\Omega_2 a_2-2ig(|a_2|^2+2|b_2|^2)a_2 -\sqrt{\kappa}\tilde{a}_{\mathrm{in},2}\\
        \dot{b}_1=-\kappa b_2-\frac{\kappa + \kappa'_1}{2}b_1-i\Omega_1 b_1-2ig(|b_1|^2+2|a_1|^2)b_1 -\sqrt{\kappa}\tilde{b}_{\mathrm{in},1}\\
        \dot{b}_2=-\frac{\kappa + \kappa'_2}{2}b_2-i\Omega_2 b_2-2ig(|b_2|^2+2|a_2|^2)b_2 -\sqrt{\kappa}\tilde{b}_{\mathrm{in},2}\\
        \dot{a}_1^*=-\frac{\kappa + \kappa'_1}{2}a_1^*+i\Omega_1 a_1^*+2ig(|a_1|^2+2|b_1|^2)a_1^* -\sqrt{\kappa}\tilde{a}^*_{\mathrm{in},1}\\
        \dot{a}^*_2=-\kappa a_1^*-\frac{\kappa + \kappa'_2}{2}a_2^*+i\Omega_2 a_2^*+2ig(|a_2|^2+2|b_2|^2)a_2^* -\sqrt{\kappa}\tilde{a}^*_{\mathrm{in},2}\\
        \dot{b}_1^*=-\kappa b_2^*-\frac{\kappa + \kappa'_1}{2}b_1^*+i\Omega_1 b_1^*+2ig(|b_1|^2+2|a_1|^2)b_1^* -\sqrt{\kappa}\tilde{b}^*_{\mathrm{in},1}\\
        \dot{b}^*_2=-\frac{\kappa + \kappa'_2}{2}b_2^*+i\Omega_2 b_2^*+2ig(|b_2|^2+2|a_2|^2)b_2^* -\sqrt{\kappa}\tilde{b}^*_{\mathrm{in},2}
\end{cases}
\end{equation}
and
\begin{equation}
\label{eq:in-out-new}
    \xi_\mathrm{out} = \Gamma \tilde{\xi}_\mathrm{in} + \Sigma \xi,
\end{equation}
in which $\Gamma \coloneqq \mathbf{I}_8,$ $\Pi \coloneqq \sqrt{\kappa} \,\mathbf{I}_8$, and
\begin{equation}
    \Sigma \coloneqq \begin{pmatrix}
       \sqrt{\kappa} & 0 & 0 & 0 & 0 & 0 & 0 & 0 \\
       \sqrt{\kappa} & \sqrt{\kappa} & 0 & 0 & 0 & 0 & 0 & 0\\
       0 & 0 & \sqrt{\kappa} & \sqrt{\kappa} & 0 & 0 & 0 & 0\\
       0 & 0 & 0 & \sqrt{\kappa} & 0 & 0 & 0 & 0 \\
       0 & 0 & 0 & 0 & \sqrt{\kappa} & 0 & 0 & 0\\
       0 & 0 & 0 & 0 & \sqrt{\kappa} & \sqrt{\kappa} & 0 & 0\\
       0 & 0 & 0 & 0 & 0 & 0 & \sqrt{\kappa} & \sqrt{\kappa} \\
       0 & 0 & 0 & 0 & 0 & 0 & 0 & \sqrt{\kappa}
    \end{pmatrix}.
\end{equation}
Note that the Jacobian with respect to the Wirtinger derivatives of \eqref{eq:two-modes-eq-xi} is
\begin{equation}
    DF(\bar{\xi}) = M - ig \sigma_z \frac{\partial \Phi}{\partial \xi}(\bar{\xi})
\end{equation}
where 
{\small
\begin{equation}
    M=\begin{pmatrix}
       -\frac{\kappa+\kappa'_1}{2}-i\Omega_1 & 0 & 0 & 0 & 0 & 0 & 0 & 0 \\
       -\kappa & -\frac{\kappa+\kappa'_2}{2}-i\Omega_2 & 0 & 0 & 0 & 0 & 0 & 0\\
       0 & 0 & -\frac{\kappa+\kappa'_1}{2}-i\Omega_1 & -\kappa & 0 & 0 & 0 & 0\\
       0 & 0 & 0 & -\frac{\kappa+\kappa'_2}{2}-i\Omega_2 & 0 & 0 & 0 & 0\\
       0 & 0 & 0 & 0 & -\frac{\kappa+\kappa'_1}{2}+i\Omega_1 & 0 & 0 & 0\\
       0 & 0 & 0 & 0 & -\kappa & -\frac{\kappa+\kappa'_2}{2}+i\Omega_2 & 0 & 0\\
       0 & 0 & 0 & 0 & 0 & 0 & -\frac{\kappa+\kappa'_1}{2}+i\Omega_1 & -\kappa\\
       0 & 0 & 0 & 0 &0 & 0 & 0 & -\frac{\kappa+\kappa'_2}{2}+i\Omega_2
    \end{pmatrix},
\end{equation}
}
\begin{equation}
    \sigma_z = \begin{pmatrix}
        \mathbf{I}_4 & 0\\
        0 & - \mathbf{I}_4
    \end{pmatrix},
\end{equation}
and $\frac{\partial \Phi}{\partial \xi}(\bar{\xi})$ is
{\small
\begin{equation}
     \begin{pmatrix}
       4(|\bar{a}_1|^2+|\bar{b}_1|^2) & 0 & 4\bar{a}_1\bar{b}_1^* & 0 & 2\bar{a}_1^2 & 0 & 4\bar{a}_1\bar{b}_1 & 0 \\
       0 & 4 (|\bar{a}_2|^2+|\bar{b}_2|^2)& 0 &  4\bar{a}_2\bar{b}_2^* & 0 & 2\bar{a}_2^2 & 0 & 4\bar{a}_2\bar{b}_2\\
       4\bar{a}_1^*\bar{b}_1 & 0 & 4(|\bar{a}_1|^2+|\bar{b}_1|^2) & 0 & 4\bar{a}_1\bar{b}_1 & 0 & 2 \bar{b}_1^2 & 0\\
       0 & 4\bar{a}_2^*\bar{b}_2 & 0 & 4 (|\bar{a}_2|^2+|\bar{b}_2|^2) & 0 & 4\bar{a}_2\bar{b}_2 & 0 & 2\bar{b}_2^2\\
       2(\bar{a}_1^*)^2 & 0 & 4\bar{a}^*_1\bar{b}^*_1 & 0 & 4(|\bar{a}_1|^2+|\bar{b}_1|^2) & 0 & 4\bar{a}^*_1\bar{b}_1& 0\\
       0 & 2(\bar{a}_2^*)^2 & 0 & 4\bar{a}^*_2\bar{b}^*_2 & 0 & 4(|\bar{a}_1|^2+|\bar{b}_1|^2) & 0 & 4\bar{a}^*_2\bar{b}_2\\
       4\bar{a}^*_1\bar{b}^*_1 & 0 & 2 (\bar{b}_1^*)^2 & 0 & 4\bar{a}_1\bar{b}_1^* & 0 & 4(|\bar{a}_1|^2+|\bar{b}_1|^2) & 0\\
       0 & 4\bar{a}^*_2\bar{b}^*_2 & 0 & 2(\bar{b}_2^*)^2 &0 & 4\bar{a}_2\bar{b}_2^* & 0 & 4(|\bar{a}_1|^2+|\bar{b}_1|^2)
    \end{pmatrix}
\end{equation}    
}
Notice that (see Appendix \ref{app:quasi-reciprocity})
\begin{equation}
    \frac{\partial \Phi}{\partial \xi}(\bar{\xi})=
\begin{pmatrix}
        \dfrac{\partial \varphi}{\partial (a,b)} & \dfrac{\partial \varphi}{\partial (a^*,b^*)}\\[8pt]
        \dfrac{\partial \varphi^*}{\partial (a,b)} & \dfrac{\partial \varphi^*}{\partial (a^*,b^*)}
    \end{pmatrix}
\end{equation}
where $\dfrac{\partial \varphi}{\partial (a,b)}$ and $\dfrac{\partial \varphi}{\partial (a^*,b^*)}$ are respectively an Hermitian and symmetric $4 \times 4$ matrix such that
\begin{equation}
    \dfrac{\partial \varphi}{\partial (a,b)}=\left(\dfrac{\partial \varphi^*}{\partial (a^*,b^*)}\right)^*, \quad \dfrac{\partial \varphi}{\partial (a^*,b^*)}= \left( \dfrac{\partial \varphi^*}{\partial (a,b)} \right)^*.
\end{equation}
In the linear case ($g=0$), the Jacobian it reduces to $M$ which is not symmetric. Nevertheless, by e.g. defining the matrix
\begin{equation}
    U := \begin{pmatrix}
       0 & 0 & 0 & 0 & 0 & 0 & 1 & 0 \\
       0 & 0 & 0 & 0 & 0 & 0 & 0 & 1\\
       0 & 0 & 0 & 0 & 1 & 0 & 0 & 0\\
       0 & 0 & 0 & 0 & 0 & 1 & 0 & 0 \\
       0 & 0 & 1 & 0 & 0 & 0 & 0 & 0\\
       0 & 0 & 0 & 1 & 0 & 0 & 0 & 0\\
       1 & 0 & 0 & 0 & 0 & 0 & 0 & 0 \\
       0 & 1 & 0 & 0 & 0 & 0 & 0 & 0
    \end{pmatrix}
\end{equation}
in the linear case one recovers the symmetry 
\begin{equation}
\label{eq:symm1}
    M^\dagger = U M U^{-1}.
\end{equation}
Notice that $U$ is an involutory ($U=U^{-1}$) and local transformation as it maps
\begin{equation}
    a_1 \mapsto b_1^*, \quad a_2 \mapsto b_2^*, \quad b_1 \mapsto a_1^*, \quad b_2 \mapsto a_2^*.
\end{equation}
In the nonlinear case, when $g \not = 0,$ modes $a$ and $b$ become coupled in the dynamical equations \eqref{eq:two-modes-eq-xi}; furthermore, the Jacobian's symmetry above is broken by the presence of the nonlinearity. So, informally, this model describes a ``system reaching a steady-state" which also exhibits a ``quasi-symmetry of the Green's function (or of the linearized scattering matrix)" and can be trained with Scattering Backpropagation.

In a supervised learning setting in which an input $\mathrm{x}$ is encoded into $a_{\mathrm{in},1}$ and the output $\mathrm{y}$ is decoded from $a_{\mathrm{out},2}$, the gradient approximation given by Theorem \ref{theo:gradient}, with the above $U=U_1=U_2^{-1}$, is
\begin{align}
\label{eq:grad_two_ring}
    \frac{\partial C}{\partial \Omega_1} \approx -\frac{2}{\sqrt{\kappa} \beta}\mathfrak{Im}\Bigl[ \bar{b}_1 (\delta a_{\mathrm{out},1} - \delta a_{\mathrm{in},1})+\bar{a}_1 (\delta b_{\mathrm{out},1} - \delta b_{\mathrm{in},1})-\bar{a}_1 (\delta b_{\mathrm{out},2} - \delta b_{\mathrm{in},2}) \Bigr]\\
        \frac{\partial C}{\partial \Omega_2} \approx -\frac{2}{\sqrt{\kappa} \beta}\mathfrak{Im}\Bigl[ \bar{b}_2 (\delta a_{\mathrm{out},2} - \delta a_{\mathrm{in},2})+\bar{a}_2 (\delta b_{\mathrm{out},2} - \delta b_{\mathrm{in},2}) - \bar{a}_2 (\delta b_{\mathrm{out},1} - \delta b_{\mathrm{in},1})\Bigr],
\end{align}
in which
\begin{equation}
\label{eq:err_signal_two_ring}
    \delta b_{\mathrm{in},2} := \beta \frac{\partial C}{\partial a_{\mathrm{out},2}},
\end{equation}
and the steady state components are obtained using the injected and measured fields via the input-output relations \eqref{eq:in-out-new}.

Note that, as one would physically expect, the error signal is injected at the output resonator in the left mover mode $b$. Remarkably, this follows directly from the general formula \eqref{eq:general_error_signal} with the appropriate $U,$ and no additional knowledge of the system or modification of the training method is needed for its application. 


To numerically test the \emph{formulae} above we address a simple regression tasks, namely tuning the two frequencies $\Omega_1$ and $\Omega_2$ of the proposed example to reproduce the function $f(\pm 1)=\pm 1/10.$ In particular, we encode the input network $\mathrm{x}\in \{-1,1\}$ by setting $\mathfrak{Re}(a_{\mathrm{in},1})\coloneqq x$ (the other probe quadratures are zero), and define the network output as $\mathrm{y}\coloneqq \mathfrak{Re}(a_{\mathrm{out},2})$. Recalling we are working with unit-less equations, for the simulation we choose $\kappa = \kappa_1 = \kappa_2 = 1,$ $\kappa'_1=\kappa'_2 = 0$, $g=0.1$, and $\beta = \eta = 10^{-2}$. We solve the dynamical equations \eqref{eq:two_ring_2} with a Runghe-Kutta-4 method up to $t_\mathrm{max}=100$ and use \eqref{eq:grad_two_ring} for computing the approximate gradient of the mean-squared error function $C$ ---after having solved the perturbed dynamics injecting the error signal \eqref{eq:err_signal_two_ring}. As shown in Fig.~\ref{fig:two-modes-training-stats}, gradient descent with the approximation given by Scattering Propagation converges to a minimum after $100$ training epochs.

\begin{figure}
    \centering
    \includegraphics[width=0.98\linewidth]{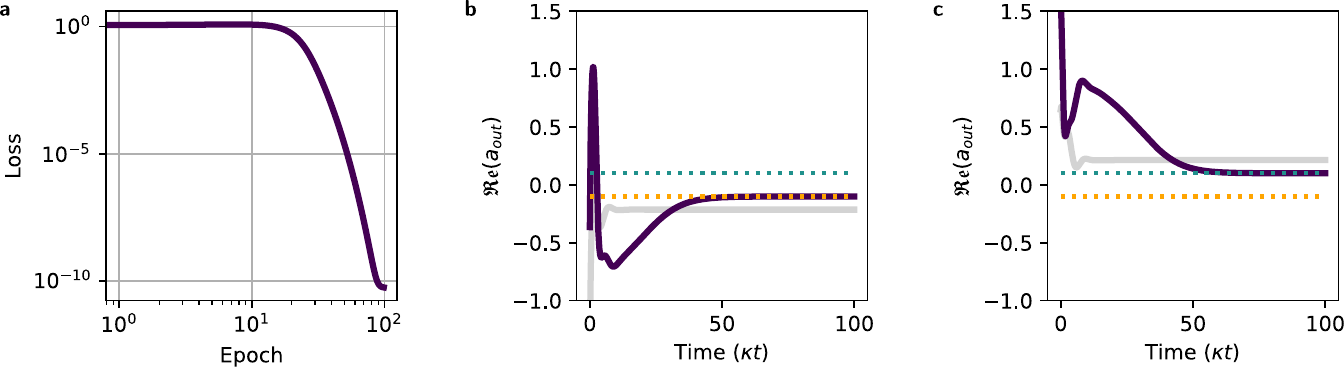}
    \caption{\textbf{a)} Mean squared error function during the $N_\mathrm{train} = 100$ training epochs. \textbf{b)} Time evolution of the trained model with $\mathfrak{Re}(a_{\mathrm{in},1})=-1$. At the steady state, the output $\mathrm{y}=\mathfrak{Re}(a_{\mathrm{out},2})$, correspondent to the blue trajectory, approaches the target $\mathrm{y_{target}}=-0.1$ (orange dotted line). \textbf{c)} Time evolution of the trained model with $\mathfrak{Re}(a_{\mathrm{in},1})=1$. In the steady state regime, $\mathfrak{Re}(a_{\mathrm{out},2})$ approaches the target $\mathrm{y_{target}}=0.1$ (green dotted line).
    }
    \label{fig:two-modes-training-stats}
\end{figure}


\end{document}